\documentclass[letterpaper, 10 pt, conference]{ieeeconf}%

\IEEEoverridecommandlockouts
\usepackage{cite}
\usepackage{amsmath,amssymb,amsfonts}
\usepackage{algorithmic}
\usepackage{graphicx}
\usepackage{textcomp}
\usepackage{xcolor}
\usepackage{cuted}
\usepackage{lipsum}
\usepackage{mathtools}
\usepackage{stfloats}
\usepackage{bbm}
\usepackage{multirow}

\newcommand{\R}{\mathbb{R}}
\newcommand{\N}{\mathbb{N}}
\newcommand{\T}{\top}
\newcommand{\I}{\mathbf{I}}
\newcommand{\0}{\mathbf{0}}
\newcommand{\E}{\mathcal{E}}

\newcommand{\diag}{\text{diag}}

\newcommand{\bm}[1]{\begin{bmatrix}#1\end{bmatrix}}
\def\BibTeX{{\rm B\kern-.05em{\sc i\kern-.025em b}\kern-.08em
    T\kern-.1667em\lower.7ex\hbox{E}\kern-.125emX}}

% Spacing surrounding an equation - text
\setlength{\belowdisplayskip}{3pt} 
\setlength{\abovedisplayskip}{3pt} 
% Spacing surrounding an equation - equation
\setlength{\belowdisplayshortskip}{3pt}
\setlength{\abovedisplayshortskip}{3pt}
% Spacing before subsection and section
% \renewcommand{\subsection}[1]{\vspace{-5mm}\\\subsection{#1}}
% \renewcommand{\subsubsection}[1]{\vspace{-5mm}\subsubsection{#1}}
%Space after a figure
\setlength{\textfloatsep}{3pt}

\newtheorem{assumption}{\textbf{Assumption}}
\newtheorem{theorem}{\textbf{Theorem}}

\newtheorem{lemma}{\textbf{Lemma}}
\newtheorem{proposition}{\textbf{Proposition}}
\newtheorem{remark}{\textbf{Remark}}
\newtheorem{definition}{\textbf{Definition}}

\usepackage{hyperref}

\begin{document}

\title{
{\LARGE \textbf{
Dissipativity-Based Synthesis of Distributed Control and Communication Topology Co-Design for AC Microgrids
}}}

% \title{{\LARGE \textbf{Distributed Non-convex Optimization of Multi-agent Systems Using Boosting Functions to Escape Local Optima}}

% \author{\IEEEauthorblockN{Mohammad Javad Najafirad}
% \IEEEauthorblockA{\textit{Dept. of Electrical Engineering} \\
% \textit{Stevens Institute of Technology}\\
% New Jersey, USA \\
% mnajafir@stevens.edu}
% \and
% \IEEEauthorblockN{Shirantha Welikala}
% \IEEEauthorblockA{\textit{Dept. of Electrical Engineering} \\
% \textit{Stevens Institute of Technology}\\
% New Jersey, USA \\
% swelikal@stevens.edu}
% }

\author{Mohammad Javad Najafirad, Shirantha Welikala, Lei Wu, and Panos J. Antsaklis%
\thanks{Mohammad Javad Najafirad, Shirantha Welikala, and Lei Wu are with the Department of Electrical and Computer Engineering, School of Engineering and Science, Stevens Institute of Technology, Hoboken, NJ 07030, \texttt{\small\{mnajafir, swelikal, lei.wu\}@stevens.edu}. 
Panos J. Antsaklis is with the Department of Electrical Engineering, University of Notre Dame, Notre Dame, IN 46556, \texttt{\small pantsakl@nd.edu}.}}

\maketitle

%\pagenumbering{arabic}
%\thispagestyle{plain}
%\pagestyle{plain}

\begin{abstract}
This paper introduces a dissipativity-based framework 
for the joint design of distributed controllers and communication topologies in AC microgrids (MGs), providing robust performance guarantees for voltage regulation, frequency synchronization, and proportional power sharing across distributed generators (DGs). The closed-loop AC MG is represented as a networked system in which DGs, distribution lines, and loads function as interconnected subsystems linked through cyber-physical networks. Each DG 
utilizes a three-layer hierarchical control structure: a steady-state controller for operating point configuration, a local feedback controller for voltage tracking, and a distributed droop-free controller implementing normalized power consensus for frequency coordination and proportional power distribution. The operating point design is formulated as an optimization problem. Leveraging dissipativity theory, we derive necessary and sufficient subsystem dissipativity conditions. The global co-design is then cast as a convex linear matrix inequality (LMI) optimization that jointly determines distributed controller parameters and sparse communication 
architecture while managing the highly nonlinear, coupled dq-frame dynamics characteristic of AC systems. Simulation results from an islanded AC MG in a MATLAB/Simulink environment verify that the proposed framework achieves robust voltage regulation, frequency synchronization, and proportional power sharing through the optimized communication topology.
\end{abstract}

\noindent 
\textbf{Index Terms}—\textbf{AC Microgrid, Voltage Regulation, Distributed Control, Networked Systems, Dissipativity-Based Control, Topology Design, Non-linear Control.}

\section{Introduction} 

% This paragraph talks about the definition of MG and why DC MG is our title
The integration of renewable energy sources into power systems has accelerated the development of microgrids (MGs) as a reliable solution for electric energy sustainability. AC MGs, in particular, have emerged as critical components of modern power infrastructure and provide enhanced flexibility to manage distributed generators (DGs), energy storage systems, and various load types within localized networks \cite{han2016review}. Nevertheless, the control architecture of AC MGs faces significant challenges to maintain voltage stability, frequency synchronization, and proportional power sharing among DGs, particularly in islanded operation where no utility grid support is available \cite{olivares2014trends}. These challenges become more complex due to the inherent nature of AC systems, which must manage both active and reactive power flows while multiple DGs operate in parallel synchronization \cite{rocabert2012control}. Recent advances in networked control theory and communication technologies have opened new avenues to develop sophisticated control strategies that address these multifaceted challenges through coordinated operation rather than purely local control actions \cite{sahoo2019cyber}.
% Compared to AC MGs, DC MGs are gaining popularity due to the increasing demand for DC loads and the elimination of AC/DC converters.

Traditional control strategies for AC MGs have evolved from centralized to decentralized and distributed approaches to address scalability and reliability concerns \cite{bidram2012hierarchical}. Centralized control offers precise coordination but suffers from single points of failure and communication bottlenecks, while fully decentralized methods sacrifice optimal performance for local autonomy \cite{shafiee2013distributed}. Distributed control emerges as a balanced solution that enables DGs to coordinate through sparse communication networks while each unit maintains local control authority \cite{guo2014distributed}. The challenge lies in the design of both the control algorithms and the communication topology to achieve desired performance objectives. Most existing distributed control methods assume predetermined communication structures that mirror the physical electrical connections, which may not offer optimal information exchange patterns \cite{zuo2016distributed}. Furthermore, the complex nature of AC systems requires careful consideration of both voltage magnitude and phase angle dynamics, as well as the coupled relationship between active and reactive power flows \cite{simpson2016voltage}.

Dissipativity theory provides a powerful framework to analyze and design control systems for interconnected networks by characterizing subsystems through their energy exchange properties \cite{arcak2016networks}. Unlike traditional Lyapunov methods that require detailed system models, dissipativity-based approaches need only input-output energy relationships, greatly simplifying the analysis of complex networked systems \cite{de2017bregman}. Recent applications to AC MGs have shown promise in addressing voltage stabilization and power quality issues through passivity-based methods, though most approaches still rely on predetermined control structures without considering communication optimization \cite{kwasinski2007stabilization}. The compositional nature of dissipativity theory allows modular design where individual DG controllers can be synthesized independently while guaranteeing overall system stability, a critical feature for scalable AC MG architectures \cite{simpson2016voltage}. However, the extension to comprehensive AC system control presents unique challenges due to the coupling between voltage components in the $dq$ reference frame, the inherent complexity of three-phase power systems, and the need to manage both active and reactive power flows \cite{schiffer2016survey}. The ability to guarantee stability through compositional certification makes dissipativity particularly attractive for modular AC MG architectures.

Despite these theoretical advances, most existing distributed control methods for AC MGs treat controller design and communication topology as separate problems, potentially missing optimal solutions that arise from their joint consideration \cite{trip2017distributed}. Current approaches typically assume fixed communication structures or require all-to-all information exchange, which becomes impractical as MG size increases \cite{meng2017review}. Moreover, simultaneously achieving voltage regulation and proportional power sharing in AC systems requires careful coordination of both active and reactive power controllers, a challenge not fully addressed by existing methods \cite{khayat2019secondary}. The need for a systematic co-design framework that jointly optimizes control parameters and communication topology while guaranteeing stability through rigorous theoretical foundations motivates this work.

This paper presents a novel dissipativity-based control and communication topology co-design framework for AC MGs with the following contributions:

\begin{enumerate}
    \item We formulate the AC MG as a networked system, where DGs, lines, and loads are modeled as dynamic subsystems, while cyber and physical links are modeled using static interconnection matrices.
    \item We propose an efficient LMI-based framework for co-designing distributed controllers and communication topologies, enabling the trade-off between control performance and communication costs. 
    \item We derive the AC MG operating (equilibrium) point design problem, as well as the subsystem-level dissipativity analysis and design problems, as convex LMI problems to support the global co-design framework. 
    \item The proposed comprehensive AC MG control framework enables the enforcement of robustness guarantees for AC MG voltage regulation, frequency synchronization, and proportional power sharing performance.
\end{enumerate}

This work builds upon the dissipativity-based framework established in \cite{najafirad2025dissipativity} for DC MGs and the systematic co-design methodology developed in \cite{welikala2025decentralized} for vehicular platoons, extending these concepts to address the unique challenges of AC MG control with coupled $dq$ dynamics and dual power flow objectives.

% This paper is structured as follows. Section \ref{Preliminaries} covers necessary preliminary concepts on dissipativity and networked systems. Section \ref{problemformulation} presents the DC MG modeling and co-design problem. The proposed co-design technique is detailed in Section \ref{Passivity-based Control}. Simulation results are presented in Section \ref{Simulation}, followed by conclusions in Section \ref{Conclusion}. All proofs are omitted due to space constraints but can be found in \cite{Najafirad2024Ax1}. 

\section{Preliminaries}\label{Preliminaries}

\subsubsection*{\textbf{Notations}}
We use $\mathbb{R}$ and $\mathbb{N}$ to denote the sets of real and natural numbers, respectively. 
For any $N\in\mathbb{N}$, we define $\mathbb{N}_N\triangleq\{1,2,..,N\}$.
A block matrix $A$ of dimension $n \times m$ is expressed as $A = [A_{ij}]_{i \in \mathbb{N}_n, j \in \mathbb{N}_m}$. $\0$ and $\I$ are the zero and identity matrices, respectively. For a symmetric matrix $A\in\mathbb{R}^{n\times n}$ we write $A>0\ (A\geq0)$ to indicate positive definiteness (semi-definiteness). $\star$ represents conjugate blocks in symmetric matrices. We use $\mathcal{H}(A)\triangleq A + A^\T$ to represent the symmetric part of a matrix.

\subsection{Dissipativity}
We consider a nonlinear dynamic system:
\begin{equation}\label{dynamic}
\begin{aligned}
    \dot{x}(t)=f(x(t),u(t)),\quad
    y(t)=h(x(t),u(t)),
    \end{aligned}
\end{equation}
where $x(t)\in\mathbb{R}^n$ denotes the state, $u(t)\in\mathbb{R}^q$ the control input, $y(t)\in\mathbb{R}^m$ the output, and $f:\mathbb{R}^n\times\mathbb{R}^q\rightarrow\mathbb{R}^n$ and $h:\mathbb{R}^n\times\mathbb{R}^q\rightarrow\mathbb{R}^m$ are continuously differentiable functions, satisfying $f(\0,\0)=\0$ and $h(\0,\0)=\0$.

\begin{definition}\cite{arcak2022}
The system \eqref{dynamic} is said to be dissipative with respect to the supply rate $s:\mathbb{R}^q\times\mathbb{R}^m\rightarrow\mathbb{R}$ if there exists a continuously differentiable storage function $V:\mathbb{R}^n\times\mathcal{X}\rightarrow\mathbb{R}$ satisfying  $V(x)>0, \forall  x \neq \0$, $V(\0)=\0$, and $\dot{V}(x)=\nabla_x V(x)f(x,u) \leq s(u,y)$, for all $(x,u)\in\mathbb{R}^n\times\mathbb{R}^q$.
\end{definition}

% This EID property can be specialized based on the used supply rate $s(.,.)$.
% The $X$-EID property, defined in the sequel, uses a quadratic supply rate determined by a coefficient matrix $X=X^\top\triangleq[X^{kl}]_{k,l\in\mathbb{N}_2}\in\mathbb{R}^{q+m}$ \cite{welikala2023non}.

\begin{definition} \cite{welikala2023non}
The system (\ref{dynamic}) is $X$-dissipative (where $(X\triangleq [X^{kl}]_{k,l\in\N_2} = X^\T)$) when it is dissipative under the supply rate:
\begin{center}
$
s(u,y)\triangleq
\begin{bmatrix}
    u \\ y
\end{bmatrix}^\top
\begin{bmatrix}
    X^{11} & X^{12}\\ X^{21} & X^{22}
\end{bmatrix}
\begin{bmatrix}
    u \\ y
\end{bmatrix}.
$
\normalsize
\end{center}
\end{definition}

\begin{remark}\label{Rm:X-DissipativityVersions}\cite{welikala2022line}
The system (\ref{dynamic}) with $X$-dissipativity has the following properties depending on the structure of $X$:\\
1)\ When $X = \scriptsize\begin{bmatrix}
    \0 & \frac{1}{2}\I \\ \frac{1}{2}\I & \0
\end{bmatrix}\normalsize$, the system is passive;\\
2)\ When $X = \scriptsize\begin{bmatrix}
    -\nu\I & \frac{1}{2}\I \\ \frac{1}{2}\I & -\rho\I
\end{bmatrix}\normalsize$, the system is strictly passive with input/output passivity indices $\nu$ and $\rho$, denoted as IF-OFP($\nu,\rho$));\\
3)\ When $X = \scriptsize\begin{bmatrix}
    \gamma^2\I & \0 \\ \0 & -\I
\end{bmatrix}\normalsize$, the system is $L_2$-stable with gain $\gamma$, denoted as $L2G(\gamma)$). 
\end{remark}

For a linear time-invariant (LTI) system \eqref{dynamic}, $X$-dissipativity can be verified through the following LMI condition.

\begin{proposition}\label{Prop:linear_X-EID} \cite{welikala2025decentralized}
An LTI system described by
\begin{equation*}\label{Eq:Prop:linear_X-EID_1}
\begin{aligned}
    \dot{x}(t)=Ax(t)+Bu(t),\quad 
    y(t)=Cx(t)+Du(t),
\end{aligned}
\end{equation*}
is $X$-dissipative if and only if there exists $P>0$ such that
\begin{equation*}\label{Eq:Prop:linear_X-EID_2}
\scriptsize
\begin{bmatrix}
-\mathcal{H}(PA)+C^\top X^{22}C & -PB+C^\top X^{21}+C^\top X^{22}D\\
\star & X^{11}+\mathcal{H}(X^{12}D)+D^\top X^{22}D
\end{bmatrix}
\normalsize
\geq0.
\end{equation*}
\end{proposition}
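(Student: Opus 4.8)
The plan is to certify $X$-dissipativity with the quadratic storage function $V(x)=x^\top P x$, $P>0$, and to collapse the dissipation inequality into a single algebraic matrix inequality in the stacked vector $[x^\top\ u^\top]^\top$. Since $V$ is then $C^1$ with $V(x)>0$ for $x\neq\0$ and $V(\0)=\0$, it is an admissible candidate in the sense of Definition~1, and the whole problem reduces to an application of the principle ``a quadratic form is nonnegative for all arguments iff its Gram matrix is positive semidefinite.''

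\textbf{Sufficiency.} Assume $P>0$ solves the displayed LMI; denote that matrix by $M$. Differentiating $V$ along the LTI dynamics gives $\dot V(x)=x^\top\mathcal{H}(PA)x+2x^\top PBu$. Substituting $y=Cx+Du$ into the supply rate $s(u,y)$ of Definition~2 and expanding, using $X=X^\top$ (hence $X^{21}=(X^{12})^\top$) together with the identity $\mathcal{H}(X^{12}D)=X^{12}D+D^\top X^{21}$ to group the $x$-quadratic, $u$-quadratic, and cross terms, one obtains the exact identity $s(u,y)-\dot V(x)=[x^\top\ u^\top]\,M\,[x^\top\ u^\top]^\top$. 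Since $M\geq 0$ by hypothesis, $\dot V(x)\leq s(u,y)$ for every $(x,u)$, i.e., the system is $X$-dissipative.

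\textbf{Necessity.} Conversely, suppose the system is $X$-dissipative, so some $C^1$ storage function exists. The one substantive point is to upgrade it to a \emph{quadratic} one: this is the classical fact (Willems / KYP lemma) that, for an LTI plant with a quadratic supply rate, the available storage $V_a(x)=\sup\{-\int_0^T s(u,y)\,dt:\ T\geq 0,\ x(0)=x\}$ is itself a valid storage function and equals a quadratic form $V_a(x)=x^\top P x$ with $P\geq 0$; the strict $P>0$ required by Definition~1 then follows by the usual normalization/perturbation argument. With a quadratic storage function in hand, running the algebra of the sufficiency step in reverse shows that $\dot V_a(x)\leq s(u,y)$ for all $(x,u)$ is equivalent to $[x^\top\ u^\top]\,M\,[x^\top\ u^\top]^\top\geq 0$ for all $(x,u)$, i.e., to $M\geq 0$, which is the LMI.

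\textbf{Main obstacle.} Most of the effort is routine bookkeeping in the sufficiency expansion — tracking transposes through $X^{21}=(X^{12})^\top$ and through the symmetric-part operator $\mathcal{H}(\cdot)$ so that the three blocks of $M$ come out exactly as stated; no S-procedure slack enters because the inequality must hold for all $x$ and all $u$ simultaneously. The only genuine difficulty is the necessity direction, namely guaranteeing the existence of a quadratic storage function when merely a $C^1$ one is assumed; I would either cite this directly (consistent with the attribution to \cite{welikala2025decentralized}) or supply the short available-storage construction sketched above.
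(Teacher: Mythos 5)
The paper does not actually prove Proposition~1: it is imported verbatim from \cite{welikala2025decentralized} and used as a black box, so there is no in-paper argument to compare against. Judged on its own terms, your proposal is the standard and correct route. The sufficiency direction is complete: with $V(x)=x^\top Px$ one indeed gets $\dot V=x^\top\mathcal{H}(PA)x+2x^\top PBu$, and substituting $y=Cx+Du$ into the supply rate and using $X^{21}=(X^{12})^\top$ and $\mathcal{H}(X^{12}D)=X^{12}D+D^\top X^{21}$ reproduces exactly the three blocks of the displayed matrix, so $s(u,y)-\dot V(x)$ is the quadratic form in $[x^\top\ u^\top]^\top$ with that Gram matrix; no S-procedure is needed since $(x,u)$ range freely. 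You also correctly isolate where the real content lies: necessity requires knowing that an LTI system dissipative w.r.t.\ a quadratic supply rate admits a \emph{quadratic} storage function, which is Willems' available-storage argument, and citing it is consistent with how the paper itself treats the result.

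The one soft spot is your closing claim that the strict requirement $P>0$ of Definition~1 ``follows by the usual normalization/perturbation argument.'' The available storage only gives $P\geq 0$, and perturbing a $P\geq 0$ that satisfies a \emph{non-strict} LMI generally breaks the LMI, so there is no free upgrade: strict positive definiteness of the storage function genuinely requires an additional observability/detectability-type hypothesis (or one must weaken the statement to $P\geq 0$, as many KYP-type formulations do). This is a known imprecision in how such propositions are commonly stated, and it is inherited from the cited source rather than introduced by you, but if you intend to supply a self-contained proof of the ``only if'' direction you should either add the needed minimality assumption or state explicitly which version of the claim you are proving.
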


% The following corollary considers a specific LTI system with a local controller (a setup useful later) and formulates an LMI problem for $X$-EID enforcing local controller synthesis.
% \begin{corollary}\label{Col.LTI_LocalController_XEID}\cite{welikala2025decentralized}
% The LTI system 
% \begin{equation*}
%     \dot{x}(t)=(A+BL)x(t)+\eta(t), \quad y(t) = x(t),
% \end{equation*}
% is $X$-EID with $X^{22}<0$ if and only if there exists $P>0$ and $K$ such that
% \begin{equation*}
% \scriptsize
%     \begin{bmatrix}
%         -(X^{22})^{-1} & P & 0\\
%         \star & -\mathcal{H}(AP+BK) & -\I+PX^{21}\\
%         \star & \star & X^{11}
%     \end{bmatrix}\normalsize\geq0,
% \end{equation*}
% and $L=KP^{-1}$.
% \end{corollary} 

\subsection{Networked Systems} \label{SubSec:NetworkedSystemsPreliminaries}
We examine a networked system $\Sigma$ composed of dynamic subsystems $\Sigma_i,i\in\mathbb{N}_N$,  $\Bar{\Sigma}_i,i\in\mathbb{N}_{\Bar{N}}$, and $\check{\Sigma}_m,m\in\mathbb{N}_M$, which are coupled via a static interconnection matrix $M$, with the exogenous inputs $w(t)\in\mathbb{R}^r$ (e.g., disturbances) and interested outputs $z(t)\in\mathbb{R}^l$ (e.g., performance). 

% \begin{figure}
%     \centering
%     \includegraphics[width=0.6\columnwidth]{}
%     \caption{A generic networked system $\Sigma$.}
%     \label{Networked}
% \end{figure}

Each subsystem $\Sigma_i,i\in\mathbb{N}_N$ has dynamics
\begin{equation*}
    \begin{aligned}
        \dot{x}_i(t)=f_i(x_i(t),u_i(t)),\quad
        y_i(t)=h_i(x_i(t),u_i(t)),
    \end{aligned}
\end{equation*}
where $x_i(t)\in\mathbb{R}^{n_i}$ represents the state, $u_i(t)\in\mathbb{R}^{q_i}$ the input, and $y_i(t)\in\mathbb{R}^{m_i}$ output. We assume each $\Sigma_i$ is $X_i$-dissipative with $X_i \triangleq [X_i^{kl}]_{k,l\in\N_2}$. For subsystems $\bar{\Sigma}_l, l\in\N_{\bar{N}}$ and $\check{\Sigma}_m, m\in\N_{\check{N}}$, we employ similar notation with bar and check symbols to indicate $\bar{X}_l$-dissipative and $\check{X}_m$-dissipative where $\bar{X}_l \triangleq [\bar{X}_l^{kl}]_{k,l\in\N_2}$ and $\check{X}_m \triangleq [\check{X}_m^{kl}]_{k,l\in\N_2}$.

By defining $u\triangleq[u_i^\top]^\top_{i\in\mathbb{N}_N}$, $y\triangleq[y_i^\top]^\top_{i\in\mathbb{N}_N}$, $\Bar{u}\triangleq[\Bar{u}_i^\top]^\top_{i\in\mathbb{N}_{\Bar{N}}}$, $\Bar{y}\triangleq[y_i^\top]^\top_{i\in\mathbb{N}_{\Bar{N}}}$, $\check{u}\triangleq[\check{u}_m^\top]^\top_{m\in\mathbb{N}_{\check{N}}}$ and $\check{y}\triangleq[\check{y}_m^\top]^\top_{m\in\mathbb{N}_{\check{N}}}$, the interconnection relationships are 
expressed through matrix $M$ as:
\begin{equation*}\label{interconnectionMatrix}
\scriptsize
\begin{bmatrix}
    u \\ \bar{u} \\ \check{u} \\ z
\end{bmatrix}=M
\normalsize
\scriptsize
\begin{bmatrix}
    y \\ \bar{y} \\ \check{y} \\ w
\end{bmatrix}
\normalsize
\equiv
\scriptsize
\begin{bmatrix}
    M_{uy} & M_{u\bar{y}} & M_{u\check{y}} & M_{uw} \\
    M_{\bar{u}y} & M_{\bar{u}\bar{y}} & M_{\bar{u}\check{y}} & M_{\bar{u}w} \\
    M_{\check{u}y} & M_{\check{u}\bar{y}} & M_{\check{u}\check{y}} & M_{\check{u}w} \\
    M_{zy} & M_{z\bar{y}} & M_{z\check{y}} & M_{zw}
\end{bmatrix} \begin{bmatrix} y \\ \bar{y} \\ \check{y} \\ w
\end{bmatrix}.
\normalsize
\end{equation*}

The following result leverages the dissipativity of all three subsystem types to establish an LMI-based problem for synthesizing the interconnection matrix $M$, which guarantees $\textbf{Y}$-dissipativity of the overall networked system $\Sigma$ under two mild conditions \cite{welikala2023non}.

\begin{assumption}\label{As:NegativeDissipativity}
    For the networked system $\Sigma$, the provided \textbf{Y}-dissipative specification is such that $\textbf{Y}^{22}<0$.
\end{assumption}

\begin{remark}
According to Remark \ref{Rm:X-DissipativityVersions}, Assumption \ref{As:NegativeDissipativity} is satisfied when the networked system $\Sigma$ is required to be either: (i) L2G($\gamma$) or (ii) IF-OFP($\nu,\rho$) with some $\rho>0$, i.e., $L_2$-stable or passive, respectively. This Assumption \ref{As:NegativeDissipativity} is mild since ensuring either $L_2$-stability or passivity is typically desired for networked systems.
\end{remark}

\begin{assumption}\label{As:PositiveDissipativity}
    Within the networked system $\Sigma$, each subsystem $\Sigma_i$ is $X_i$-dissipative with $X_i^{11}>0, \forall i\in\mathbb{N}_N$. Similarly, each subsystem $\bar{\Sigma}_i$ is $\bar{X}_i$-dissipative with $\bar{X}_i^{11}>0, \forall i\in\N_{\bar{N}}$, and each subsystem $\check{\Sigma}_m$ is $\check{X}_m$-dissipative with $\check{X}_m^{11}>0, \forall m\in\N_{\check{N}}$.
\end{assumption}

\begin{remark}
    Based on Remark \ref{Rm:X-DissipativityVersions}, Assumption \ref{As:PositiveDissipativity} holds when a subsystem $\Sigma_i,i\in\N_N$ is either: (i) L2G($\gamma_i$) or (ii) IF-OFP($\nu_i,\rho_i$) with $\nu_i<0$. Since passivity-based control frequently involves non-passive or strictly input passive subsystems, this assumption \ref{As:PositiveDissipativity} is also mild in practical scenarios.
\end{remark}

\begin{proposition}\label{synthesizeM}\cite{welikala2023non}
    Under Assumptions \ref{As:NegativeDissipativity} and \ref{As:PositiveDissipativity}, the network system $\Sigma$ can be made \textbf{Y}-dissipative (from $w(t)$ to $z(t)$) by synthesizing the interconnection matrix $M$ through solving the LMI problem:
\begin{equation*}
\begin{aligned}
	&\mbox{Find: } 
	L_{uy}, L_{u\bar{y}}, L_{u\check{y}}, L_{uw}, L_{\bar{u}y}, L_{\bar{u}\bar{y}}, L_{\bar{u}\check{y}}, L_{\bar{u}w}, L_{\check{u}y}, L_{\check{u}\bar{y}},\\ &L_{\check{u}\check{y}}, L_{\check{u} w}, M_{zy}, M_{z\bar{y}}, M_{z\check{y}}, \text{and} \ M_{zw}, \\
	&\mbox{s.t.: } \hspace{2mm} p_i \geq 0, \forall i\in\N_N, 
	\bar{p}_l \geq 0, \forall l\in\N_{\bar{N}},\check{p}_m \geq 0, \forall m\in\N_{\check{N}},\ \\ &\text{and} \ \eqref{NSC4YEID}, \text{with}\\
&\left[\begin{smallmatrix}
M_{uy} & M_{u\bar{y}} & M_{u\check{y}}& M_{uw} \\ M_{\bar{u}y} & M_{\bar{u}\bar{y}} &  M_{\bar{u}\check{y}} & M_{\bar{u}w} \\ M_{\check{u}y} & M_{\check{u}\bar{y}} & M_{\check{u}\check{y}} & M_{\check{u}w}
\end{smallmatrix}\right]\negthickspace=\negthickspace
\left[\begin{smallmatrix}
\textbf{X}_p^{11} & \0 & \0 \\ \0 & \bar{\textbf{X}}_{\bar{p}}^{11} & \0 \\ \0 & \0 & \check{\textbf{X}}_{\check{p}}^{11}
\end{smallmatrix}\right]^{-1} 
\left[\begin{smallmatrix}
L_{uy} & L_{u\bar{y}} & L_{u\check{y}} & L_{uw} \\ L_{\bar{u}y} & L_{\bar{u}\bar{y}} & L_{\bar{u}\check{y}} & L_{\bar{u}w} \\  L_{\check{u}y} & L_{\check{u}\bar{y}} & L_{\check{u}\check{y}} & L_{\check{u}w}
\end{smallmatrix}\right],
\end{aligned}
\end{equation*}
where $\textbf{X}_p^{11} = \diag(\{p_iX_i^{11}:i\in\N_N\})$, $\bar{\textbf{X}}_{\bar{p}}^{11} = \diag(\{\bar{p}_i\bar{X}_i^{11}:i\in\N_{\bar{N}}\})$, $\check{\textbf{X}}_{\check{p}}^{11} = \diag(\{\check{p}_m\check{X}_m^{11}:m\in\N_{\check{N}}\})$, $\textbf{X}^{12} \triangleq \diag((X_i^{11})^{-1}X_i^{12}:i\in\N_N)$, 
$\textbf{X}^{21} \triangleq (\textbf{X}^{12})^\T$, and similarly for $\bar{\mathbf{X}}^{12}$, $\bar{\mathbf{X}}^{21}$, $\check{\mathbf{X}}^{12}$, $\check{\mathbf{X}}^{21}$.
\end{proposition}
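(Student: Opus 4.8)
The plan is to reduce the networked‐system dissipativity requirement to a single LMI in the interconnection block entries, and then use a change of variables to make it linear. First I would write down the dissipativity certificate for each subsystem: since $\Sigma_i$ is $X_i$-dissipative, there is a storage function $V_i$ with $\dot V_i \le [u_i^\top \ y_i^\top]\,X_i\,[u_i^\top \ y_i^\top]^\top$, and similarly for $\bar\Sigma_l$, $\check\Sigma_m$. Taking $V \triangleq \sum_i p_i V_i + \sum_l \bar p_l \bar V_l + \sum_m \check p_m \check V_m$ with scalars $p_i,\bar p_l,\check p_m \ge 0$ as a candidate storage function for $\Sigma$, the derivative $\dot V$ is bounded by a quadratic form in the stacked signals $(y,\bar y,\check y,w)$ once the interconnection identities $u = M_{uy}y + \cdots$, etc., are substituted. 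Requiring $\dot V \le [w^\top \ z^\top]\,\mathbf{Y}\,[w^\top \ z^\top]^\top$ with $z = M_{zy}y + M_{z\bar y}\bar y + M_{z\check y}\check y + M_{zw}w$ then becomes a negativity condition on a symmetric matrix built from the $X_i$, $\mathbf{Y}$, and the blocks of $M$ — this is presumably the inequality labeled \eqref{NSC4YEID} in the paper.

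The key technical step is that this matrix inequality is bilinear: products like $M_{uy}^\top X_i^{22} M_{uy}$ and $p_i X_i^{11} M_{uy}$ appear. Here I would exploit Assumption \ref{As:PositiveDissipativity} ($X_i^{11}>0$, etc.) together with Assumption \ref{As:NegativeDissipativity} ($\mathbf{Y}^{22}<0$). The standard trick (following \cite{welikala2023non}) is to perform a congruence transformation and a Schur complement so that the scaled $\mathbf{X}_p^{11}$ block — which is positive definite under Assumption \ref{As:PositiveDissipativity} — can be factored out; the off-diagonal coupling terms of the form $\mathbf{X}_p^{11}M_{u\cdot}$ are then renamed as new free variables $L_{u\cdot} \triangleq \mathbf{X}_p^{11}M_{u\cdot}$ (and analogously $L_{\bar u\cdot}$, $L_{\check u\cdot}$). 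This linearizes the constraint in $(L_{uy},\dots,L_{\check u w}, M_{zy}, M_{z\bar y}, M_{z\check y}, M_{zw}, p_i, \bar p_l, \check p_m)$, giving the LMI \eqref{NSC4YEID}; the remaining quadratic terms $M_{u\cdot}^\top X_i^{22} M_{u\cdot}$ are absorbed via the Schur complement using $X_i^{22}$'s sign, and Assumption \ref{As:NegativeDissipativity} guarantees the block associated with $z$ is sign-definite so that its Schur complement is valid. The $\mathbf{X}^{12}$, $\mathbf{X}^{21}$ notation enters precisely because the cross terms $X_i^{12}$ in each supply rate get premultiplied by $(X_i^{11})^{-1}$ after the scaling/congruence.

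Finally, once feasibility of the LMI is established, $M$ is recovered by inverting the relation: $\begin{bmatrix} M_{uy} & \cdots \\ \vdots & \end{bmatrix} = \mathrm{diag}(\mathbf{X}_p^{11},\bar{\mathbf{X}}_{\bar p}^{11},\check{\mathbf{X}}_{\check p}^{11})^{-1}\begin{bmatrix} L_{uy} & \cdots \\ \vdots & \end{bmatrix}$, which is well-defined because each $p_i>0$ (enforced strictly at a solution, or handled by noting $p_i=0$ forces the corresponding $L$-row to vanish) and each $X_i^{11}>0$. Reversing every step — the recovery of $M$, the change of variables, the Schur complements, and the summation of storage functions — shows the construction is not only sufficient but the LMI is solvable exactly when such an $M$ exists, giving the "can be made $\mathbf{Y}$-dissipative" conclusion. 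The main obstacle I anticipate is bookkeeping: correctly tracking the three families of subsystems ($\Sigma_i$, $\bar\Sigma_l$, $\check\Sigma_m$) and the four-way partition of $M$ through the congruence transformation without sign errors, and verifying that the Schur complement directions are compatible with Assumptions \ref{As:NegativeDissipativity}–\ref{As:PositiveDissipativity}; the conceptual content is entirely standard dissipative‐systems interconnection theory, so no genuinely hard estimate is expected — only the careful assembly of \eqref{NSC4YEID} and confirmation that it is affine in the listed decision variables.
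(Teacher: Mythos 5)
The paper itself offers no proof of Proposition~\ref{synthesizeM} --- it is imported directly from \cite{welikala2023non} --- but your sketch correctly reconstructs the argument used in that reference: sum the $p$-scaled subsystem storage functions, substitute the interconnection relations for $u,\bar u,\check u$ and $z$, and linearize the resulting bilinear matrix inequality via the substitution $L \triangleq \diag(\mathbf{X}_p^{11},\bar{\mathbf{X}}_{\bar p}^{11},\check{\mathbf{X}}_{\check p}^{11})\,M$ together with Schur complements pivoting on exactly the blocks that Assumptions~\ref{As:PositiveDissipativity} and~\ref{As:NegativeDissipativity} render sign-definite (note the strict LMI \eqref{NSC4YEID} forces $p_i,\bar p_l,\check p_m>0$, so the inversion recovering $M$ is automatically well defined). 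The one overstatement is your closing claim that the LMI is feasible ``exactly when'' such an $M$ exists: the scaled-sum storage function is a sufficient certificate only, so the equivalence holds within that class of certificates, and the proposition accordingly asserts only sufficiency.
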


\begin{figure*}[!hb]
\vspace{-5mm}
\centering
\hrulefill
\begin{equation}\label{NSC4YEID}
\left[\begin{smallmatrix}
\textbf{X}_p^{11} & \0 & \0 & \0 & L_{uy} & L_{u\bar{y}} & L_{u\check{y}} & L_{uw} \\
\0 & \bar{\textbf{X}}_{\bar{p}}^{11} & \0 & \0 & L_{\bar{u}y} & L_{\bar{u}\bar{y}} & L_{\bar{u}\check{y}} & L_{\bar{u}w} \\
\0 & \0 & \check{\textbf{X}}_{\check{p}}^{11} & \0 & L_{\check{u}y} & L_{\check{u}\bar{y}} & L_{\check{u}\check{y}} & L_{\check{u}w} \\
\0 & \0 & \0 & -\mathbf{Y}^{22} & -\mathbf{Y}^{22} M_{zy} & -\mathbf{Y}^{22} M_{z\bar{y}} & -\mathbf{Y}^{22} M_{z\check{y}} & -\mathbf{Y}^{22} M_{zw} \\
L_{uy}^\T & L_{\bar{u}y}^\T & L_{\check{u}y}^\T & -M_{zy}^\T\mathbf{Y}^{22} & -L_{uy}^\T \textbf{X}^{12} - \textbf{X}^{21}L_{uy} - \textbf{X}_p^{22} & -\textbf{X}^{21}L_{u\bar{y}} - L_{uy}^\T\bar{\textbf{X}}^{12} & -\textbf{X}^{21}L_{u\check{y}} - L_{uy}^\T\check{\textbf{X}}^{12} & -\textbf{X}^{21}L_{uw} + M_{zy}^\T \textbf{Y}^{21} \\
L_{u\bar{y}}^{\top} & L_{\bar{u}\bar{y}}^\T & L_{\check{u}\bar{y}}^\T & -M_{z\bar{y}}^\T\mathbf{Y}^{22} &-L_{u\bar{y}}^\T \textbf{X}^{12} - \bar{\textbf{X}}^{21}L_{uy} & -L_{\bar{u}\bar{y}}^\T \bar{\textbf{X}}^{12} - \bar{\textbf{X}}^{21}L_{\bar{u}\bar{y}} - \bar{\textbf{X}}_{\bar{p}}^{22} & -\bar{\textbf{X}}^{21} L_{\bar{u}\check{y}} - L_{\bar{u}\bar{y}}^\T \check{\textbf{X}}^{12} & -\bar{\textbf{X}}^{21}L_{\bar{u}w} + M_{z\bar{y}}^{\top}\textbf{Y}^{21} \\
L_{u\check{y}}^\T & L_{\bar{u}\check{y}}^\T & L_{\check{u}\check{y}}^\T & -M_{z\check{y}}^\T\mathbf{Y}^{22} &-L_{u\check{y}}^\T \textbf{X}^{12} - \check{\textbf{X}}^{21}L_{uy} & -L_{\bar{u}\check{y}}^\T \bar{\textbf{X}}^{12} - \check{\textbf{X}}^{21}L_{\bar{u}\bar{y}} & -L_{\check{u}\check{y}}^\T \check{\textbf{X}}^{12} - \check{\textbf{X}}^{21}L_{\check{u}\check{y}} - \check{\textbf{X}}_{\check{p}}^{22} & -\check{\textbf{X}}^{21} L_{\check{u}w} + M_{z\check{y}}^\T \textbf{Y}^{21} \\
L_{uw}^\T & L_{\bar{u}w}^\T & L_{\check{u}w}^\T & -M_{zw}^\T\mathbf{Y}^{22} & -L_{uw}^\T \textbf{X}^{12} + \textbf{Y}^{12}M_{zy} & -L_{\bar{u}w}^\T\bar{\textbf{X}}^{12} + \textbf{Y}^{12}M_{z\bar{y}} & -L_{\check{u}w}^\T \check{\textbf{X}}^{12} + \textbf{Y}^{12}M_{z\check{y}} & M_{zw}^\T \textbf{Y}^{21} + \textbf{Y}^{12}M_{zw} + \textbf{Y}^{11}
\end{smallmatrix}\right] > \0
\end{equation}
\end{figure*}

\section{System Modeling}\label{problemformulation}
This section develops the dynamic model of the AC MG comprising multiple DGs, loads, and distribution lines. We present the local and distributed control laws, then formulate the closed-loop AC MG as a networked system suitable for dissipativity analysis.

\subsection{AC MG Topology}
The physical interconnection topology of an AC MG is represented as a directed connected graph $\mathcal{G}^p =(\mathcal{V},\mathcal{E})$. The vertex set $\mathcal{V}$ is partitioned into three disjoint subsets: DGs $\mathcal{D} \triangleq \{\Sigma_i^{DG}: i\in\N_N\}$, loads $\mathcal{L}=\{\Sigma_m^{Load}: m\in\N_M\}$, and lines $\mathcal{P} \triangleq \{\Sigma_l^{Line}: l\in\N_L\}$. DGs and loads are collectively denoted as $\mathcal{M} \triangleq \mathcal{D} \cup \mathcal{L}$ and interface with the AC MG through points of common coupling (PCCs), as illustrated in Fig. \ref{Networked}. Edge orientations represent reference directions for positive currents. Each line node $\mathcal{P}$ connects exactly two distinct nodes $\mathcal{M}$ through two directed edges.
The incidence matrix $\mathcal{B} \triangleq [\mathcal{B}_{kl}]_{k\in\N_{N+M},l\in\N_L}$ is defined such that $\mathcal{B}_{kl}$ equals $+1, -1$, or $0$ depending on whether $l$ is in $\mathcal{E}_k^+$, $\mathcal{E}_k^-$, or otherwise, respectively. Here 
$\mathcal{E}_k^+$ and $\mathcal{E}_k^-$ respectively denote out- and in-neighbors. We also define $\mathcal{E}_k \triangleq \mathcal{E}_k^+\cup \mathcal{E}_k^-$ (all neighbors), $\mathcal{N}^P_i \triangleq \{l:\mathcal{B}_{il}\neq0\}$ (lines connected to $\Sigma_i^{DG}$), $\mathcal{N}_m^P \triangleq \{l:\mathcal{B}_{ml} \neq 0\}$ (lines connected to $\Sigma_m^{load}$), and $\mathcal{N}^P_l\triangleq \{k:\mathcal{B}_{kl}\neq0\}$ (DGs/loads connected to $\Sigma_l^{line}$). 

% {\color{blue} For communication topology, each $\Sigma_i^{DG}$ has a communication neighbor set $\mathcal{N}_i^C = \{j:L_{ij}^P \neq 0 \ \text{or} \ L_{ij}^Q \neq 0\}$ representing the DGs that exchange information with $\Sigma_i^{DG}$ for distributed power sharing coordination. The communication topology is characterized by the adjacency matrices $L^P = \bm{L_{ij}^P}_{i,j\in\N_N}$ and $L^Q = \bm{L_{ij}^Q}_{i,j\in\N_N}$ for active and reactive power coordination, respectively.}

\begin{figure}
    \centering
    \includegraphics[width=0.99\columnwidth]{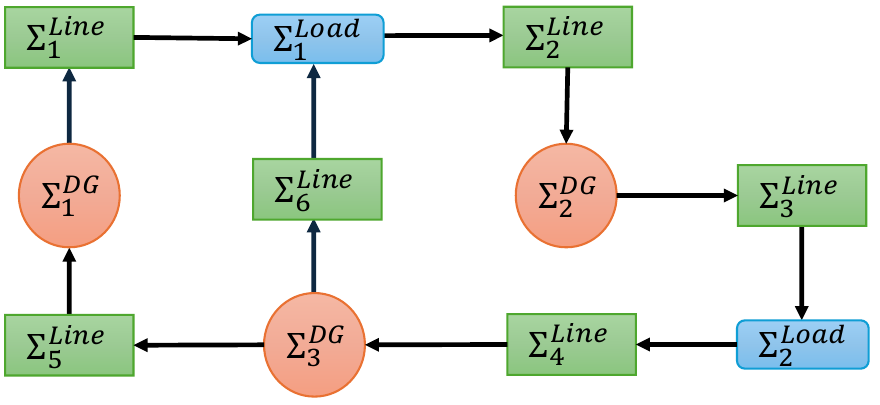}
    \caption{A representative diagram of the AC MG network. The sets $\mathcal{D}$, $\mathcal{L}$, and $\mathcal{P}$ are represented as DGs, distribution lines and loads.}
    \label{Networked}
\end{figure}

\subsection{Dynamic Model of a DG}
Each DG $\Sigma_i^{DG}$, as shown in Fig. \ref{AC_MG_scheme}(a), consists of a DC source, a voltage source converter (VSC), a series RLC filter, and a local load, which connects to multiple lines at PCC$_i$.

All three-phase electrical signals are represented in a common $dq$ reference frame rotating at the nominal frequency $\omega_0$ \cite{nahata2019passivity}. Any $dq$-frame variable, e.g., $V^{dq} \triangleq V^d + \mathrm{i}V^q$ can also be represented as a vector $V \triangleq \bm{V^d & V^q}^\T$ in $\R^2$. 

\begin{figure}
    \centering
    \includegraphics[width=0.99\columnwidth]{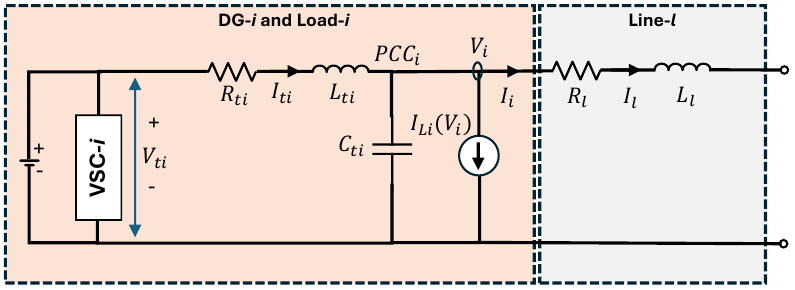}
    \caption{Electrical schematic of a DG $\Sigma_i^{DG}$ with a local load and a line $\Sigma_l^{Line}$.}
    \label{AC_MG_scheme}
\end{figure}

Applying Kirchhoff's law (KVL/KCL) at $\text{PCC}_i$ yields:
\begin{equation}
\begin{aligned}\label{eq:dgu_dynamics}
C_{ti}\frac{dV_i^{dq}}{dt} &= -C_{ti}\mathrm{i}\omega_0 V_i^{dq} + I_{ti}^{dq} - I_{Li}^{dq}(V_i^{dq}) - I_i^{dq}, \\
L_{ti}\frac{dI_{ti}^{dq}}{dt} &= -V_i^{dq} - \left(R_{ti} + L_{ti}\mathrm{i}\omega_0\right)I_{ti}^{dq} + V_{ti}^{dq},
\end{aligned}
\end{equation}
where $V_i^{dq}$ is the $\text{PCC}_i$ voltage, $I_{ti}^{dq}$ the filter current, $V_{ti}^{dq}$ the VSC voltage command, and $I_i^{dq}$ the net current injected into the AC MG given by:
\begin{align}\label{eq:current_injection}
I_i^{dq} = \sum_{l \in \mathcal{E}_i} \mathcal{B}_{il} I_l^{dq}. 
\end{align}

In \eqref{eq:dgu_dynamics}, the parameters $R_{ti}$, $L_{ti}$, and $C_{ti}$ denote the internal resistance, inductance, and filter capacitance.

% \subsection{Frequency Dynamics}
% Each DG in the AC MG maintains frequency regulation through dedicated frequency control mechanisms. The frequency dynamics at each $\Sigma_i^{DG}$ is modeled using a first-order integrator that directly relates the distributed control signals to the evolution of the phase angle.

% The state space reprsentation of the frequency dynamics can be expressed as:
% \begin{equation}\label{Eq:frequency_dynamics}
%         \dot{\theta}_i(t) =  \omega_{i,ref}(t) - u_i^P(t) = \omega_i(t),
% \end{equation}
% where $\theta_i(t)$ is the phase angle, $\omega_{i,ref}(t)$ is the reference frequency (typically the nominal frequency $\omega_0$), $\omega_i$ is the instantaneous frequency and $u_i^P(t)$ is the distributed active power control input received from the communication topology and will be represented in Sec. \ref{Sec:Distributed_controller}.

% This formulation implements a droop-free control strategy where the distributed active power control signal $u_i^P$ directly influences the phase angle dynamics. The communication-based structure enables distributed power sharing coordination while maintaining frequency stability across the MG.

% The frequency deviation performance output is given by:
% \begin{equation}
%     z_i^\theta(t) = w_i(t) - w_{i,ref}(t) = -u_i^P(t).
% \end{equation}

The active and reactive power injections are
\cite{khan2023active}:
\begin{equation}\label{Eq:Active_Reactive_power}
\begin{aligned}
    P_i =V_i^d I_i^d + V_i^q I_i^q, \quad
    Q_i = V_i^q I_i^d - V_i^d I_i^q.
\end{aligned}
\end{equation}

\subsection{Load Model} 
The loads are modeled as ZIP (constant impedance (Z), constant current (I), and constant power (P)) components. At each $\Sigma_i^{DG}$, the local load current is: 
\begin{equation}
\label{eq:zip_load_local}
I_{Li}^{dq} = I_{Li}^{Z} + I_{Li}^{I} + I_{Li}^{P} \equiv Y_{Li}V_i^{dq} + \bar{I}_{Li}^{dq} + P_{Li}^{dq},
\end{equation}
where $Y_{Li}$, $\bar{I}_{Li}^{dq}$, and $P_{Li}^{dq}$ represent the constant admittance, current, and power components, respectively. 

For global load $\Sigma_m^{Load}$, as shown in Fig. \ref{AC_MG_scheme}(b), applying KCL at $\text{PCC}_m$ gives:
\begin{equation}\label{Eq:Load_Dynamic}
\frac{dV_m^{dq}}{dt} = -\mathrm{j}\omega_0 V_m^{dq} - \frac{1}{C_{tm}}\left(Y_{Lm}V_m^{dq} + \bar{I}_{Lm}^{dq}\right) - \frac{I_m^{dq}}{C_{tm}},
\end{equation}
where $I_m^{dq} = \sum_{l\in\E_m} \mathcal{B}_{ml}I_l^{dq}$ is the net current from connected lines. The state-space representation is:
\begin{align}\label{eq:load_statespace}
\dot{\check{x}}_m(t) &= \check{A}_m \check{x}_m(t) +  \check{B}_m \check{u}_m(t) + \check{B}_m \check{d}_m(t), 
\end{align}
with state $\check{x}_m \triangleq V_m$, exogenous input $\check{d}_m \triangleq \bar{I}_{Lm}$, coupling input $\check{u}_m \triangleq I_m$, and matrices:
\begin{equation}\label{eq:load_matrices}
\begin{aligned}
\check{A}_m \!\triangleq\! \begin{bmatrix} -\frac{Y_{Lm}}{C_{tm}} & \omega_0 \\ -\omega_0 & -\frac{Y_{Lm}}{C_{tm}} \end{bmatrix},\quad
\check{B}_m \!\triangleq\! \begin{bmatrix} -\frac{1}{C_{tm}} & 0 \\ 0 & -\frac{1}{C_{tm}} \end{bmatrix}.
\end{aligned}
\end{equation}
% and the disturbance input matrix is also $\check{B}_m$.

\begin{figure}
    \centering
    \includegraphics[width=0.7\columnwidth]{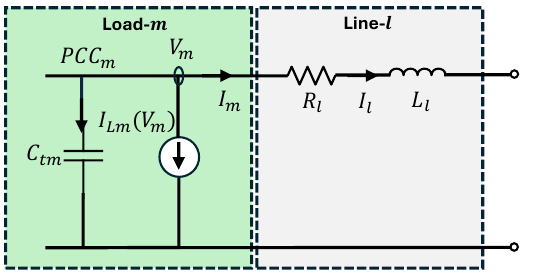}
    \caption{Electrical scheme of ZIP load in AC MG.}
    \label{fig:load_model}
\end{figure}

\subsection{Distribution Line Model}
Each line $\Sigma_l^{line}$, as shown in Fig. \ref{AC_MG_scheme}, is modeled as an RL circuit with resistance $R_l$ and inductance $L_l$ \cite{nahata2019passivity}. Applying KVL yields: 
\begin{align}\label{eq:line_dynamics}
\frac{dI_l^{dq}}{dt} = -\left(\frac{R_l}{L_l} + \mathrm{i}\omega_0\right)I_l^{dq} + \frac{1}{L_l}\sum_{i \in \mathcal{N}_l^P} \mathcal{B}_{il}V_i^{dq}.
\end{align}

The state space form is:
\begin{align}\label{eq:line_statespace}
\dot{\bar{x}}_l(t) = \bar{A}_l \bar{x}_l(t) + \bar{B}_l \bar{u}_l(t), 
\end{align}
with state $\bar{x}_l = I_l$, coupling input $\bar{u}_l = \sum_{k \in \mathcal{N}_l^P} \mathcal{B}_{kl}V_k^{dq}$, and
\begin{align}\label{eq:line_matrices}
\bar{A}_l &\triangleq \begin{bmatrix} -\frac{R_l}{L_l} & \omega_0 \\ -\omega_0 & -\frac{R_l}{L_l} \end{bmatrix}, \quad  \quad \bar{B}_l \triangleq \begin{bmatrix} \frac{1}{L_l} & 0 \\ 0 & \frac{1}{L_l} \end{bmatrix}. 
\end{align}

% The power flow through each distribution line is calculated using the standard AC power flow equations. For line $l$ connecting $\Sigma_i^{DG}$ and $\Sigma_j^{DG}$, the active and reactive power flows are given by:
% \begin{equation}\label{Eq:power_flow}
% \begin{aligned}
%     \bar{P}_l = |V_i||V_j|(G_l\cos \theta_{ij} + B_l \sin \theta_{ij}),\\
%     \bar{Q}_l = |V_i||V_j|(G_l\sin \theta_{ij} - B_l \cos \theta_{ij}),
% \end{aligned}
% \end{equation}
% where $G_l$ and $B_l$ are the conductance and susceptance of line $l$, respectively, and $\theta_{ij} = \theta_i - \theta_j$ represents the phase angle difference between the connected DGs. 
% The voltage magnitudes are extracted from the $dq$ components as:
% \begin{equation}
%     |V_i| = \sqrt{(V_i^d)^2 + (V_i^q)^2}, \ \quad \ |V_j| = \sqrt{(V_j^d)^2 + (V_j^q)^2}.
% \end{equation}

% These power flow equations represent the static algebraic relationships that couple the electrical states with the power quantities used in the distributed control design.

\section{Proposed Controllers}\label{Sec:Proposed_Controller}
This section introduces the hierarchical control architecture for the AC MG, consisting of local feedback controllers for voltage tracking and distributed consensus-based controllers for frequency coordination and proportional power distribution. The complete closed-loop system is then formulated as a networked system, as 
illustrated in Fig. \ref{fig.ACMG_topology}.

\subsection{Local Feedback Controllers}\label{Sec:Local_controller}
Each DG $\Sigma_i^{DG}$ implements a local feedback controller to track a specified PCC voltage reference $V_{i,ref}^{dq}$. To enable integral action, the $\Sigma_i^{DG}$ is augmented with an integral voltage error state $v_i^{dq}$ governed by:
\begin{align}\label{eq:integral_dynamics}
\dot{v}_i^{dq}(t) = e_i^{dq}(t) = V_i^{dq} - V_{i,ref}^{dq}.
\end{align}

The local control law employs a PI controller:
\begin{equation}\label{eq:local_controller}
\begin{aligned}
u_{iL}(t) \triangleq&\ K_i^P e_i^{dq}(t)  + K_i^I \int_0^t e_i^{dq}(\tau)  d\tau \\
=&\ K_{i0}x_i(t) - K_i^P V_{i,ref}^{dq}.
\end{aligned}
\end{equation}
where the augmented state vector of $\Sigma_i^{DG}$ is defined as
\begin{equation}\label{Eq:states}
    x_i \triangleq \bm{V_i^d & V_i^q & I_{ti}^d & I_{ti}^q & v_i^d & v_i^q & \tilde{\omega}_i}^\T,
\end{equation}
with $\tilde{\omega}_i$ representing the frequency error (introduced subsequently), $K_i^P$ and $K_i^I$ denoting the proportional and integral gain matrices, 
and $K_{i0} = \bm{K_i^P & \0_{2\times2} & K_i^I & \0_{2\times 1}}$.

This control architecture provides feedback from all augmented states to both $d$-axis and $q$-axis voltage commands, enabling cross-coupling when required for stability and performance.

\subsection{ Distributed Controllers for Frequency Synchronization and Power Sharing}
In islanded operation, each $\Sigma^{DG}_i$ must generate a local frequency reference $\omega_i(t)$ for $dq$-frame transformation and system-wide synchronization. To maintain frequency coherence while achieving droop-free power sharing, we incorporate frequency dynamics via normalized power consensus \cite{zuo2023asymptotic}. Defining the frequency error $\tilde\omega_i \triangleq \omega_i - \omega_0$, its dynamics are governed by:
\begin{equation}
    \label{eq:frequency_dynamics}
   \tau_i \frac{d\tilde\omega_i(t)}{dt} = -\tilde\omega_i + u_i^\Omega(t),
\end{equation}
where $\tau_i > 0$ represents a time constant capturing sensing, communication, and control delays, and $u_i^\Omega(t)$ is a distributed consensus control input:
\begin{equation}
\label{eq:frequency_control}
u_i^\Omega(t) \triangleq \sum_{j \in \mathcal{N}^C_i} K^\Omega_{ij} \left(P_{n,i}(t) - P_{n,j}(t)\right),
\end{equation} 
with $P_{n,i}(t) \triangleq P_i(t)/P_i^{\max}$ denoting the normalized active power and $[K_{ij}^\Omega]_{i,j\in\N_N}$ representing the distributed controller gains. 

To achieve proportional active and reactive power sharing, we employ additional distributed control inputs that directly regulate the VSC voltage $u_i^V(t) \triangleq V_{ti}$:
\begin{equation}\label{eq:distributed_controller}
\begin{aligned}
u_i^P(t) &= \sum_{j \in \mathcal{N}_i^C} K_{ij}^P \left(P_{i,n}(t) - P_{j,n}(t)\right),  \\
u_i^Q(t) &= \sum_{j \in \mathcal{N}_i^C} K_{ij}^Q \left(Q_{i,n}(t) - Q_{j,n}(t)\right), 
\end{aligned}
\end{equation}
where $[K_{ij}^P]_{i,j\in\N_N}$ and $[K_{ij}^Q]_{i,j\in\N_N}$ are the distributed gains, $Q_{i,n}(t) \triangleq Q_i(t)/Q_i^{\max}$ is the normalized reactive power, and $P_i^{\max}, Q_i^{\max}$ denote the maximum power ratings. The power measurements $P_i(t)$ and $Q_i(t)$ are obtained via \eqref{Eq:Active_Reactive_power}.

The total VSC control input $u_i^V(t) \triangleq V_{ti}$ combines three components:
$
u_i^V(t) \triangleq V_{ti}(t) = u_{iS} + u_{iL}(t) + u_{iG}(t),
$
where $u_{iS}$ is the steady-state controller (designed subsequently), $u_{iL}(t)$ is the local controller from \eqref{eq:local_controller}, and 
$u_{iG}(t) \triangleq \bm{u_i^P(t) & u_i^Q(t)}^\T$ represents the distributed controller from \eqref{eq:distributed_controller}.  Figure \ref{fig.ACMG_topology} illustrates this hierarchical control architecture. In the $dq$ frame:
\begin{equation}
    \label{eq:combined_control_dq}
    u_i^{Vdq}(t) \triangleq V_{ti}^{dq}(t) = u_{iS}^{dq} + u_{iL}^{dq}(t) + u_{iG}^{dq}(t).
\end{equation}

This droop-free control strategy eliminates the conventional trade-off between voltage regulation accuracy and power sharing precision by achieving coordination through direct communication rather than local droop characteristics.

\begin{figure*}[t]
    \centering
    \includegraphics[width=\textwidth]{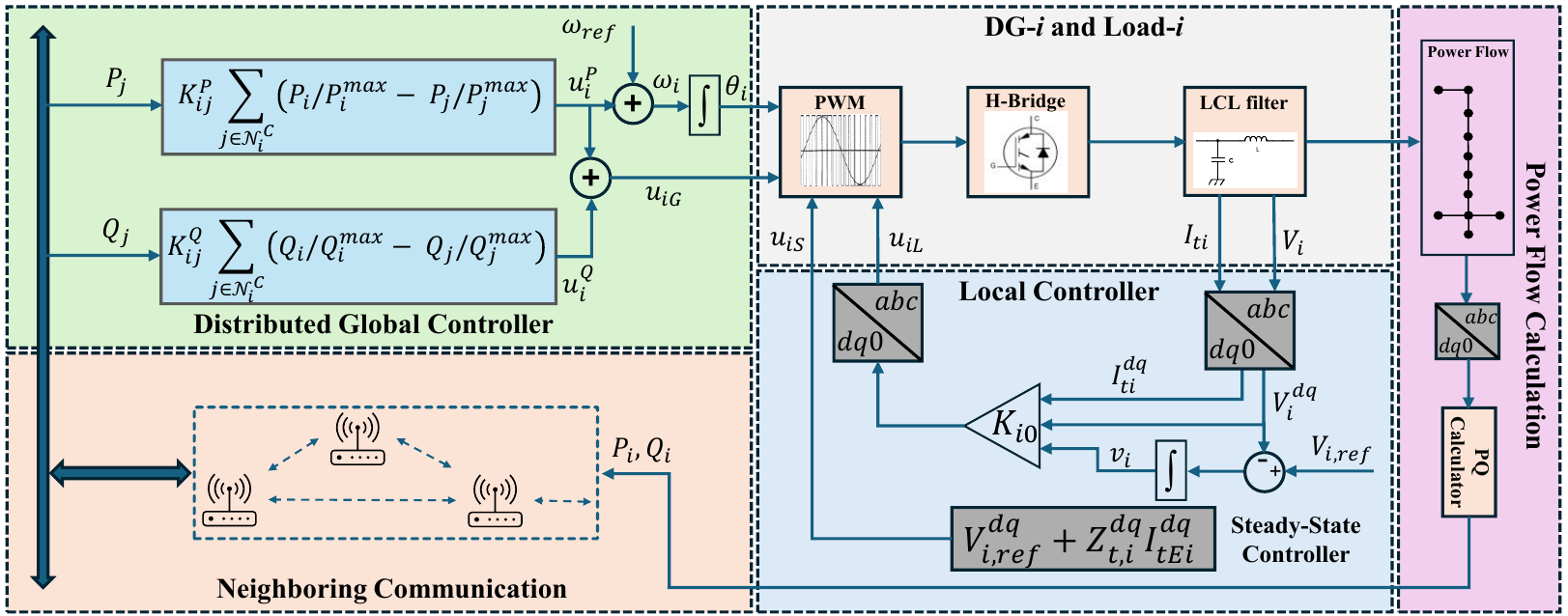}
    \caption{Hierarchical control architecture combining local and distributed controllers through physical and communication interconnections.}
    \label{fig.ACMG_topology}
\end{figure*}

\subsection{Closed-Loop Dynamics of the AC MG}
By combining \eqref{eq:dgu_dynamics}, \eqref{eq:integral_dynamics} and \eqref{eq:frequency_dynamics}, the complete dynamics of each $\Sigma_i^{DG},i\in\N_N$ are expressed as:
\begin{equation}\label{Eq:closedloop_dynamics}
\begin{aligned}
\frac{dV_i^{dq}}{dt} &= -\mathrm{i}(\omega_0 + \tilde\omega_i) V_i^{dq} + \frac{I_{ti}^{dq}}{C_{ti}} - \frac{I_{Li}^{dq}}{C_{ti}} - \frac{I_i^{dq}}{C_{ti}} + w_{vi}^{dq}(t),  \\
\frac{dI_{ti}^{dq}}{dt} &= -\frac{V_i^{dq}}{L_{ti}} - \big(\frac{R_{ti}}{L_{ti}} + \mathrm{i}(\omega_0 + \tilde\omega_i)\big)I_{ti}^{dq} + \frac{u_i^{Vdq}}{L_{ti}} + w_{ci}^{dq}(t),\\
\frac{dv_i^{dq}}{dt} &= V_i^{dq} - V_{i,ref}^{dq},\\
\frac{d\tilde\omega_i}{dt} &= -\frac{1}{\tau_i} \tilde\omega_i + \frac{1}{\tau_i}u_i^\Omega, 
\end{aligned}
\end{equation}
where the terms $I_i^{dq}$, $I_{Li}^{dq}$, $u_i^{Vdq}$, and $u_i^\Omega$ can all be substituted from \eqref{eq:current_injection}, \eqref{eq:zip_load_local},  \eqref{eq:combined_control_dq} and \eqref{eq:frequency_control}, respectively, and $w_{vi}^{dq}(t)$ and $w_{ci}^{dq}$ represent zero-mean disturbances accounting for model uncertainities and external impacts. Under the commonly adopted assumption $P_{Li}^{dq}=0$ (for simplicity), the dynamics \eqref{Eq:closedloop_dynamics} can be expressed in compact state-space form as: 
\begin{align}\label{eq:dgu_statespace}
\dot{x}_i(t) \!=\! A_i x_i(t) \negthickspace+\negthickspace B_i u_i(t) \negthickspace+\negthickspace E_id_i(t) \negthickspace+\negthickspace  F_i \xi_i(t) \negthickspace+\negthickspace g_i(x_i(t)), 
\end{align}
where $x_i(t)$ is the augmented state vector defined in \eqref{Eq:states}, $u_i(t) \triangleq \bm{u_i^V(t) & u_i^\Omega(t)}^\T$ represents the total control input, $d_i(t) \triangleq \bar{w}_i + w_i(t)$ is the exogenous input consisting of $\bar{w}_i \triangleq \bm{-\bar{I}_{Li}^{d} & -\bar{I}_{Li}^{q} & 0 & 0 & -V_{i,ref}^{d} & -V_{i,ref}^{q} & 0}^\T$ (known fixed disturbance) and  
$w_i(t) = \bm{w_{vi}^{d}(t) & w_{vi}^{q}(t) & w_{ci}^{d}(t) & w_{ci}^{q}(t) & 0 & 0 & 0}^\T$ (unknown zero-mean disturbance), $\xi_i(t) \triangleq I_i$ is the line coupling input from \eqref{eq:current_injection}, and $g_i(x_i(t))$ captures the nonlinear frequency coupling terms:
\begin{equation}\label{Eq:Nonlinearity_g}
g_i(x_i(t)) \triangleq \bm{-\tilde\omega_iV_i^{dq} & -\tilde\omega_iI_{ti}^{dq} & \0 & 0}^\T.
\end{equation}

The system matrices in \eqref{eq:dgu_statespace} are:
\begin{equation*}
A_i \triangleq \begin{bmatrix}
-\frac{Y_{Li}}{C_{ti}} & \omega_0 & \frac{1}{C_{ti}} & 0 & 0 & 0 & 0  \\
-\omega_0 & -\frac{Y_{Li}}{C_{ti}} & 0 & \frac{1}{C_{ti}} & 0 & 0 & 0  \\
-\frac{1}{L_{ti}} & 0 & -\frac{R_{ti}}{L_{ti}} & \omega_0 & 0 & 0 & 0  \\
0 & -\frac{1}{L_{ti}} & -\omega_0 & -\frac{R_{ti}}{L_{ti}} & 0 & 0 & 0 \\
1 & 0 & 0 & 0 & 0 & 0 & 0 \\
0 & 1 & 0 & 0 & 0 & 0 & 0\\
0 & 0 & 0 & 0 & 0 & 0 & -\frac{1}{\tau_i}
\end{bmatrix},
\end{equation*}
\begin{equation}\label{Eq:DG_matrices}
\begin{aligned}
B_i &\triangleq 
\begin{bmatrix}  0 & 0 & \frac{1}{L_{ti}} & 0 & 0 & 0 &  0 \\  0 & 0 & 0 &  \frac{1}{L_{ti}} & 0 & 0 & 0 \\ 0 & 0 & 0 & 0 & 0 & 0 & \frac{1}{\tau_i} \end{bmatrix}^\T, \\
E_i &\triangleq \diag(\bm{C_{ti}^{-1} & C_{ti}^{-1} & L_{ti}^{-1} & L_{ti}^{-1} & 1 & 1 &  \tau_i^{-1}}), \\
F_i &\triangleq 
\begin{bmatrix}  
    -\frac{1}{C_{ti}} & 0 & 0 & 0 & 0 & 0 & 0\\  0 & -\frac{1}{C_{ti}} & 0 & 0 & 0 & 0 & 0
\end{bmatrix}^\T.
\end{aligned}
\end{equation}

\subsection{Networked System Representation}\label{Sec:Networked_system}
This subsection formulates the closed-loop AC MG as a networked system by establishing interconnection relationships among DGs, lines, and loads through a static interconnection matrix.

\begin{figure}
    \centering
    \includegraphics[width=0.99\columnwidth]{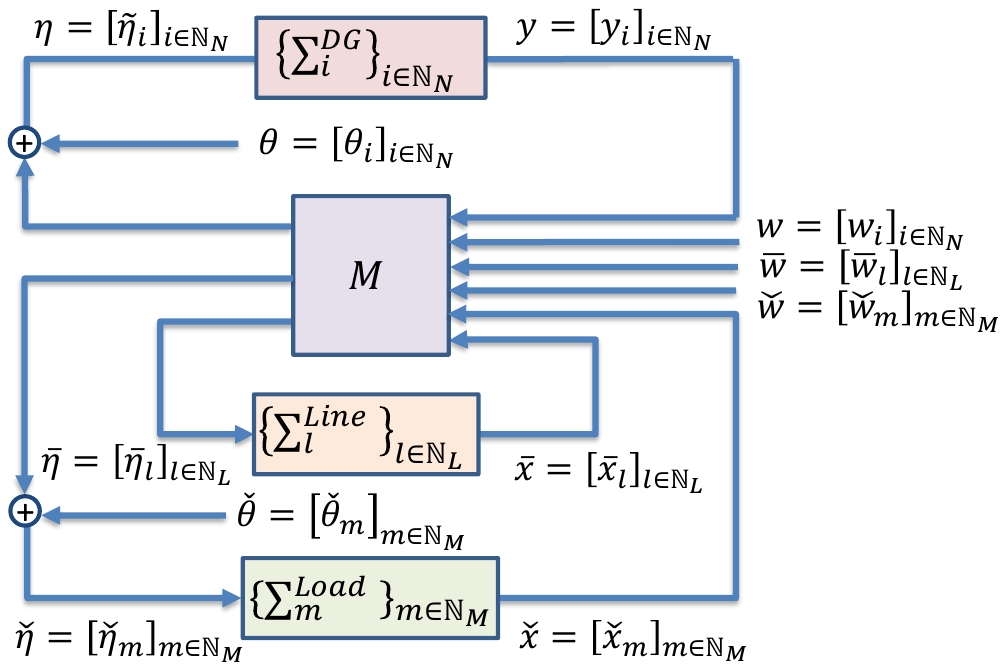}
    \caption{Networked system representation showing interconnections between DGs, lines, and loads AC MG.}
    \label{Networked_system_ACMG}
\end{figure}

% \begin{figure}
%     \centering
%     \includegraphics[width=0.99\columnwidth]{}
%     \caption{Networked system representation showing interconnections between DG dynamics, line dynamics, and communication topology for AC MG.}
%     \label{Networked_system_ACMG}
% \end{figure}

\subsubsection{\textbf{DG Subsystems}}
The control input $u_i(t)$ in \eqref{eq:dgu_statespace} can be expressed as 
$$
u_i(t) = \bm{u_i^V(t) \\ u_i^\Omega(t)} 
= \bm{\I_{2 \times 2} \\ \0_{1\times 2}} 
(u_{iS} + u_{iL}(t)) + u_i^D(t), 
$$
where $u_{iS}$ is the steady-state controller, $u_{iL}(t)$ is the local feedback controller from \eqref{eq:local_controller}, and $u_i^D(t)$ combines the distributed controller from \eqref{eq:frequency_control} and \eqref{eq:distributed_controller}: 
\begin{align}\label{eq:distributed_controller_combined}
u_i^D(t) \triangleq&\ \bm{u_i^P(t) \\ u_i^Q(t) \\ u_i^\Omega(t)} = \sum_{j \in \bar{\mathcal{N}}_i^C} \hat{K}_{ij}  
\bm{P_j(t) \\ Q_j(t)},
\end{align}
where $\bar{\mathcal{N}}_i^C \triangleq \mathcal{N}_i^C \cup \{i\}$, and the gains matrices are defined as:
\begin{equation}\label{eq:controlconstraint}
\hat{K}_{ij} \triangleq 
\begin{cases}
 - \bar{K}_{ij}\diag({P_j^{\max}, Q_j^{\max}})^{-1}, \quad &j\in \mathcal{N}_i^C\\ 
\sum_{j\in \mathcal{N}_i^C} \bar{K}_{ij}\diag({P_i^{\max}, Q_i^{\max}})^{-1}, \quad &j = i.
\end{cases}
\end{equation}
with $\bar{K}_{ij}^\T \triangleq \bm{\diag^\T \bm{K_{ij}^P & K_{ij}^Q} & \bm{K_{ij}^\Omega & \0}^\T}^\T$.

To uniquely recover the distributed controller gains $K_{ij}^P, K_{ij}^Q, K_{ij}^\Omega$ from $\hat{K}_{ij}$, we impose two structural constraints:  
(i) each off-diagonal block $\hat{K}_{ij}$ ($j\neq i$) maintains the same structure as $\bar{K}_{ij}$; and 
(ii) each diagonal block $\hat{K}_{ij}$ ($j=i$) satisfies $\hat{K}_{ii} = -\sum_{j\in \mathcal{N}_i^C} \hat{K}_{ij} \diag(P_j^{\max}/P_i^{\max}, Q_j^{\max}/Q_i^{\max})$.

Substituting this $u_i(t)$ into \eqref{eq:dgu_statespace}, the closed-loop $\Sigma_i^{DG}$ dynamics become:
\begin{equation}
\label{Eq:closedloop_DG}
\begin{aligned}
\dot{x}_i(t) = 
&\hat{A}_ix_i(t) + \hat{B}_i\eta_i(t) + g_i(x_i(t)) + \theta_i,
\end{aligned}
\end{equation}
where $\hat{A}_i \triangleq A_i+ \bar{B}_iK_{i0}$, $\bar{B}_i \triangleq B_i\bm{\I_{2 \times 2} & \0_{2 \times 1}}^\T$, and 
$\hat{B}_i \triangleq \bm{B_i & F_i & E_i}$, $\theta_i \triangleq \bar{B}_i(u_{iS} - K_i^P \bm{ V_{i,ref}^d & V_{i,ref}^q}^\T) + E_i \bar{w}_i$, and  the effective control input is:
\begin{equation}\label{Eq:DG_Control_Equil}
\eta_i(t) \triangleq \bm{u_i^D(t) \\ \xi_i^\T(t) \\ w_i^\T(t)} = \bm{\sum_{j\in \mathcal{N}_i^C} K_{ij}y_j(t) \\ \sum_{l\in\E_i}\bar{C}_{il}\bar{x}_l(t) \\ w_i(t)},
\end{equation}
with $\xi_i \triangleq I_i =\sum_{l\in\E_i}\mathcal{B}_{il} I_l =  \sum_{l\in\E_i}\bar{C}_{il} \bar{x}_l$ representing the line coupling input where $\bar{C}_{il} \triangleq \mathcal{B}_{il} \I_{2}$.

The DG output vector $y_i$ capturing voltage and power injection is:
\begin{equation*}\label{Eq:Output}
    y_i = h_i(x_i, \eta_i) \triangleq 
    \bm{V_i^d & V_i^q & P_i & Q_i}^\T,
\end{equation*}
which can be computed using \eqref{Eq:states} and \eqref{Eq:Active_Reactive_power}. These outputs affect 
line inputs via $\bar{u}_l = \sum_{k \in \mathcal{N}_l^P} C_{lk} y_k$ where $C_{lk} \triangleq \bm{ \mathcal{B}_{kl}\I_2 & \0_{2\times 2}}$, and neighboring DG dynamics through \eqref{eq:distributed_controller_combined} where $
K_{ij} \triangleq \hat{K}_{ij} \bm{\0_{2\times 2} & \I_2} 
= \bm{\0_{3\times 2} & \hat{K}_{ij}}$.

\subsubsection{\textbf{Line Subsystems}} 
Each $\Sigma_l^{Line}$ follows the dynamics from \eqref{eq:line_statespace}:
\begin{equation}\label{Eq:closedloop_line}
    \dot{\bar{x}}_l(t) = \bar{A}_l\bar{x}_l(t) + \hat{\bar{B}}_l\bar{\eta}_l(t),
\end{equation}
where $\hat{\bar{B}}_l \triangleq \bm{\bar{B}_l & \bar{B}_l & \bar{E}_l}$ with $\bar{E}_l \triangleq (1/L_l)\I_2$. The effective input $\bar{\eta}_l$ is:
\begin{equation}\label{Eq:Line_Control_Equil}
    \bar{\eta}_l(t) = \bm{\sum_{k\in\mathcal{N}_l^P} C_{lk}y_k (t) \\ \sum_{m\in\mathcal{N}_l^P}\check{C}_{lm}\check{x}_m (t) \\
    \bar{w}_l(t)},
\end{equation}
where $C_{lk} \triangleq \bm{ \mathcal{B}_{kl}\I_2 & \0_{2 \times 2}}$, $\check{C}_{lm} \triangleq \mathcal{B}_{ml}\I_2$, and $\bar{w}_{l}(t)$ is a zero-mean disturbance included for robustness analysis. 

\subsubsection{\textbf{Load Subsystems}} Each load subsystem $\Sigma_m^{Load}$ follows dynamics from \eqref{eq:load_statespace}:
\begin{equation}\label{Eq:closedloop_Load}
    \dot{\check{x}}_m(t) = \check{A}_m\check{x}_m(t) + \hat{\check{B}}_m\check{\eta}_m(t) + \check{\theta}_m(t),
\end{equation}
where $\hat{\check{B}}_m \triangleq \bm{\check{B}_m & \check{E}_m}$ with $\check{E}_m \triangleq \check{B}_m$, and the effective input is:
\begin{equation}\label{Eq:Load_Control_Equil}
\check{\eta}_m(t) \triangleq \bm{\sum_{l\in\E_i}\bar{\check{C}}_{ml}\bar{x}_l(t) \\ \check{w}_m(t)
},
\end{equation}
where $\bar{\check{C}}_{ml} \triangleq \mathcal{B}_{ml}\I_2$, and $\check{w}_m(t)$ is a zero-mean disturbance. 

Vectorizing the subsystem dynamics and interconnections, the overall networked system representation is established with the interconnection matrix $M$ relating inputs and outputs as shown in Fig. \ref{Networked_system_ACMG}.

\section{Networked System Error Dynamics and Equilibrium Point Analysis}

This section analyzes the closed-loop AC MG equilibrium point and formulates the networked system error dynamics. 

\subsection{Formulating Networked System Error Dynamics}
We define error variables as deviations from constant equilibrium states. For DG, line, and load subsystems: $\tilde{V}_i^{dq} \triangleq V_i^{dq} - V_{Ei}^{dq}$, $\tilde{I}_{ti}^{dq} \triangleq I_{ti}^{dq} - I_{tEi}^{dq}$, $\tilde{v}_i^{dq} \triangleq v_i^{dq} - v_{Ei}^{dq}$, $\tilde{I}_i^{dq} \triangleq I_i^{dq} - I_{Ei}^{dq}$, $\tilde{I}_l^{dq} \triangleq I_l^{dq} - I_{El}^{dq}$, $\tilde{V}_m^{dq} \triangleq V_m^{dq} - V_{Em}^{dq}$, where we require $V_{Ei}^{dq} = V_{i,ref}$ and $v_{Ei}^{dq} = 0$ are required for voltage 
regulation, and $\tilde{\omega}_{Ei} = 0$ for frequency synchronization. Since error states:
$$
\tilde{x}_i \triangleq x_i - x_{Ei}, \ \ 
\tilde{\bar{x}}_l \triangleq \bar{x}_l - \bar{x}_{El}, 
\ \ \tilde{\check{x}}_m \triangleq \check{x}_m - \check{x}_{Em}, \ \ \forall i,l,m,
$$
their time derivatives are equal to those of the original states:
$$
\dot{\tilde{x}}_i = \dot{x}_i, \ \ 
\dot{\tilde{\bar{x}}}_l = \dot{\bar{x}}_l, \ \ 
\dot{\tilde{\check{x}}}_m = \dot{\check{x}}_m, \ \ \forall i,l,m.
$$

\subsubsection{\textbf{DG Error Subsystems}} Substituting $x_i = \tilde{x}_i + x_{Ei}$ and $\eta_i = \tilde{\eta}_i + \eta_{Ei}$ into \eqref{Eq:closedloop_DG}:
\begin{equation*}\label{Eq:DG_error}
    \dot{\tilde{x}}_i = \hat{A}_i(\tilde{x}_i + x_{Ei}) + \hat{B}_i(\tilde{\eta}_i + \eta_{Ei}) + g_i(\tilde{x}_i + x_{Ei}) + \theta_i.
\end{equation*}

From \eqref{Eq:Nonlinearity_g}, we have $g_i(\tilde{x}_i + x_{Ei}) = g_i(\tilde{x}_i) + \bar{g}_{i}(x_{Ei})\tilde{\omega}_i = g_i(\tilde{x}_i) + \bar{g}_{i}(x_{Ei}) e_7^\T \tilde{x}_i$ where $\bar{g}_{i}(x_{Ei}) \triangleq \bm{-V_{Ei}^q & -V_{Ei}^d & -I_{tEi}^q & -I_{tEi}^d & 0 & 0 & 0}^\T$ and $e_7$ is the unit vector in $\R^{7\times 1}$ with a $1$ in the last element. To avoid steady-state errors at the equilibrium point, we require:
\begin{equation}\label{Eq:Equil_condition}
    \hat{A}_i x_{Ei} + \hat{B}_i \eta_{Ei} + \theta_i = 0.
\end{equation}

This is enforced via steady-state control $u_{iS}$ in $\theta_i$. Under \eqref{Eq:Equil_condition}, the DG error dynamics become:
\begin{equation}\label{Eq:DG_error_dynamic}
    \dot{\tilde{x}}_i = \tilde{A}_i \tilde{x}_i + \hat{B}_i \tilde{\eta}_i + g_i(\tilde{x}_i),
\end{equation}
where $\tilde{A}_i \triangleq (\hat{A}_i+\bar{g}_{i}(x_{Ei}) e_7^\T)$, and the input error is:
\begin{equation}\label{Eq:DG_control_Error_Equil}
\begin{aligned}
\tilde{\eta}_i \triangleq\ \eta_i - \eta_{Ei} 
=\ \bm{\sum_{j\in \mathcal{N}_i^C} K_{ij}\tilde{y}_j(t) \\ \sum_{l\in\E_i}\bar{C}_{il}\tilde{\bar{x}}_l(t) \\ w_i(t)}.
\end{aligned}
\end{equation}

Since $\eta_{Ei} = \bm{u_{Ei}^D & \xi_{Ei}^\T & \0}^\T$ with $u_{Ei}^D = \0$. The frequency coupling nonlinearity error is:
\begin{equation}\label{Eq:DG_error_dynamics}
    g_i(\tilde{x}_i) \triangleq \bm{-\tilde{\omega}_i\tilde{V}_i^{dq} & -\tilde{\omega}_i\tilde{I}_{ti}^{dq} & \0 & 0}^\T.
\end{equation}

The DG output error $\tilde{y}_i \triangleq y_i - y_{Ei} 
=\bm{\tilde{V}_i^d & \tilde{V}_i^q & \tilde{P}_i & \tilde{Q}_i}^\T,$ where power errors are linearized around $(x_{Ei}, \eta_{Ei})$ using \eqref{Eq:Active_Reactive_power} is:
\begin{equation*}\label{Eq:Power_deviations}
\begin{aligned}
\tilde{P}_i &\triangleq P_i - P_{Ei} \approx V_{Ei}^d \tilde{I}_i^d + I_{Ei}^d \tilde{V}_i^d + V_{Ei}^q \tilde{I}_i^q + I_{Ei}^q \tilde{V}_i^q,\\
\tilde{Q}_i &\triangleq Q_i - Q_{Ei} \approx V_{Ei}^q \tilde{I}_i^d - V_{Ei}^d \tilde{I}_i^q + I_{Ei}^d \tilde{V}_i^q - I_{Ei}^q\tilde{V}_i^d.
\end{aligned}
\end{equation*}

Thus, the DG output error is expressed as:
\begin{equation}\label{Eq:Error_output}
    \tilde{y}_i = C_i^y  \tilde{x}_i + D_i^y \tilde{\eta}_i,
\end{equation}
where 
\begin{equation*}\label{Eq:output_matrices}
\scriptsize
\begin{aligned}
C_i^y \triangleq\ \begin{bmatrix}
1 & 0 & \0_{1 \times 5} \\
0 & 1 & \0_{1 \times 5} \\
I_{Ei}^d & I_{Ei}^q & \0_{1 \times 5} \\
-I_{Ei}^q & I_{Ei}^d & \0_{1 \times 5}
\end{bmatrix},
D_i^y \triangleq\ \begin{bmatrix}
\0_{1\times 3} & 0 & 0 & \0_{1\times 7}\\
\0_{1\times 3} & 0 & 0 & \0_{1\times 7} \\
\0_{1\times 3} & V_{Ei}^d & V_{Ei}^q & \0_{1\times 7} \\
\0_{1\times 3} & V_{Ei}^q & -V_{Ei}^d & \0_{1\times 7}
\end{bmatrix}.
\end{aligned}
\end{equation*}

\subsubsection{\textbf{Line Error Subsystems}}
Substituting $\bar{x}_l = \tilde{\bar{x}}_l + \bar{x}_{El}$ and $\bar{\eta}_l = \tilde{\bar{\eta}}_l + \bar{\eta}_{lE}$ into \eqref{Eq:closedloop_line} with $\bar{A}_l\bar{x}_{El} + \hat{\bar{B}}_l \bar{\eta}_{lE} = 0$ yields: 
\begin{equation}\label{Eq:line_error_dynamic}
    \dot{\tilde{\bar{x}}}_l = \bar{A}_l \tilde{\bar{x}}_l + \hat{\bar{B}}_l \tilde{\bar{\eta}}_l,
\end{equation} 
where line input error (using \eqref{Eq:Line_Control_Equil} and $\bar{w}_l$) is:
\begin{equation}\label{Eq:Line_control_Error_Equil}
    \tilde{\bar{\eta}}_l = \bm{\sum_{k\in\mathcal{N}_l^P \cap \mathcal{D}} C_{lk}\tilde{y}_k \\ \sum_{m\in\mathcal{N}_l^P \cap \mathcal{L}} \check{C}_{lm}\tilde{\check{x}}_m \\ \bar{w}_l}.
\end{equation}

\subsubsection{\textbf{Load Error Subsystems}}
Substituting $\check{x}_m = \tilde{\check{x}} + \check{x}_{Em}$ and $\check{\eta}_m = \tilde{\check{\eta}}_m + \check{\eta}_{mE}$ into \eqref{Eq:closedloop_Load}, requiring $\check{A}_m \check{x}_{Em} + \hat{\check{B}}_m \check{\eta}_{mE} + \check{\theta}_m= 0,$ yields:
\begin{equation}\label{Eq:load_error_dynamic}
    \dot{\tilde{\check{x}}}_m = \check{A}_m \tilde{\check{x}}_m + \hat{\check{B}}_m \tilde{\check{\eta}}_m,
\end{equation}
where load input error (using \eqref{Eq:Load_Control_Equil} and  $\check{w}_m$) is:
\begin{equation}\label{Eq:Load_control_Error_Equil}
    \tilde{\check{\eta}}_m = \bm{\sum_{l\in\E_m} \bar{\check{C}}_{ml}\tilde{\bar{x}}_l \\ \check{w}_m}.
\end{equation}

\subsubsection{\textbf{Performance Outputs and Disturbance Inputs}}
For dissipativity-based analysis, we define performance outputs: $z_i(t) \triangleq H_i \tilde{y}_i(t),$ for DGs with $H_i \triangleq \I_4$, $\bar{z}_l(t) \triangleq \bar{H}_l \tilde{\bar{x}}_l(t)$ for line with $\bar{H}_l = \I_2$, and $\check{z}_m (t) = \check{H}_m \tilde{\check{x}}_m(t)$ for loads with $\check{H}_m = \I_2$. Vectorizing yields:
\begin{equation*}\label{Eq:Error_Performance}
z = H \tilde{y}, \quad 
\bar{z} = \bar{H} \tilde{\bar{x}}, \quad 
\check{z} = \check{H} \tilde{\check{x}},
\end{equation*}
where $H \triangleq \diag(\bm{H_i}_{i\in\N_N})$, $\bar{H} \triangleq \diag(\bm{\bar{H}_l}_{l\in\N_L})$, and $\check{H} \triangleq \diag(\bm{\check{H}_m}_{m\in\N_M})$. Consolidating performance outputs:
$$
z_c 
\triangleq \bm{z^\T & \bar{z}^\T & \check{z}^\T}^\T 
= \bm{H_c & \bar{H}_c & \check{H}_c} 
\bm{\tilde{y}^\T & \tilde{\bar{x}}^\T & \tilde{\check{x}}^\T}^\T,
$$
where 
$$
H_c \triangleq \bm{H^\T \negthickspace&\negthickspace \0 \negthickspace&\negthickspace \0}^\T\negthickspace,
\bar{H}_c \triangleq \bm{\0 \negthickspace&\negthickspace \bar{H}^\T \negthickspace&\negthickspace \0}^\T\negthickspace,\check{H}_c \triangleq \bm{\0 \negthickspace&\negthickspace \0 \negthickspace&\negthickspace \check{H}^\T}^\T.
$$

Consolidating disturbance inputs:
$$
w_c 
\triangleq \bm{w & \bar{w} & \check{w}}^\T 
= 
\bm{E_c & \bar{E}_c & \check{E}_c}^\T 
\bm{\tilde{y} & \tilde{\bar{x}} & \tilde{\check{x}}}^\T,
$$
where 
$$
E_c \triangleq \bm{E & \0 & \0},\ 
\bar{E}_c \triangleq \bm{\0 & \bar{E} & \0},\ \check{E}_c \triangleq \bm{\0 & \0 & \check{E}},
$$
with:
$
E \triangleq \diag([\hat{E}_i]_{i\in\N_N})
$,
$ 
\bar{E} \triangleq \diag([\hat{\bar{E}}_l]_{l\in\N_L})
$, and 
$
\check{E} \triangleq \diag([\hat{\check{E}}_m]_{m\in\N_M})
$,
where
$$
\hat{E}_i \triangleq \bm{\0 & \0 & \I}^\T, \
\hat{\bar{E}}_l \triangleq \bm{\0 & \0 & \I}^\T, \ 
\hat{\check{E}}_m \triangleq \bm{\0 & \I }^\T.
$$

\subsubsection{\textbf{Networked System Error Dynamics Representation}}
For DG error subsystems, vectorizing \eqref{Eq:DG_error_dynamic}, \eqref{Eq:DG_control_Error_Equil}, and \eqref{Eq:Error_output} across all $i\in\N_N$ yields:
\begin{equation*}\label{Eq:Error_Vectorized_DG}
\begin{cases}
    \dot{\tilde{x}} = \tilde{A}\tilde{x} + \hat{B}\tilde{\eta} + G(\tilde{x}),\quad
    \tilde{\eta} = \tilde{K}\tilde{y} + \bar{C}\tilde{\bar{x}} + E w,\\
    \tilde{y} = C^y \tilde{x} + D^y \tilde{\eta}, 
\end{cases}
\end{equation*}
where we define $\tilde{x} \triangleq [\tilde{x}_i^\top]_{i \in \N_N}^\T$, 
$\tilde{\eta} \triangleq [\tilde{\eta}_i^\T]_{i \in \N_N}^\top$, $\tilde{y} \triangleq [\tilde{y}_i^\top]_{i \in \N_N}^\T$, $\tilde{\bar{x}} \triangleq [\tilde{\bar{x}}_l^\T]_{l \in \N_L}^\top$, the matrices: 
$\tilde{A} \triangleq \diag([\tilde{A}_i]_{i \in \N_N})$, $\bar{C} \triangleq \bm{\0 & \bar{C}_{il} & \0}^\T_{i\in\N_N, l\in\N_L}$,
$C^y \triangleq \diag([C_i^y]_{i \in \N_N})$, 
$D^y \triangleq \diag([D_i^y]_{i \in \N_N})$, and $G(\tilde{x}) \triangleq [g_i^\T(\tilde{x}_i)]_{i \in \N_N}^\top$.

For line error subsystems, vectorizing \eqref{Eq:line_error_dynamic} and \eqref{Eq:Line_control_Error_Equil} across all $l\in\N_L$ yields:
\begin{equation*}\label{Eq:Error_Vectorized_Line}
\dot{\tilde{\bar{x}}} = \bar{A}\tilde{\bar{x}} + \hat{\bar{B}}\tilde{\bar{\eta}}, \quad
\tilde{\bar{\eta}} = C\tilde{y} + \check{C}\tilde{\check{x}} + \bar{E}\bar{w},
\end{equation*}
where we define  
$\tilde{\bar{\eta}} \triangleq [\tilde{\bar{\eta}}_l^\T]_{l \in \N_L}^\T$ and  $\tilde{\check{x}} \triangleq [\tilde{\check{x}}_m^\T]_{m \in \N_M}^\T$, $C \triangleq \bm{C_{lk} & \0 & \0}^\T_{l\in\N_L, k\in\N_N},  
\check{C} \triangleq \bm{\0 & \check{C}_{lm} & \0}^\T_{l\in\N_L, m\in\N_M}$.

For load error subsystems, vectorizing \eqref{Eq:load_error_dynamic} and \eqref{Eq:Load_control_Error_Equil} across all $m\in\N_M$ yields:
\begin{equation*}\label{Eq:Error_Vectorized_Load}
\dot{\tilde{\check{x}}} = \check{A}\tilde{\check{x}} + \hat{\check{B}}\tilde{\check{\eta}}, \quad
\tilde{\check{\eta}} = \bar{\check{C}}\tilde{\bar{x}} + \check{E}\check{w},
\end{equation*}
where 
$\tilde{\check{\eta}} \triangleq [\tilde{\check{\eta}}_m^\T]_{m \in \N_M}^\T$, $\bar{\check{C}} = \bm{\bar{\check{C}}_{ml} & \0}^\T_{m\in\N_M, l\in\N_L}$.

The interconnection among error subsystems, shown in Fig. \ref{Error_Networked_system_ACMG}, is:
\begin{equation*}
    \bm{\tilde{\eta}^\T & \tilde{\bar{\eta}}^\T & \tilde{\check{\eta}}^\T &  z_c^\T}^\T = \tilde{M} \bm{\tilde{y}^\T & \tilde{\bar{x}}^\T & \tilde{\check{x}}^\T & w_c^\T}^\T,
\end{equation*}
where $\tilde{M}$ is the connection matrix:
\begin{equation}
    \tilde{M} \triangleq
\begin{bmatrix}\label{Eq:Error_MMmatrix}
\tilde{K} & \bar{C} & \0 & E_c \\
C & \0 & \check{C} & \bar{E}_c \\
\0 & \bar{\check{C}} & \0 & \check{E}_c \\
H_c & \bar{H}_c & \check{H}_c & \0
\end{bmatrix}.
\end{equation}

\begin{figure}
    \centering
    \includegraphics[width=0.98\columnwidth]{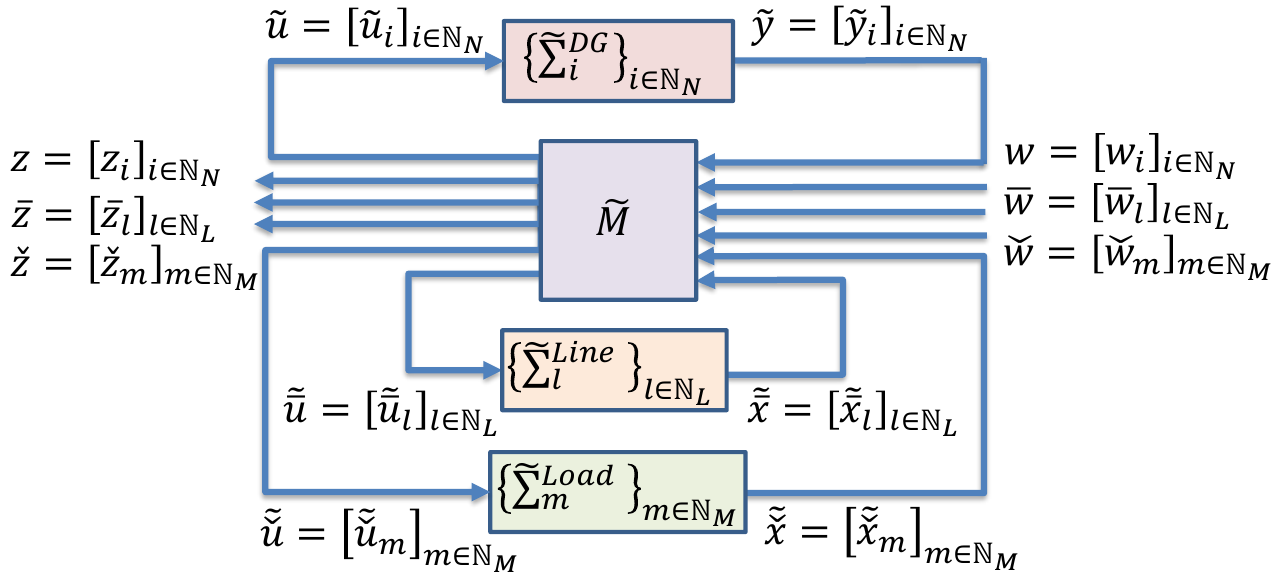}
    \caption{Error networked system configuration showing interconnections between DGs, lines, and loads error dynamics in AC MG.}
    \label{Error_Networked_system_ACMG}
\end{figure}

\subsection{Equilibrium Point Analysis of the AC MG}
The equilibrium point analysis enables steady-state controller design to ensure the AC MG achieves simultaneous voltage regulation, frequency synchronization, and proportional power sharing simultaneously.

The incidence matrix $\mathcal{B}\in\R^{(N+M)\times L}$ defined earlier is partitioned as $\mathcal{B} = \bm{\mathcal{B}_D^\T & \mathcal{B}_L^\T}^\T$, where $\mathcal{B}_D \in \R^{N\times L}$  corresponds to DG 
connections and $\mathcal{B}_L \in \R^{M \times L}$ to load connections. The system parameter matrices are $C_t \triangleq \diag([C_{ti}]_{i\in\N_N})$, $L_t \triangleq \diag([L_{ti}]_{i\in\N_N})$, $R_t \triangleq \diag([R_{ti}]_{i\in\N_N})$, $Y_L \triangleq \diag([Y_{Li}]_{i\in\N_N})$, $R \triangleq \diag([R_l]_{l\in\N_L})$, $L \triangleq \diag([L_l]_{l\in\N_L})$, $\check{C}_t \triangleq \diag([C_{tm}]_{m\in\N_M})$, and $\check{Y}_L \triangleq \diag([Y_{Lm}]_{m\in\N_M})$. The complex impedance and admittance matrices are $Z^{dq} \triangleq \diag([R_l + \mathrm{i}\omega_0L_l]_{l\in\N_L})$, $Z_t^{dq} \triangleq \diag([R_{ti}+\mathrm{i}\omega_0 L_{ti}]_{i\in\N_N})$, and $\check{Y}^{dq} \triangleq \diag([Y_{Lm} + \mathrm{i}\omega_0C_{tm}]_{m\in\N_M})$, with $\Omega_0 \triangleq \diag([\mathrm{i}\omega_0]_{i\in\N_N})$.  

\begin{lemma}\label{Lm:Equilibrium_point}
Assuming zero-mean unknown disturbances vanish, i.e., $w_i(t)=0, \forall i\in\N_N,\ \bar{w}_l(t)=0, \forall l\in\N_L,\ \check{w}_m=0, \forall m \in\N_M$, for a given reference voltage vector $V_r \triangleq [V_{i,ref}^{dq}]_{i\in\N_N}$, constant current load vectors $\bar{I}_L \triangleq [\bar{I}_{Li}^{dq}]_{i\in\N_N}$ and $\check{\bar{I}}_L \triangleq [\bar{I}_{Lm}^{dq}]_{m\in\N_M}$, under a fixed control input $u_E^{Vdq} \triangleq [u_{Ei}^{Vdq}]_{i\in\N_N}$ defined as:
\begin{equation}\label{Eq:Equil_control}
     u_E^{Vdq} = V_r + Z_t^{dq} \big(Y_LV_r+ \Omega_0C_t V_r + \bar{I}_L + \mathcal{B}_DI_E^{dq}\big),
\end{equation}
there exists a unique equilibrium point for the AC MG characterized by reference voltage vector $V_r$, constant current load vectors $\bar{I}_L$ and $\check{\bar{I}}_L$, given by: 
\begin{subequations}\label{Eq:Equil}
\begin{align}
    V_E^{dq} &= V_r, \label{Eq:Equil_voltage}\\
    \tilde{\omega}_E &= \0, \label{Eq:Equil_frequency}\\
     I_{tE}^{dq} &= (Z_t^{dq})^{-1}(u_E^{dq} - V_E^{dq}), \label{Eq:Equil_DGcurrent}\\
    I_E^{dq} &= (Z^{dq})^{-1} \mathcal{B}_D^\T V_E^{dq} + (Z^{dq})^{-1}\mathcal{B}_L^\T \check{V}_E^{dq}, \label{Eq:Equil_linecurrent}\\
    \check{V}_E^{dq} &= -(\check{Y}^{dq})^{-1}(\check{\bar{I}}_L^{dq} + \mathcal{B}_L I_E^{dq}), \label{Eq:Equil_load}
\end{align}
\end{subequations}
where we define the equilibrium state vectors $V_E^{dq}\triangleq [V_{Ei}^{dq}]_{i\in\N_N}\in\R^{2N}$ (DG voltages), $\tilde{\omega}_E \triangleq [\tilde{\omega}_{Ei}]_{i\in\N_N}\in \R^N$ (frequency deviations), $I_{tE}^{dq} \triangleq [I_{tEi}^{dq}]_{i\in\N_N}\in\R^{2N}$ (DG filter currents), $I_E^{dq} \triangleq [I_{El}^{dq}]_{l\in\N_N}\in\R^{2L}$ (line currents), $\check{V}_E^{dq} \triangleq [V_{mE}^{dq}]_{m\in\N_M} \in \R^{2M}$ (load voltages).
\end{lemma}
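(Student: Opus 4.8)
The plan is to impose stationarity on the closed-loop subsystem dynamics derived in Section~\ref{Sec:Networked_system} (i.e., set $\dot{x}_i=\0$, $\dot{\bar{x}}_l=\0$, $\dot{\check{x}}_m=\0$ with all zero-mean disturbances removed), solve the resulting algebraic system block by block in a carefully chosen order, and finally argue that each block is uniquely resolved. Throughout I use the affine load model (so that after the standard reduction $P_{Li}^{dq}=0$ the load terms are linear in the voltages) and the complex $dq$ representation, identifying every $\R^{2}$ quantity with its scalar counterpart in $\mathbb{C}$.

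First I would exploit the integrator states. Setting $\dot v_i^{dq}=0$ in \eqref{Eq:closedloop_dynamics} gives $\dot v_i^{dq}=V_i^{dq}-V_{i,ref}^{dq}=0$, hence $V_{Ei}^{dq}=V_{i,ref}^{dq}$ for all $i$, which stacks to \eqref{Eq:Equil_voltage} and is consistent with fixing $v_{Ei}^{dq}=\0$ as required for the error coordinates. Setting $\dot{\tilde\omega}_i=0$ gives $\tilde\omega_{Ei}=u_{Ei}^{\Omega}$, where $u_{Ei}^{\Omega}$ is the normalized-power consensus term \eqref{eq:frequency_control} evaluated at the (already pinned) equilibrium powers. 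The key observation is that the prescribed steady-state voltage command \eqref{Eq:Equil_control} is written in terms of the \emph{nominal} frequency $\omega_0$ (through $Z_t^{dq}$ and $\Omega_0$) rather than the actual rotation rate $\omega_0+\tilde\omega_{Ei}$; substituting $V_{Ei}^{dq}=V_r$ and this command into the two remaining DG balance equations ($\dot V_i^{dq}=0$, $\dot I_{ti}^{dq}=0$) yields a $\mathbb{C}$-valued identity that, after collecting terms, takes the form $\mathrm{i}\,\tilde\omega_{Ei}\cdot(\text{nonzero affine expression in }\tilde\omega_{Ei})=0$; since a single real unknown cannot generically annihilate a complex quantity, this forces $\tilde\omega_{Ei}=0$, i.e.\ \eqref{Eq:Equil_frequency}, and hence, via $\tilde\omega_{Ei}=u_{Ei}^{\Omega}$, that the consensus term vanishes (proportional power sharing at the equilibrium).

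With $V_E^{dq}=V_r$ and $\tilde\omega_E=\0$ in hand, the remaining stationarity conditions are linear. From $\dot I_{ti}^{dq}=0$ one solves $I_{tEi}^{dq}=(R_{ti}+\mathrm{i}\omega_0 L_{ti})^{-1}(u_{Ei}^{Vdq}-V_{Ei}^{dq})$, which stacks to \eqref{Eq:Equil_DGcurrent} ($Z_t^{dq}$ is invertible since $R_{ti},L_{ti},\omega_0>0$). From $\dot{\bar{x}}_l=0$ in \eqref{eq:line_dynamics} one gets $I_{El}^{dq}=(R_l+\mathrm{i}\omega_0 L_l)^{-1}\sum_{k\in\mathcal{N}^P_l}\mathcal{B}_{kl}V_{Ek}^{dq}$, and splitting the incidence matrix as $\mathcal{B}=[\mathcal{B}_D^\T\ \mathcal{B}_L^\T]^\T$ stacks this to \eqref{Eq:Equil_linecurrent}; from $\dot V_m^{dq}=0$ in \eqref{Eq:Load_Dynamic} one gets $(Y_{Lm}+\mathrm{i}\omega_0 C_{tm})V_{Em}^{dq}=-(\bar I_{Lm}^{dq}+I_{Em}^{dq})$ with $I_{Em}^{dq}=\sum_{l\in\mathcal{E}_m}\mathcal{B}_{ml}I_{El}^{dq}$, which stacks to \eqref{Eq:Equil_load}. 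Substituting \eqref{Eq:Equil_DGcurrent} and \eqref{Eq:Equil_linecurrent} back into the stationarity condition for $\dot V_i^{dq}$ reproduces exactly \eqref{Eq:Equil_control}; equivalently, the relations \eqref{Eq:Equil} together with \eqref{Eq:Equil_control} (absorbed in $\theta_i$ via the steady-state controller $u_{iS}$) satisfy the equilibrium condition \eqref{Eq:Equil_condition}, confirming that the candidate is indeed an equilibrium.

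For uniqueness, the only truly coupled block is the line--load pair: eliminating $\check V_E^{dq}$ between \eqref{Eq:Equil_linecurrent} and \eqref{Eq:Equil_load} gives $\big(Z^{dq}+\mathcal{B}_L^\T(\check Y^{dq})^{-1}\mathcal{B}_L\big)I_E^{dq}=\mathcal{B}_D^\T V_r-\mathcal{B}_L^\T(\check Y^{dq})^{-1}\check{\bar I}_L^{dq}$, and I would show the coefficient matrix is nonsingular by checking that its Hermitian part equals $\diag([R_l]_l)+\mathcal{B}_L^\T\diag\!\big(\tfrac{Y_{Lm}}{Y_{Lm}^2+\omega_0^2 C_{tm}^2}\big)\mathcal{B}_L\succ 0$, since the first term is positive definite ($R_l>0$) and the second positive semidefinite ($Y_{Lm}\ge 0$, $C_{tm}>0$); this determines $I_E^{dq}$ uniquely, then $\check V_E^{dq}$ and $I_{tE}^{dq}$ uniquely, while $V_E^{dq}$, $\tilde\omega_E$, $v_E^{dq}$ were already pinned. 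I expect the main obstacle to be the frequency step of paragraph two --- rigorously arguing that the $\omega_0$-based steady-state command is compatible with $V_E^{dq}=V_r$ only when $\tilde\omega_E=\0$, and hence $u_E^{\Omega}=\0$ --- together with the nonsingularity argument for the line--load coefficient matrix over $\mathbb{C}$ (equivalently, for its $\R^{2\times 2}$-block realization).
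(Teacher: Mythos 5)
Your proposal is correct and reaches the same equilibrium relations by the same block-by-block stationarity decomposition (integrator state pins $V_E^{dq}=V_r$, then linear solves for $I_{tE}^{dq}$, $I_E^{dq}$, $\check{V}_E^{dq}$), but it differs from the paper's proof in two substantive places, and in both you do somewhat more work. First, for \eqref{Eq:Equil_frequency} the paper argues through the frequency equation \eqref{eq:equi_frequency}: it asserts that at equilibrium the normalized powers agree, $P_{n,Ei}=P_{n,Ej}$, so each consensus term in \eqref{eq:frequency_control} vanishes, giving $u_{Ei}^\Omega=0$ and hence $\tilde\omega_{Ei}=0$; this presupposes that the operating point has been designed for proportional power sharing (which the paper only enforces later, in Remark 2 and Theorem 1). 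You instead extract $\tilde\omega_{Ei}=0$ from the voltage/current balance \eqref{eq:equi_voltage}--\eqref{eq:equi_current} together with the $\omega_0$-based command \eqref{Eq:Equil_control}: eliminating $I_{tEi}^{dq}$ indeed yields $\mathrm{i}\tilde\omega_{Ei}\big(Z_{t,i}^{dq}C_{ti}V_{i,ref}^{dq}+L_{ti}((Y_{Li}+\mathrm{i}\omega_0C_{ti})V_{i,ref}^{dq}+\bar I_{Li}^{dq}+I_{Ei}^{dq})+\mathrm{i}\tilde\omega_{Ei}L_{ti}C_{ti}V_{i,ref}^{dq}\big)=0$, so your factorization is valid; the only caveat is that ruling out the second factor is a genericity argument (a degenerate parameter combination could make the bracket vanish at some real $\tilde\omega_{Ei}\neq 0$), whereas the paper's route avoids this at the price of assuming power-sharing consensus at equilibrium. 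Your route then turns $u_{Ei}^\Omega=0$ into a consistency requirement rather than an assumption, which is arguably cleaner but should be stated as such. Second, the paper's uniqueness claim is asserted without proof ("the coupled equations uniquely determine all equilibrium state variables"), whereas you actually close the loop: eliminating $\check V_E^{dq}$ between \eqref{Eq:Equil_linecurrent} and \eqref{Eq:Equil_load} and checking that $Z^{dq}+\mathcal{B}_L^\T(\check Y^{dq})^{-1}\mathcal{B}_L$ has positive definite Hermitian part (since $R_l>0$ and $\mathrm{Re}\,(\check Y^{dq})^{-1}\geq 0$) is exactly the missing nonsingularity argument, and it is correct. Net: same skeleton, a different (genericity-based rather than consensus-based) justification of the frequency step, and a uniqueness argument the paper omits.
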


\begin{proof}
At equilibrium, all state derivatives in the closed-loop system vanish. We establish the equilibrium conditions by analyzing each subsystem type. 

\subsubsection{\textbf{DG Subsystems}} The equilibrium state of the closed-loop DG dynamics of \eqref{Eq:Equil_condition} satisfies:
\begin{equation*}
    \hat{A}_i x_{Ei} + \hat{B}_i \eta_{Ei} + \theta_i = \0,
\end{equation*}
where $x_{Ei} \triangleq \bm{V_{Ei}^d & V_{Ei}^q & I_{tEi}^d & I_{tEi}^q & v_{Ei}^d & v_{Ei}^q & \tilde{\omega}_{Ei}}^\T$ represents the equilibrium state, $\eta_{Ei}$ represents the equilibrium effective input from \eqref{Eq:DG_Control_Equil}, and using $\hat{A}_i$ and $\hat{B}_i$ defined in \eqref{Eq:closedloop_DG}, we obtain  from the respective rows:
\begin{subequations}
\begin{align}
    &-\frac{Y_{Li}}{C_{ti}}V_{Ei}^{dq} \negthickspace-\negthickspace \mathrm{i}(\omega_0\negthickspace+\negthickspace\tilde{\omega}_{Ei})V_{Ei}^{dq} \negthickspace+\negthickspace \frac{I_{tEi}^{dq}}{C_{ti}} \negthickspace-\negthickspace \frac{\bar{I}_{Li}^{dq}}{C_{ti}} \negthickspace-\negthickspace \frac{I_{Ei}^{dq}}{C_{ti}} = 0, \label{eq:equi_voltage}\\
    &-\frac{V_{Ei}^{dq}}{L_{ti}} -\Big(\frac{R_{ti}}{L_{ti}}+\mathrm{i}(\omega_0+\tilde{\omega}_{Ei})\Big) I_{tEi}^{dq}  + \frac{u_{Ei}^{Vdq}}{L_{ti}} = 0, \label{eq:equi_current}\\
    &- V_{i,ref}^{dq} + V_{Ei}^{dq}  = 0, \label{eq:equi_integral}\\
    &-\frac{1}{\tau_i}\tilde{\omega}_{Ei} + \frac{1}{\tau_i}u_{Ei}^{\Omega} = 0. \label{eq:equi_frequency}
\end{align}
\end{subequations}

From \eqref{eq:equi_integral}, we directly obtain:
\begin{equation}\label{Eq:Vectorized_Equil_voltage}
   V_{Ei}^{dq} = V_{i,ref}^{dq},
\end{equation}
which vectorizes to establish \eqref{Eq:Equil_voltage}. From \eqref{eq:equi_frequency}, we have $\tilde{\omega}_{Ei} = u_{Ei}^\Omega$, where under the normalized power sharing control law \eqref{eq:frequency_control}:
\begin{equation*}
    u_{Ei}^\Omega = \sum_{j\in\mathcal{N}_i^C}K_{ij}^\Omega(P_{n,Ei} - P_{n,Ej}),
\end{equation*}
with $P_{n,Ei} = P_{Ei}/P_i^{\max}$ representing the normalized active power of $\Sigma_i^{DG}$. At equilibrium $P_{n,Ei} = P_{n,Ej}$ for all $j\in\mathcal{N}_i^C$, which implies that each term in the summation vanishes. Therefore, $u_{Ei}^\Omega = 0$ and thus $\tilde{\omega}_{Ei} = 0$ for all $i\in\N_N$, establishing \eqref{Eq:Equil_frequency}.

With $\tilde{\omega}_{Ei} = 0$, \eqref{eq:equi_current} simplifies to:
\begin{equation*}
    -\frac{V_{Ei}^{dq}}{L_{ti}} -\Big(\frac{R_{ti}}{L_{ti}}+\mathrm{i}\omega_0\Big) I_{tEi}^{dq}  + \frac{u_{Ei}^{Vdq}}{L_{ti}} = 0,
\end{equation*}
which by multiplying by $L_{ti}$ and rearranging yields:
\begin{equation*}
    u_{Ei}^{Vdq} = V_{Ei}^{dq} + (R_{ti} + \mathrm{i}\omega_0L_{ti})I_{tEi}^{dq},
\end{equation*}
where defining the $i$-th diagonal element of $Z_t$ as $Z_{t,i}^{dq} \triangleq R_{ti} + \mathrm{i}\omega_0L_{ti}$, gives:
\begin{equation}\label{Eq:Equilibrium_Control_Input}
     u_{Ei}^{Vdq} = V_{Ei}^{dq} + Z_{t,i}^{dq} I_{tEi}^{dq}.
\end{equation}

Solving for the equilibrium filter current and substituting $V_{Ei}^{dq} = V_{i,ref}^{dq}$ from \eqref{Eq:Vectorized_Equil_voltage}:
\begin{equation*}
     I_{tEi}^{dq} = (Z_{t,i}^{dq})^{-1}(u_{Ei}^{Vdq} - V_{i,ref}^{dq}),
\end{equation*}
which vectorizes for $i\in\N_N$ yield:
\begin{equation}\label{Eq:Vectorized_Equil_DGcurrent}
    I_{tE}^{dq} = (Z_t^{dq})^{-1}(u_E^{Vdq} - V_r),
\end{equation}
establishing \eqref{Eq:Equil_DGcurrent}.

With $\tilde{\omega}_{Ei} = 0$, \eqref{eq:equi_voltage} can be written by multiplying by $C_{ti}$ and substituting $I_{Ei}^{dq} = \sum_{l\in\E_i}\mathcal{B}_{il}I_{El}^{dq}$:
\begin{equation*}
    -Y_{Li}V_{Ei}^{dq} - \mathrm{i}\omega_0C_{ti}V_{Ei}^{dq} + I_{tEi}^{dq} - \bar{I}_{Li}^{dq} - \sum_{l\in\E_i}\mathcal{B}_{il}I_{El}^{dq} = 0,
\end{equation*}
which rearranges to gives:
\begin{equation*}
    I_{tEi}^{dq} = (Y_{Li} + \mathrm{i}\omega_0C_{ti})V_{Ei}^{dq} + \bar{I}_{Li}^{dq} + \sum_{l\in\E_i}\mathcal{B}_{il}I_{El}^{dq}.
\end{equation*}

Substituting the expression for $I_{tEi}^{dq}$ from \eqref{Eq:Vectorized_Equil_DGcurrent} and $V_{Ei}^{dq} = V_{i,ref}^{dq}$ from \eqref{Eq:Vectorized_Equil_voltage} yields:
\begin{equation*}
    (Z_{t,i}^{dq})^{-1}(u_{Ei}^{Vdq} - V_{i,ref}^{dq}) \negthickspace=\negthickspace (Y_{Li} \!+\! \mathrm{i}\omega_0C_{ti})V_{i,ref}^{dq} \!+\! \bar{I}_{Li}^{dq} \!+\! \sum_{l\in\E_i}\mathcal{B}_{il}I_{El}^{dq},
\end{equation*}
which by multiplying both sides by $Z_{t,i}^{dq}$ and rearranging gives:
\begin{equation*}
    u_{Ei}^{Vdq} = V_{i,ref}^{dq} + Z_{t,i}^{dq}\Big((Y_{Li} + \mathrm{i}\omega_0C_{ti})V_{i,ref}^{dq} + \bar{I}_{Li}^{dq} + \sum_{l\in\E_i}\mathcal{B}_{il}I_{El}^{dq}\Big).
\end{equation*}

Note that $\sum_{l\in\E_i}\mathcal{B}_{il}I_{El}^{dq} = \bm{\mathcal{B}_D I_{El}^{dq}}_i$ is the $i$-th component of the vector $\mathcal{B}_D I_{El}^{dq}$. Vectorizing for all $i\in\N_N$ with $\Omega_0 \triangleq \diag(\bm{\mathrm{i}\omega_0}_{i\in\N_N})$:
\begin{equation*}
    u_E^{Vdq} = V_r + Z_t^{dq} \Big((Y_L + \Omega_0C_t)V_r + \bar{I}_L + \mathcal{B}_DI_E^{dq}\Big),
\end{equation*}
which can be rewritten as:
\begin{equation*}
    u_E^{Vdq} = V_r + Z_t^{dq} \big(Y_LV_r+ \Omega_0C_t V_r + \bar{I}_L + \mathcal{B}_DI_E^{dq}\big),
\end{equation*}
establishing \eqref{Eq:Equil_control}.

\subsubsection{\textbf{Line Subsystems}} The equilibrium state of the line dynamics from \eqref{Eq:line_error_dynamic} satisfies:
\begin{equation*}
    \bar{A}_l\bar{x}_{El} + \hat{\bar{B}}_l \bar{\eta}_{lE} = 0,
\end{equation*}
where $\bar{x}_{El} = I_{El}^{dq}$ and $\bar{\eta}_{lE}$ is the equilibrium effective input of \eqref{Eq:Line_control_Error_Equil} given by:
\begin{equation*}
    \bar{\eta}_{lE} = \bm{\sum_{k\in\mathcal{N}_l^P\cap\mathcal{D}} C_{lk} y_{Ek} \\ \sum_{m\in\mathcal{N}_l^P\cap \mathcal{L}}\check{C}_{lm} \check{x}_{Em}\\
    \0}.
\end{equation*}

We can extract voltages as
$
\sum_{k\in\mathcal{N}_l^P} C_{lk} y_{Ek} = \sum_{k\in\mathcal{N}_l^P} \mathcal{B}_{kl} V_{Ek}^{dq},
$
$
\sum_{m\in\mathcal{N}_l^P} \check{C}_{lm} \check{x}_{Em} = \sum_{m\in\mathcal{N}_l^P} \mathcal{B}_{ml} V_{mE}^{dq},
$
respectively. Using $\bar{A}_l$ and $\bar{B}_l$ from \eqref{eq:line_matrices} gives:
\begin{equation*}
    -\Big(\frac{R_l}{L_l}+\mathrm{i}\omega_0\Big)I_{El}^{dq} + \frac{1}{L_l}\sum_{k\in\mathcal{N}_l^P}\mathcal{B}_{kl}V_{Ek}^{dq} = 0,
\end{equation*}
which by multiplying by $L_l$ and rearranging:
\begin{equation*}
\big(R_l+\mathrm{i}\omega_0L_l\big)I_{El}^{dq} = \sum_{k\in\mathcal{N}_l^P}\mathcal{B}_{kl}V_{Ek}^{dq},
\end{equation*}
where defining the $l$-th diagonal element of $Z^{dq}$ as $Z_l^{dq} \triangleq R_l + \mathrm{i}\omega_0L_l$ gives:
\begin{equation*}\label{Eq:line_Equil}
    I_{El}^{dq} = (Z_l^{dq})^{-1}\sum_{k\in\mathcal{N}_l^P}\mathcal{B}_{kl}V_{Ek}^{dq}.
\end{equation*}

This summation can be decomposed as 
\begin{align*}
\sum_{k\in\mathcal{N}_l^P}\mathcal{B}_{kl}V_{Ek}^{dq} &= \sum_{i\in\mathcal{N}_l^P\cap\mathcal{D}}\mathcal{B}_{il}V_{Ei}^{dq} + \sum_{m\in\mathcal{N}_l^P\cap\mathcal{L}}\mathcal{B}_{ml}V_{mE}^{dq}\\&= \bm{\mathcal{B}_D^\T V_E^{dq}}_l + \bm{\mathcal{B}_L^\T V_{mE}^{dq}}_l,
\end{align*}
where the first term corresponds to DG contributions and the second to load contributions. Therefore:
\begin{equation*}
    I_{El}^{dq} = (Z_l^{dq})^{-1}\Big(\bm{\mathcal{B}_D^\T V_E^{dq}}_l + \bm{\mathcal{B}_L^\T V_{mE}^{dq}}_l\Big),
\end{equation*}
which vectorizes for all $l\in\N_L$ yields: 
\begin{equation*}
    I_{El}^{dq} = (Z^{dq})^{-1} \mathcal{B}_D^\T V_E^{dq} + (Z^{dq})^{-1} \mathcal{B}_L^\T V_{mE}^{dq},
\end{equation*}
establishing \eqref{Eq:Equil_linecurrent}.

\subsubsection{\textbf{Load Subsystems}} The equilibrium state of the load dynamics from \eqref{Eq:load_error_dynamic} satisfies:
\begin{equation*}
\check{A}_m \check{x}_{Em} + \hat{\check{B}}_m \check{\eta}_{mE} + \check{\theta}_m= 0,
\end{equation*}
where $\check{x}_{Em} = V_{mE}^{dq}$ and $\check{\eta}_{mE}$ is the equilibrium effective input of \eqref{Eq:Load_control_Error_Equil} given by:
\begin{equation*}
    \check{\eta}_{mE} = \bm{\sum_{l\in\E_m} \bar{\check{C}}_{ml}\bar{x}_{El} \\ \0}.
\end{equation*}

We obtain
$
\sum_{l\in\E_m} \bar{\check{C}}_{ml}\bar{x}_{El} = \sum_{l\in\E_m} \mathcal{B}_{ml}I_{El}^{dq},
$
which using $\check{A}_m$ and $\hat{B}_m$ from \eqref{eq:load_matrices} gives:
\begin{equation*}
    -\Big(\frac{Y_{Lm}}{C_{tm}} + \mathrm{i}\omega_0\Big) V_{mE}^{dq} - \frac{1}{C_{tm}}\Big(\sum_{l\in\E_m} \mathcal{B}_{ml}I_{El}^{dq} + \bar{I}_{Lm}^{dq}\Big) = 0.
\end{equation*}

Multiplying by $C_{tm}$ and rearranging yields:
\begin{equation*}
    -(Y_{Lm} + \mathrm{i}\omega_0 C_{tm}) V_{mE}^{dq}  = \sum_{l\in\E_m}\mathcal{B}_{ml}I_{El}^{dq} + \bar{I}_{Lm}^{dq},
\end{equation*}
where defining the $m$-th diagonal element of $Y^{dq}$ as 
$Y_m^{dq} \triangleq Y_{Lm} + \mathrm{i}\omega_0C_{tm}$ gives:
\begin{equation*}\label{Eq:voltage_load}
    V_{mE}^{dq} = -(Y_m^{dq})^{-1}\Big(\bar{I}_{Lm}^{dq} + \sum_{l\in\E_m}\mathcal{B}_{ml}I_{El}^{dq}\Big).
\end{equation*}

Note that $\sum_{l\in\E_m}\mathcal{B}_{ml}I_{El}^{dq} = \bm{\mathcal{B}_L I_{El}^{dq}}_m$ is the $m$-th component of $\mathcal{B}_L I_{El}^{dq}$. Vectorizing for all $m\in\N_M$ yields:
\begin{equation*}
    V_{mE}^{dq} = -(Y^{dq})^{-1}\Big(\bar{I}_{Lm}^{dq} + \mathcal{B}_L I_E^{dq}\Big),
\end{equation*}
establishing \eqref{Eq:Equil_load}.

This completes the proof. The coupled equations \eqref{Eq:Equil_control} and \eqref{Eq:Equil} uniquely determine all equilibrium state variables.
\end{proof}

\begin{remark}\label{Rm:powersharing}
At equilibrium, proportional power sharing requires:
\begin{equation}\label{Eq:powersharing}
\begin{aligned}
    \frac{P_{Ei}}{P_i^{\max}} &= P_s \Longleftrightarrow P_{Ei} = P_i^{\max} P_s, \ \forall i\in\N_N,\\
    \frac{Q_{Ei}}{Q_i^{\max}} &= Q_s \Longleftrightarrow Q_{Ei} = Q_i^{\max} Q_s, \ \forall i\in\N_N,
\end{aligned}
\end{equation}
which vectorizes as $P_E = P_{\max} \mathbf{1}_N P_s$ and $Q_E = Q_{\max} \mathbf{1}_N Q_s$, where $P_{\max}\triangleq\diag([P_i^{\max}]_{i\in\N_N})$ and $Q_{\max}\triangleq\diag([Q_i^{\max}]_{i\in\N_N})$. Under this requirement, the distributed control components vanish at equilibrium:
$
u_{iG,E}^{dq} = \begin{bmatrix} u_{iE}^P & u_{iE}^Q & u_{iE}^\Omega \end{bmatrix}^\T = \mathbf{0},
$
since proportional power sharing is achieved. From Lemma \ref{Lm:Equilibrium_point}, we obtain:
$
u_{Ei}^{Vdq} = V_{i,ref}^{dq} + Z_{t,i}^{dq} I_{tEi}^{dq}.
$

Therefore, to achieve this equilibrium satisfying voltage regulation, frequency synchronization, and proportional power sharing, the steady-state controller in \eqref{eq:combined_control_dq} is selected as:
\begin{equation*}\label{Eq:Rm:SteadyStateControlAC}
    u_{iS}^{dq} = V_{i,ref}^{dq} + Z_{t,i}^{dq} I_{tEi}^{dq}, \quad \forall i\in\N_N,
\end{equation*}
where $Z_{t,i}^{dq} \triangleq R_{ti} + \mathrm{i}\omega_0 L_{ti}$ and $I_{tEi}^{dq}$ is determined by the equilibrium \eqref{Eq:Equil_DGcurrent} under the power sharing constraint \eqref{Eq:powersharing}. 
\end{remark}

\begin{theorem}
To guarantee the existence of an equilibrium point satisfying voltage regulation, frequency synchronization, and proportional power sharing objectives, the reference voltages $V_r\in\R^{2N}$ and active power sharing coefficients $\mathbf{P}_s = \bm{P_{s,1}, \ldots, P_{s,N}}^\T \in \mathbb{R}^N$ must be feasible solutions of the following optimization problem:
\begin{equation}
\begin{aligned}
&\min_{V_r,\mathbf{P}_s} \alpha_V \| V_r - \bar{V}_r\|^2 + \sum_{i=1}^N \left[\alpha_P (P_{s,i} - \bar{P}_s)^2 + \alpha_Q (Q_{s,i} - \bar{Q}_s)^2\right]\\
&\mbox{s.t.:}\  V_{\min} \leq |V_{i,ref}^{dq}| \leq V_{\max},
\quad 0 \leq P_{s,i} \leq 1,\ \forall i\in\mathcal{N}_N,\\
& P_i = P_{\max,i} \cdot P_{s,i}, \ \forall i\in\mathcal{N}_N,\\
&I_E^{dq} = (M^{dq})^{-1}\bm{(Z^{dq})^{-1} \mathcal{B}_D^\T V_r - (Z^{dq})^{-1} \mathcal{B}_L^\T (\check{Y}^{dq})^{-1} \check{\bar{I}}_L^{dq}},
\end{aligned}
\end{equation}
where $V_r = \bm{V_{1,ref}^{dq},...,V_{N,ref}^{dq}}^\T \in\R^{2N}$ stacks the DG reference voltages with $V_{i,ref}^{dq} = \bm{V_{i,ref}^d & V_{i,ref}^q}^\T$, $\bar{V}_r = \bm{\bar{V}_d & \0}^\T$ is the nominal reference where $\bar{V}_d = \bm{V_{d,nom,1},...,V_{d,nom,N}}^\T$ and $q$-axis references are zero. $P_i = V_{i,ref}^d I_i^d + V_{i,ref}^q I_i^q$ is the active power generated by $\Sigma_i^{DG}$ at equilibrium, $P_{s,i} = P_i/P_{\max,i} \in [0,1]$ is the normalized active power coefficient for $\Sigma_i^{DG}$, $Q_{s,i} = Q_i/Q_{\max,i}$ is the normalized reactive power coefficient where $Q_i = V_{i,ref}^q I_i^d - V_{i,ref}^d I_i^q$, 
$\bar{P}_s = \frac{1}{N}\sum_{i=1}^N P_{s,i}$ and $\bar{Q}_s = \frac{1}{N}\sum_{i=1}^N Q_{s,i}$ are the average normalized coefficients, $I_i^{dq} = \bm{\mathcal{B}_D I_E^d & \mathcal{B}_D I_E^q}_i^\T\in\R^2$ is the current injection at $\Sigma_i^{DG}$ obtained from the network equations, $I_E^{dq}\in\R^{2L}$ are the line equilibrium currents with $d$- and $q$- axis components $I_E^d, I_E^q\in\R^L$,  $M^{dq} \triangleq \I + (Z^{dq})^{-1}\mathcal{B}_L^\T (\check{Y}^{dq})^{-1}\mathcal{B}_L$ is the network coupling matrix,
$\mathcal{B}_D \in \R^{N\times L}$, $\mathcal{B}_L \in \R^{M\times L}$ are the DG-to-line and load-to-line incidence matrices,
$Z^{dq} \triangleq \diag([R_l + \mathrm{i}\omega_0L_l]_{l\in\N_L})$ is the line impedance matrix, $\check{Y}^{dq} \triangleq \diag([Y_{Lm} + \mathrm{i}\omega_0C_{tm}]_{m\in\N_M})$ is the load admittance matrix, $\check{\bar{I}}_L^{dq}\in \R^{2M}$ are the constant load currents, and $\alpha_V, \alpha_P, \alpha_Q > 0$ are weighting coefficients.
\end{theorem}

\section{Dissipativity-Based Control and Topology Co-Design}\label{Passivity-based Control}
This section formulates the dissipativity-based control synthesis in two stages. First, we characterize subsystem passivity properties, then we develop the global co-design problem for distributed controllers and communication topology.

For the DG error subsystem $\tilde{\Sigma}_i^{DG},i\in\mathbb{N}_N$ from \eqref{Eq:DG_error_dynamic}, we assume that it is $X_i$-dissipative with the supply rate matrix
\begin{equation}\label{Eq:XEID_DG}
    X_i=\begin{bmatrix}
        X_i^{11} & X_i^{12} \\ X_i^{21} & X_i^{22}
    \end{bmatrix}\triangleq
    \begin{bmatrix}
        -\nu_i\mathbf{I} & \frac{1}{2}\mathbf{I} \\ \frac{1}{2}\mathbf{I} & -\rho_i\mathbf{I}
    \end{bmatrix},
\end{equation}
where $\rho_i > 0 $ and $\nu_i < 0$ are the passivity indices of $\tilde{\Sigma}_i^{DG}$, respectively, making it strictly passive with IF-OFP($\nu_i, \rho_i$). 

Similarly, for the line subsystem $\tilde{\Sigma}_l^{Line},l\in\mathbb{N}_L$ described by \eqref{Eq:line_error_dynamic}, we assume it is $\bar{X}_l$-dissipative with 
\begin{equation}\label{Eq:XEID_Line}
    \bar{X}_l=\begin{bmatrix}
        \bar{X}_l^{11} & \bar{X}_l^{12} \\ \bar{X}_l^{21} & \bar{X}_l^{22}
    \end{bmatrix}\triangleq
    \begin{bmatrix}
        -\bar{\nu}_l\mathbf{I} & \frac{1}{2}\mathbf{I} \\ \frac{1}{2}\mathbf{I} & -\bar{\rho}_l\mathbf{I}
    \end{bmatrix},
\end{equation}
where $\bar{\rho}_l < 0$ and $\bar{\nu}_l < 0$ are the passivity indices of $\tilde{\Sigma}_l^{Line}$, making it strictly passive with IF-OFP($\bar{\nu}_l, \bar{\rho}_l$).

Additionally,  for the load error subsystem $\tilde{\Sigma}_m^{Load}, m\in\N_M$ from \eqref{Eq:load_error_dynamic}, we assume it is $\check{X}_m$-dissipative with 
\begin{equation}\label{Eq:XEID_Load}
    \check{X}_m = \bm{\check{X}_m^{11} & \check{X}_m^{12} \\ \check{X}_m^{21} & \check{X}_m^{22}} \triangleq \bm{-\check{\nu}_m\I & \frac{1}{2}\I \\ \frac{1}{2}\I & -\check{\rho}_m\I},
\end{equation}
where $\check{\rho}_m > 0$ and $\check{\nu}_m < 0$ are the passivity indices of $\tilde{\Sigma}_m^{Load}$, making it strictly passive with IF-OFP($\check{\nu}_m, \check{\rho}_m$).

With these passivity characterizations established, we now formulate the control synthesis problems to determine the controller gains and passivity indices.

\begin{lemma}\label{Lm:quadratic_constraint}
For the frequency coupling nonlinearity $g_i(\tilde{x}_i)$ defined in \eqref{Eq:DG_error_dynamics}, there exists a scalar $\beta_i \triangleq \sqrt{ \bar{V}_{tot}^2 +  \bar{I}_{tot}^2}$ where $\bar{V}_{tot} = \bar{V} + \|V_{Ei}^{dq}\|$ and $\bar{I}_{tot} = \bar{I} + \|I_{tEi}^{dq}\|$ with $\bar{V},\bar{I} > 0$ such that for all $\tilde{x}_i$ in the operating region satisfying $\|\tilde{V}_i^{dq}\| \leq \bar{V}$ and $\|\tilde{I}_{ti}^{dq}\| \leq \bar{I}$, the norm bound
\begin{equation}\label{Eq:Quadratic_constraint}
    \|g_i(\tilde{x}_i)\|^2 \leq \tilde{\omega}_i^2 \cdot \beta_i^2,
\end{equation}
holds, which is equivalent to the quadratic constraint
\begin{equation}\label{Eq:quadratic_form_nonlinearity}
\begin{bmatrix}\tilde{x}_i \\
g_i(\tilde{x}_i)
\end{bmatrix}^\top
\Psi_i
\begin{bmatrix}
\tilde{x}_i \\
g_i(\tilde{x}_i)
\end{bmatrix} \geq 0,
\end{equation}
where:
\begin{equation}\label{Eq:Psi}
\Psi_i = \bm{\beta_i^2e_7e_7^\T & \0_{7\times 7} \\ \0_{7\times 7} & -\I_7} \in \R^{14 \times 14},
\end{equation}
and $e_7$ is the unit vector in $\R^{7\times 1}$ with a $1$ in the last element.
\end{lemma}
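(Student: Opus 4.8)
The plan is to reduce \eqref{Eq:Quadratic_constraint} to an elementary norm estimate and then recognize it as the expansion of the quadratic form in \eqref{Eq:quadratic_form_nonlinearity}. First I would exploit the sparse structure of $g_i(\tilde{x}_i)$ in \eqref{Eq:DG_error_dynamics}: only the first four entries are nonzero, the last three vanish, and scaling by the scalar $\tilde{\omega}_i$ (equivalently, the action of $\mathrm{i}$ on a $dq$-vector, which is a rotation of $\R^2$) multiplies the Euclidean norm by $|\tilde{\omega}_i|$. Hence
\begin{equation*}
\|g_i(\tilde{x}_i)\|^2 = \|\tilde{\omega}_i\tilde{V}_i^{dq}\|^2 + \|\tilde{\omega}_i\tilde{I}_{ti}^{dq}\|^2 = \tilde{\omega}_i^2\big(\|\tilde{V}_i^{dq}\|^2 + \|\tilde{I}_{ti}^{dq}\|^2\big).
\end{equation*}

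Next I would invoke the operating-region bounds $\|\tilde{V}_i^{dq}\|\le\bar{V}$ and $\|\tilde{I}_{ti}^{dq}\|\le\bar{I}$, together with $\bar{V}\le\bar{V}_{tot}$ and $\bar{I}\le\bar{I}_{tot}$ (immediate from $\bar{V}_{tot}=\bar{V}+\|V_{Ei}^{dq}\|$, $\bar{I}_{tot}=\bar{I}+\|I_{tEi}^{dq}\|$ and nonnegativity of the norms), to conclude
\begin{equation*}
\|g_i(\tilde{x}_i)\|^2 \le \tilde{\omega}_i^2\big(\bar{V}_{tot}^2+\bar{I}_{tot}^2\big) = \tilde{\omega}_i^2\beta_i^2 ,
\end{equation*}
which is precisely \eqref{Eq:Quadratic_constraint}. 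The constant $\beta_i^2 = \bar{V}_{tot}^2 + \bar{I}_{tot}^2$ is a deliberately conservative overbound that remains valid irrespective of the equilibrium values $V_{Ei}^{dq}, I_{tEi}^{dq}$, which is convenient since the subsequent synthesis LMIs are solved jointly with the operating point.

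To obtain the quadratic-form version, I would expand the left-hand side of \eqref{Eq:quadratic_form_nonlinearity} using the block-diagonal $\Psi_i$ of \eqref{Eq:Psi}:
\begin{equation*}
\bm{\tilde{x}_i \\ g_i(\tilde{x}_i)}^\top \Psi_i \bm{\tilde{x}_i \\ g_i(\tilde{x}_i)} = \beta_i^2\,(e_7^\top\tilde{x}_i)^2 - \|g_i(\tilde{x}_i)\|^2 = \beta_i^2\tilde{\omega}_i^2 - \|g_i(\tilde{x}_i)\|^2 ,
\end{equation*}
where the last equality uses that $e_7^\top\tilde{x}_i = \tilde{\omega}_i$ is the last component of $\tilde{x}_i$. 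Therefore \eqref{Eq:quadratic_form_nonlinearity} is equivalent to $\beta_i^2\tilde{\omega}_i^2 - \|g_i(\tilde{x}_i)\|^2\ge 0$, i.e.\ to \eqref{Eq:Quadratic_constraint}, which has just been established on the operating region.

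I do not anticipate any real obstacle in this argument; the only points that need care are (i) confirming that the $dq$/complex representation of $-\tilde{\omega}_i\tilde{V}_i^{dq}$ and $-\tilde{\omega}_i\tilde{I}_{ti}^{dq}$ has Euclidean norm equal to $|\tilde{\omega}_i|\,\|\tilde{V}_i^{dq}\|$ and $|\tilde{\omega}_i|\,\|\tilde{I}_{ti}^{dq}\|$ respectively (a consequence of $\mathrm{i}$ acting as an isometry), and (ii) accounting for the conservatism introduced by replacing the operating-region radii $\bar{V},\bar{I}$ with $\bar{V}_{tot},\bar{I}_{tot}$ in the definition of $\beta_i$.
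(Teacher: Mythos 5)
Your proof is correct and follows essentially the same route as the paper's: expand $\|g_i(\tilde{x}_i)\|^2$ using the sparse structure of \eqref{Eq:DG_error_dynamics}, bound the voltage and current factors by $\bar{V}_{tot}$ and $\bar{I}_{tot}$, and then verify that the block-diagonal $\Psi_i$ quadratic form is exactly $\beta_i^2\tilde{\omega}_i^2-\|g_i(\tilde{x}_i)\|^2$ via $e_7^\top\tilde{x}_i=\tilde{\omega}_i$. The only cosmetic difference is that the paper bounds the \emph{full} states via the triangle inequality $\|V_i^{dq}\|\le\|\tilde{V}_i^{dq}\|+\|V_{Ei}^{dq}\|$ whereas you bound the error states directly and note $\bar{V}\le\bar{V}_{tot}$, $\bar{I}\le\bar{I}_{tot}$; both yield the same constant $\beta_i$.
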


\begin{proof}
The frequency coupling nonlinearity $g_i(\tilde{x}_i)$ has the structure given in \eqref{Eq:DG_error_dynamics}, which yields:
\begin{equation*}
    \|g_i(\tilde{x}_i)\|^2 = \tilde{\omega}_i^2 \Big(\|V_i^{dq}\|^2 + \|I_{ti}^{dq}\|^2\Big).
\end{equation*}

From the error variable definitions, we have $V_i^{dq} = \tilde{V}_i^{dq} + V_{Ei}^{dq}$ and $I_{ti}^{dq} = \tilde{I}_{ti}^{dq} + I_{tEi}^{dq}$. Applying the triangle inequality in the operating region where $\|\tilde{V}_i^{dq}\| \leq \bar{V}$ and $\|\tilde{I}_{ti}^{dq}\| \leq \bar{I}$, we obtain:
\begin{align*}
\|V_i^{dq}\| &\leq \|\tilde{V}_i^{dq}\| + \|V_{Ei}^{dq}\| \leq \bar{V} + \|V_{Ei}^{dq}\| \triangleq \bar{V}_{tot},\\
\|I_{ti}^{dq}\| &\leq \|\tilde{I}_{ti}^{dq}\| + \|I_{tEi}^{dq}\| \leq \bar{I} + \|I_{tEi}^{dq}\| \triangleq \bar{I}_{tot}.
\end{align*}

Weighting the voltage and current contributions by the physical system parameters $C_{ti}$ and $L_{ti}$ to account for their relative magnitudes in the energy storage, we establish:
\begin{equation*}
    \|g_i(\tilde{x}_i)\|^2 \leq \tilde{\omega}_i^2\Big(\bar{V}_{tot}^2 + \bar{I}_{tot}^2\Big) = \tilde{\omega}_i^2 \cdot \beta_i^2,
\end{equation*}
which is equation \eqref{Eq:Quadratic_constraint}. Rearranging \eqref{Eq:Quadratic_constraint} gives $\beta_i^2 \tilde{\omega}_i^2 - \|g_i(\tilde{x}_i)\|^2 \geq 0$. Since $\tilde{\omega}_i = e_7^\T \tilde{x}_i$, expanding the quadratic form yields:
\begin{align*}
    \bm{\tilde{x}_i \\ g_i(\tilde{x}_i)}^\T &\Psi_i \bm{\tilde{x}_i \\ g_i(\tilde{x}_i)} = \tilde{x}_i^\T (\beta_i^2 e_7 e_7^\T)\tilde{x}_i + g_i(\tilde{x}_i)^\T(-\I_7)g_i(\tilde{x}_i),\\
    &= \beta_i^2(e_7^\T \tilde{x}_i)^2 - \|g_i(\tilde{x}_i)\|^2 = \beta_i^2 \tilde{\omega}_i^2 - \|g_i(\tilde{x}_i)\|^2,
\end{align*}
establishing the equivalence between \eqref{Eq:Quadratic_constraint} and \eqref{Eq:quadratic_form_nonlinearity} and completing the proof.
\end{proof}

\begin{theorem}\label{Th:Local_DG} 
Under the quadratic constraint on the frequency coupling nonlinearity established in Lemma \ref{Lm:quadratic_constraint}, the DG error subsystem $\tilde{\Sigma}_i^{DG}:\tilde{\eta}_i \rightarrow \tilde{y}_i, i\in\N_N$ described by \eqref{Eq:DG_error_dynamic} with linearized output \eqref{Eq:Error_output} can be made IF-OFP$(\nu_i,\rho_i)$ (as assumed in \eqref{Eq:XEID_DG}) by designing the local controller gain matrix $K_{i0}$ from \eqref{eq:local_controller} that solves the LMI feasibility problem:
\begin{equation}\label{Eq:Th:Local_DG} 
\begin{aligned}
&\mbox{Find: }\ \tilde{K}_{i0},\ \tilde{P}_i,\ \tilde{R}_i,\ \tilde{\lambda}_i,\ \nu_i, \text{and} \  \tilde{\rho}_i,\\
&\mbox{s.t.: }\ \tilde{P}_i > 0,\ \tilde{R}_i > 0,\ \tilde{\lambda}_i > 0, \nu_i < 0, \text{and}\ \tilde{\rho}_i > 0,\\
&
    \bm{
        \Phi_i & \tilde{R}_i & \0 & \0 \\
        \tilde{R}_i & \tilde{R}_i & \tilde{P}_i & \0 \\
        \0 & \tilde{P}_i & -\mathcal{H}(\bar{A}_i\tilde{P}_i + B_i\tilde{K}_{i0}) + \tilde{\lambda}_i\I & -\I + \frac{1}{2}\tilde{P}_i \\
        \0 & \0 & -\I + \frac{1}{2}\tilde{P}_i & -\nu_i\I 
    } > \0,
\end{aligned}
\end{equation}
where $\Phi_i \triangleq \tilde{\rho}_i\I + \tilde{R}_i - \tilde{\lambda}_i\beta_i^2e_7e_7^T$, $\bar{A}_i \triangleq A_i + \bar{g}_i(x_{Ei})e_7^\T$, $\bar{g}_{i}(x_{Ei}) \triangleq \bm{-V_{Ei}^q & -V_{Ei}^d & -I_{tEi}^q & -I_{tEi}^d & 0 & 0 & 0}^\T$, $K_{i0} \triangleq \tilde{K}_{i0}\tilde{P}_i^{-1}$, $\rho_i \triangleq \tilde{\rho}_i^{-1}$, $e_7$ is the unit vector in $\R^{7\times 1}$ with a $1$ in the last element, and $\beta_i$ defined in Lemma \ref{Lm:quadratic_constraint}.
\end{theorem}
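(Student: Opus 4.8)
\emph{Proof plan.} The plan is to certify IF-OFP$(\nu_i,\rho_i)$ with a quadratic storage function, absorb the frequency-coupling nonlinearity by an S-procedure, and convexify the resulting synthesis conditions in $(P_i,K_{i0},\nu_i,\rho_i)$ and the multiplier. I would take $V_i(\tilde{x}_i)\triangleq\tilde{x}_i^\top P_i\tilde{x}_i$ with $P_i>0$. By Remark~\ref{Rm:X-DissipativityVersions} and \eqref{Eq:XEID_DG}, $\tilde{\Sigma}_i^{DG}$ is IF-OFP$(\nu_i,\rho_i)$ whenever $\dot{V}_i\le[\tilde{\eta}_i^\top\ \tilde{y}_i^\top]X_i[\tilde{\eta}_i^\top\ \tilde{y}_i^\top]^\top$ along \eqref{Eq:DG_error_dynamic}. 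Substituting $\dot{\tilde{x}}_i=\tilde{A}_i\tilde{x}_i+\hat{B}_i\tilde{\eta}_i+g_i(\tilde{x}_i)$ with $\tilde{A}_i=\bar{A}_i+\bar{B}_iK_{i0}$ and the linearized output \eqref{Eq:Error_output}, this dissipation inequality becomes a quadratic form in $\zeta_i\triangleq[\tilde{x}_i^\top\ \tilde{\eta}_i^\top\ g_i^\top]^\top$; its matrix is affine in $\nu_i$ but bilinear in $(P_i,K_{i0})$ via $P_i\bar{B}_iK_{i0}$ and in $\rho_i$ via the output weight.

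To remove the dependence on the unknown $g_i(\tilde{x}_i)$, I would use Lemma~\ref{Lm:quadratic_constraint}: on the operating region, $[\tilde{x}_i^\top\ g_i^\top]\Psi_i[\tilde{x}_i^\top\ g_i^\top]^\top\ge0$ with $\Psi_i$ from \eqref{Eq:Psi}. By the S-procedure it then suffices to find $\lambda_i>0$ with $[\tilde{\eta}_i^\top\ \tilde{y}_i^\top]X_i[\tilde{\eta}_i^\top\ \tilde{y}_i^\top]^\top-\dot{V}_i-\lambda_i[\tilde{x}_i^\top\ g_i^\top]\Psi_i[\tilde{x}_i^\top\ g_i^\top]^\top\ge0$ for all $\zeta_i$; this injects $-\lambda_i\beta_i^2e_7e_7^\top$ into the $\tilde{x}_i$-block and $+\lambda_i\I_7$ into the $g_i$-block, which are the sources of the $\tilde{\lambda}_i$-terms in $\Phi_i$ and in the $(3,3)$ block of \eqref{Eq:Th:Local_DG}.

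The decisive step is convexification. I would congruence-transform by $\diag(P_i^{-1},\I,\I)$ and set $\tilde{P}_i\triangleq P_i^{-1}$, $\tilde{K}_{i0}\triangleq K_{i0}\tilde{P}_i$; this turns the controller product into the linear term $\bar{B}_i\tilde{K}_{i0}$ and, after the routine Schur complements on the $g_i$- and output-channels, produces the $-\mathcal{H}(\bar{A}_i\tilde{P}_i+B_i\tilde{K}_{i0})$, $-\I+\tfrac12\tilde{P}_i$, and $-\nu_i\I$ entries. The remaining nonconvex pieces---the $\rho_i$-weighted output term and the $\lambda_i\beta_i^2$-weighted state term---are then cleared by setting $\tilde{\rho}_i\triangleq\rho_i^{-1}$ and introducing the auxiliary matrix $\tilde{R}_i>0$ together with further Schur complements, which yields $\Phi_i=\tilde{\rho}_i\I+\tilde{R}_i-\tilde{\lambda}_i\beta_i^2e_7e_7^\top$ and the $\tilde{R}_i$-coupled blocks; assembling everything gives \eqref{Eq:Th:Local_DG}. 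Conversely, a feasible tuple delivers $P_i=\tilde{P}_i^{-1}>0$, $K_{i0}=\tilde{K}_{i0}\tilde{P}_i^{-1}$, $\rho_i=\tilde{\rho}_i^{-1}>0$, $\nu_i<0$, and undoing the Schur complements and congruence restores $\dot{V}_i\le[\tilde{\eta}_i^\top\ \tilde{y}_i^\top]X_i[\tilde{\eta}_i^\top\ \tilde{y}_i^\top]^\top$ on the operating region, i.e., IF-OFP$(\nu_i,\rho_i)$ as assumed in \eqref{Eq:XEID_DG}.

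I expect the main obstacle to be this convexification: three bilinear couplings ($P_iK_{i0}$, the $\rho_i$ output term, the $\lambda_i\beta_i^2$ state term) must be linearized \emph{at once} by a single congruence and a carefully ordered sequence of Schur complements, and one must verify that $\tilde{R}_i$ enters only through the first two block rows/columns and the $(2,3)$ block so that the final matrix is exactly \eqref{Eq:Th:Local_DG}. A secondary care-point is that Lemma~\ref{Lm:quadratic_constraint} holds only on the bounded region $\{\|\tilde{V}_i^{dq}\|\le\bar{V},\ \|\tilde{I}_{ti}^{dq}\|\le\bar{I}\}$, so the certificate---and hence the IF-OFP property---is \emph{local}; this must be kept in mind when the subsystem property is later composed via Proposition~\ref{synthesizeM} at the network level.
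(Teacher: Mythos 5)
Your proposal follows essentially the same route as the paper's proof: a quadratic storage function $V_i(\tilde{x}_i)=\tilde{x}_i^\top P_i\tilde{x}_i$, an augmented quadratic form in $(\tilde{x}_i,\tilde{\eta}_i,g_i)$, the S-procedure with the multiplier $\lambda_i$ and the constraint matrix $\Psi_i$ from Lemma~\ref{Lm:quadratic_constraint}, and the same change of variables $\tilde{P}_i=P_i^{-1}$, $\tilde{K}_{i0}=K_{i0}\tilde{P}_i$, $\tilde{\rho}_i=\rho_i^{-1}$ with the slack $\tilde{R}_i$ and congruence/Schur steps to reach \eqref{Eq:Th:Local_DG}. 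Your added remark that the resulting IF-OFP certificate is only local (valid on the region where Lemma~\ref{Lm:quadratic_constraint} holds) is a correct and worthwhile caveat that the paper's own proof leaves implicit.
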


\begin{proof}
We establish that the DG error subsystem $\tilde{\Sigma}_i^{DG}, i\in\N_N$ from \eqref{Eq:DG_error_dynamic} can be made IF-OFP($\nu_i,\rho_i$) by synthesizing the local controller gain $K_{i0}$ through solving an LMI problem. Consider a quadratic storage function $\mathrm{V}_i(\tilde{x}_i) = \tilde{x}_i^\T P_i \tilde{x}_i$ with $P_i > 0$. The time derivative $\mathrm{\dot{V}}_i = 2\tilde{x}_i^\T P_i \dot{\tilde{x}}_i$ can be written in symmetric form as:
\begin{equation*}
    \mathrm{\dot{V}}_i = \bm{\dot{\tilde{x}}_i \\ \tilde{x}_i}^\T \bm{P_i & \0 \\ \0 & P_i}\bm{\dot{\tilde{x}}_i \\ \tilde{x}_i}.
\end{equation*}

To handle the nonlinearity $g_i(\tilde{x}_i)$ from \eqref{Eq:DG_error_dynamics} systematically, we introduce an augmented vector $\zeta_i \triangleq \bm{\tilde{\eta}_i^\T & \tilde{x}_i^\T & g_i^\T}^\T$ and define the structured matrix:
\begin{equation*}
    \Theta_i \triangleq \bm{\hat{B}_i & \bar{A}_i + B_iK_{i0} & \I \\ \0 & \I & \0},
\end{equation*}
where $\bar{A}_i = A_i + \bar{g}_i(x_{Ei})e_7^\T$ and $\hat{B}_i = \bm{B_i & F_i & E_i}$ from \eqref{Eq:closedloop_DG}. Note that from \eqref{Eq:DG_error_dynamic}, the closed-loop error dynamics satisfy:
\begin{equation*}
    \dot{\tilde{x}}_i = (\bar{A}_i + B_iK_{i0})\tilde{x}_i + \hat{B}_i \tilde{\eta}_i + g_i(\tilde{x}_i).
\end{equation*}

This structured representation yields $\bm{\dot{\tilde{x}}_i & \tilde{x}_i}^\T = \Theta_i \zeta_i$, and substituting into the storage function derivatives gives:
\begin{equation*}
    \mathrm{\dot{V}}_i = \zeta_i^\T \Theta_i^\T \bm{P_i & \0 \\ \0 & P_i}\Theta_i \zeta_i.
\end{equation*}

For the IF-OFP($\nu_i,\rho_i$) property with supply rate matrix  from \eqref{Eq:XEID_DG}, we require the dissipation inequality:
\begin{equation}\label{Eq:dissipation_inequality}
    \mathrm{\dot{V}}_i \leq \bm{\tilde{\eta}_i \\ \tilde{y}_i}^\T X_i \bm{\tilde{\eta}_i \\ \tilde{y}_i}.
\end{equation}

From \eqref{Eq:Error_output}, the linearized output satisfies $\tilde{y}_i = C_i^y \tilde{x}_i + D_i^y \tilde{\eta}_i$. We define the output mapping matrix:
\begin{equation*}
    \bar{\Theta}_i \triangleq \bm{\I & \0 & \0 \\ D_i^y & C_i^y & \0},
\end{equation*}
such that $\bm{\tilde{\eta}_i & \tilde{y}_i}^\T = \bar{\Theta}_i\zeta_i$. Combining these expressions with \eqref{Eq:dissipation_inequality}, the dissipativity condition becomes:
\begin{equation*}
    \zeta_i^\T \Big(\Theta_i^\T \Pi_i \Theta_i - \bar{\Theta}_i^\T X_i \bar{\Theta}_i\Big)\zeta_i \leq 0, 
\end{equation*}
where $\Pi_i \triangleq \diag(P_i,P_i)$. Defining $W_i \triangleq \Theta_i^\T \Pi_i \Theta_i - \bar{\Theta}_i^\T X_i \bar{\Theta}_i$, we require $W_i \leq 0$, but this condition must only hold for trajectories satisfying the frequency coupling constraint from Lemma \ref{Lm:quadratic_constraint}. 

From \eqref{Eq:quadratic_form_nonlinearity} in Lemma \ref{Lm:quadratic_constraint}, the nonlinearity satisfies the quadratic constraint:
\begin{equation*}
\begin{bmatrix}\tilde{x}_i \\
g_i(\tilde{x}_i)
\end{bmatrix}^\top
\Psi_i
\begin{bmatrix}
\tilde{x}_i \\
g_i(\tilde{x}_i)
\end{bmatrix} \geq 0,
\end{equation*}
where $\Psi_i$ is defined in \eqref{Eq:Psi}. Embedding this constraint into the augmented space, we define:
\begin{equation*}
    \bar{\Psi}_i \triangleq \bm{\0 & \0 & \0 \\ \0 & \beta_i^2e_7e_7^\T & \0 \\ \0 & \0 & -\I},
\end{equation*}
where $e_7$ is the unit vector in $\R^{7\times 1}$ with a $1$ in the last element. The constraint then becomes $\zeta_i^\T \bar{\Psi}_i \zeta_i \geq 0$. By the S-procedure, if there exists $\lambda_i > 0$ such that:
\begin{equation*}
    W_i - \lambda_i \bar{\Psi}_i \leq 0,
\end{equation*}
then $\zeta_i^\T W_i \zeta_i \leq 0$ holds for all $\zeta_i$ satisfying the nonlinearity constraint. Expanding this condition yields:
\begin{equation*}
    \Theta_i^\T \Pi_i \Theta_i - \bar{\Theta}_i^\T X_i \bar{\Theta}_i - \lambda_i \bar{\Psi}_i \leq 0.
\end{equation*}

To obtain an LMI formulation suitable for controller synthesis, we apply Proposition \ref{Prop:linear_X-EID} with variable transformations $\tilde{P}_i = P_i^{-1}$ and $\tilde{K}_{i0} = K_{i0} \tilde{P}_i$. Additionally, we define $\tilde{\lambda}_i = \lambda_i^{-1}$ and $\tilde{\rho}_i = \rho_i^{-1}$ to convert the inequality constraints into a convex LMI form. Introducing an auxiliary slack variable $\tilde{R}_i = R_i^{-1} > 0$ from the S-procedure transformation and applying congruence transformations with $\diag(\tilde{R}_i, \tilde{R}_i, \tilde{P}_i, \I)$ followed by Schur complements, we obtain the LMI \eqref{Eq:Th:Local_DG}, where  $K_{i0} \triangleq \tilde{K}_{i0}\tilde{P}_i^{-1}$ and $\rho_i \triangleq \tilde{\rho}_i^{-1}$. The notation $\Phi_i \triangleq \tilde{\rho}_i\I + \tilde{R}_i - \tilde{\lambda}_i\beta_i^2e_7e_7^T$ compactly represents the first block of the LMI, which encodes both the passivity constraint and the frequency coupling bound. This completes the proof. 
\end{proof}

\begin{lemma}\label{Lm:LineDissipativityStep}
For $\tilde{\Sigma}_l^{Line}, l\in\N_L$ \eqref{Eq:line_error_dynamic}, the passivity indices $\bar{\nu}_l$, and $\bar{\rho}_l$ from \eqref{Eq:XEID_Line} can be characterized by solving the LMI problem: 
\begin{equation}\label{Eq:Lm:LineDissipativityStep1}
\begin{aligned}
\mbox{Find: }\ &\bar{P}_l, \bar{\nu}_l, \text{and} \ \bar{\rho}_l,\\
\mbox{s.t.:}\ &\bar{P}_l > 0, \ 
\scriptsize
\begin{bmatrix}
-\mathcal{H}(\bar{P}_l \bar{A}_l) - \bar{\rho}_l \I_2 & -\bar{P}_l \bar{B}_l + \frac{1}{2}\I_2 \\
\star & -\bar{\nu}_l \I_2
\end{bmatrix}
\normalsize
\geq \0, 
\end{aligned}
\end{equation}
with maximum feasible passivity indices $\bar{\nu}_l^{\max}=0$ and $\bar{\rho}_l^{\max}=R_l\I_2$ achieved when $\bar{P}_l =  \frac{L_l}{2}\I_2$.
\end{lemma}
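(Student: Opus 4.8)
The plan is to treat Lemma~\ref{Lm:LineDissipativityStep} in two parts: (i) obtain the LMI \eqref{Eq:Lm:LineDissipativityStep1} as a direct specialization of Proposition~\ref{Prop:linear_X-EID} to the line error subsystem, and (ii) read off the maximal indices and the optimal $\bar{P}_l$ from the structure of that LMI.

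\emph{Step 1: deriving the LMI.} The line error dynamics \eqref{Eq:line_error_dynamic} are LTI, $\dot{\tilde{\bar{x}}}_l = \bar{A}_l \tilde{\bar{x}}_l + \hat{\bar{B}}_l \tilde{\bar{\eta}}_l$, and since all three channels of $\hat{\bar{B}}_l = \bm{\bar{B}_l & \bar{B}_l & \bar{E}_l}$ share the common matrix $\bar{B}_l = \bar{E}_l = (1/L_l)\I_2$, the input enters only through $\bar{B}_l$; together with the performance output $\bar{z}_l = \bar{H}_l\tilde{\bar{x}}_l = \tilde{\bar{x}}_l$, this is the LTI quadruple $(A,B,C,D) = (\bar{A}_l,\bar{B}_l,\I_2,\0)$. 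I would then invoke Proposition~\ref{Prop:linear_X-EID} with $X = \bar{X}_l$ from \eqref{Eq:XEID_Line}, so that $X^{11} = -\bar{\nu}_l\I_2$, $X^{12} = (X^{21})^\T = \tfrac{1}{2}\I_2$, and $X^{22} = -\bar{\rho}_l\I_2$. Substituting $C = \I_2$, $D = \0$ collapses the three blocks of the Proposition's LMI to $-\mathcal{H}(\bar{P}_l\bar{A}_l) - \bar{\rho}_l\I_2$, $-\bar{P}_l\bar{B}_l + \tfrac{1}{2}\I_2$, and $-\bar{\nu}_l\I_2$, respectively, which is precisely the matrix in \eqref{Eq:Lm:LineDissipativityStep1}; hence feasibility of \eqref{Eq:Lm:LineDissipativityStep1} is necessary and sufficient for $\tilde{\Sigma}_l^{Line}$ to be IF-OFP($\bar{\nu}_l,\bar{\rho}_l$).

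\emph{Step 2: maximal indices.} In any feasible point, the $(2,2)$-block $-\bar{\nu}_l\I_2$ is a principal submatrix of a positive semidefinite matrix, so $-\bar{\nu}_l\I_2 \geq \0$, i.e. $\bar{\nu}_l \leq 0$, giving $\bar{\nu}_l^{\max} = 0$. Setting $\bar{\nu}_l = 0$ makes the $(2,2)$-block vanish, and I would then use the elementary fact that $\bm{M & N \\ N^\T & \0} \geq \0$ forces $N = \0$ and $M \geq \0$: applied here it yields $\bar{P}_l\bar{B}_l = \tfrac{1}{2}\I_2$, i.e. $\bar{P}_l = \tfrac{L_l}{2}\I_2$, together with $-\mathcal{H}(\bar{P}_l\bar{A}_l) - \bar{\rho}_l\I_2 \geq \0$. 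Writing $\bar{A}_l = -(R_l/L_l)\I_2 + \omega_0 J$ with $J \triangleq \bm{0 & 1 \\ -1 & 0}$, the $\omega_0$-terms are skew-symmetric and cancel in $\mathcal{H}(\cdot)$, so $\mathcal{H}(\bar{P}_l\bar{A}_l) = -R_l\I_2$ and the remaining inequality reads $(R_l - \bar{\rho}_l)\I_2 \geq \0$, i.e. $\bar{\rho}_l \leq R_l$; thus $\bar{\rho}_l^{\max} = R_l$. Finally, substituting $\bar{P}_l = \tfrac{L_l}{2}\I_2$, $\bar{\nu}_l = 0$, $\bar{\rho}_l = R_l$ back into \eqref{Eq:Lm:LineDissipativityStep1} makes every block vanish, confirming the LMI holds (with equality) and that both maxima are attained simultaneously at $\bar{P}_l = \tfrac{L_l}{2}\I_2$.

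I do not expect a genuine obstacle here, as this is a verification-type result; the only point requiring care is Step~2: the sharp conclusions $\bar{\rho}_l^{\max} = R_l$ and $\bar{P}_l = \tfrac{L_l}{2}\I_2$ do not follow from feasibility alone, but only once $\bar{\nu}_l$ is pushed to its extremal value $0$, which zeroes the border block of the LMI and pins down $\bar{P}_l$ uniquely via the block-matrix argument above. I would also note in passing that $\bar{\nu}_l = 0$ is only marginal for Assumption~\ref{As:PositiveDissipativity} (which requires $\bar{X}_l^{11} > 0$, i.e. $\bar{\nu}_l < 0$), so in the co-design one perturbs $\bar{\nu}_l$ slightly negative; the lemma merely identifies the boundary of the admissible $(\bar{\nu}_l,\bar{\rho}_l)$ region.
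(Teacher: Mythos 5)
Your proposal is correct and follows essentially the same route as the paper: specialize Proposition~\ref{Prop:linear_X-EID} to $(\bar{A}_l,\bar{B}_l,\I_2,\0)$ with $X=\bar{X}_l$ to get \eqref{Eq:Lm:LineDissipativityStep1}, then extract $\bar{\nu}_l^{\max}=0$, $\bar{P}_l=\tfrac{L_l}{2}\I_2$, and $\bar{\rho}_l^{\max}=R_l$ from the extremal case. Your handling of the boundary case via the fact that a positive semidefinite matrix with a vanishing diagonal block must have vanishing off-diagonal blocks is actually cleaner than the paper's Schur-complement step (which formally divides by $-\bar{\nu}_l=0$), and your remark that $\bar{\nu}_l=0$ only sits on the boundary of Assumption~\ref{As:PositiveDissipativity} is a valid observation.
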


\begin{proof}
For $\tilde{\Sigma}_l^{Line}, l\in\N_L$ described by \eqref{Eq:line_error_dynamic}, we establish its $\bar{X}_l$-dissipativity with the passivity indices in \eqref{Eq:XEID_Line}. 
We consider the output as $\tilde{\bar{y}}_l = \tilde{\bar{x}}_l$ (i.e., $\bar{C}_l = \I_2$ and $\bar{D}_l = \0$). Applying Prop. \ref{Prop:linear_X-EID} with the system matrices from \eqref{eq:line_matrices} and the dissipativity supply rate form in \eqref{Eq:XEID_Line}, we obtain:
\begin{equation*}
\scriptsize
\bm{
-\mathcal{H}(\bar{P}_l \bar{A}_l) + \bar{C}_l^\top \bar{X}_l^{22} \bar{C}_l & -\bar{P}_l \bar{B}_l + \bar{C}_l^\top \bar{X}_l^{21} + \bar{C}_l^\top \bar{X}_l^{22} \bar{D}_l \\
\star & \bar{X}_l^{11} + \mathcal{H}(\bar{X}_l^{12} \bar{D}_l) + \bar{D}_l^\top \bar{X}_l^{22} \bar{D}_l
}\normalsize \geq 0.
\end{equation*}

Substituting $\bar{C}_l = \I_2$, $\bar{D}_l = \0$, and the components of $\bar{X}_l$ from \eqref{Eq:XEID_Line}, we obtain:
\begin{equation*}
\bm{
-\mathcal{H}(\bar{P}_l \bar{A}_l) - \bar{\rho}_l \I_2 & -\bar{P}_l \bar{B}_l + \frac{1}{2}\I_2 \\
\star & -\bar{\nu}_l \I_2
} \geq \0,
\end{equation*}
which is the LMI condition in \eqref{Eq:Lm:LineDissipativityStep1}.

To determine the maximum feasible passivity indices, we analyze when \eqref{Eq:Lm:LineDissipativityStep1} is positive semidefinite. Using the Schur complement, \eqref{Eq:Lm:LineDissipativityStep1} is satisfied if and only if:
\begin{equation*}
\begin{aligned}
&1)\ \bar{\nu}_l \leq 0, \\
&2)\ -\mathcal{H}(\bar{P}_l \bar{A}_l) - \bar{\rho}_l \I_2 - \frac{(-\bar{P}_l \bar{B}_l + \frac{1}{2}\I_2)^2}{-\bar{\nu}_l} \geq 0.
\end{aligned}
\end{equation*}

To maximize the passivity indices, we set $\bar{\nu}_l = 0$ (the maximum value from condition 1). Then condition 2 becomes:
\begin{equation}\label{Eq:Con2_Inequality}
-\mathcal{H}(\bar{P}_l \bar{A}_l) - \bar{\rho}_l \I_2 \geq 0,
\end{equation}
which requires $-\bar{P}_l \bar{B}_l + \frac{1}{2}\I_2 = 0$, giving $\bar{P}_l = \frac{1}{2}\bar{B}_l^{-1}$.

Substituting $\bar{B}_l$ from \eqref{eq:line_matrices} yields:
\begin{equation}\label{Eq:bar{P}_l}
\bar{P}_l = \frac{L_l}{2}\I_2.
\end{equation}

Now substituting \eqref{Eq:bar{P}_l} and $\bar{A}_l$ from \eqref{eq:line_matrices} into \eqref{Eq:Con2_Inequality}:
\begin{align*}
R_l \I_2 - \bar{\rho}_l \I_2 &\geq 0.
\end{align*}

This gives $\bar{\rho}_l \leq R_l\I_2$, and consequently the maximum feasible value is $\bar{\rho}_l^{\max} = R_l\I_2$. Therefore, the maximum feasible passivity indices are $\bar{\nu}_l^{\max} = 0$ and $\bar{\rho}_l^{\max} = R_l\I_2$ when $\bar{P}_l = \frac{L_l}{2}\I_2$. This completes the proof. 
\end{proof}

\begin{lemma}\label{Lm:LoadDissipativityStep}
For $\tilde{\Sigma}_m^{Load}, m\in\N_M$ \eqref{Eq:load_error_dynamic}, the passivity indices $\check{\nu}_m$ and $\check{\rho}_m$ from \eqref{Eq:XEID_Load} can be characterized by solving the LMI problem: 
\begin{equation}\label{Eq:Lm:LoadDissipativityStep1}
\begin{aligned}
\mbox{Find: }\ &\check{P}_m, \check{\nu}_m, \text{and} \ \check{\rho}_m,\\
\mbox{s.t.:}\ &\check{P}_m > 0, \
\scriptsize
\begin{bmatrix}
-\mathcal{H}(\check{P}_m \check{A}_m) - \check{\rho}_m \I_2 & -\check{P}_m \check{B}_m + \frac{1}{2}\I_2 \\
\star & -\check{\nu}_m \I_2
\end{bmatrix}
\normalsize
\geq \0, 
\end{aligned}
\end{equation}
with maximum feasible values $\check{\nu}_m^{\max} = 0$ and $\check{\rho}_m^{\max} = Y_{Lm}\I_2$ achieved when $\check{P}_m = \frac{C_{tm}}{2}\I_2$.
\end{lemma}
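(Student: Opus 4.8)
The proof is the exact analogue of Lemma \ref{Lm:LineDissipativityStep}, since $\tilde{\Sigma}_m^{Load}$ in \eqref{Eq:load_error_dynamic} is an LTI system of the same structural form as $\tilde{\Sigma}_l^{Line}$. The plan is to first fix the output of the load error subsystem as $\tilde{\check{y}}_m = \tilde{\check{x}}_m$, i.e.\ $\check{C}_m = \I_2$ and $\check{D}_m = \0$, and then invoke Proposition \ref{Prop:linear_X-EID} with the state/input matrices $\check{A}_m,\check{B}_m$ from \eqref{eq:load_matrices} and the supply-rate blocks read off from \eqref{Eq:XEID_Load}, namely $\check{X}_m^{11} = -\check{\nu}_m\I_2$, $\check{X}_m^{12} = \check{X}_m^{21} = \frac{1}{2}\I_2$, $\check{X}_m^{22} = -\check{\rho}_m\I_2$. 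With $\check{C}_m = \I_2$ and $\check{D}_m = \0$, the LMI of Proposition \ref{Prop:linear_X-EID} collapses precisely to the block matrix in \eqref{Eq:Lm:LoadDissipativityStep1}, so the existence of $\check{P}_m > 0$ satisfying that LMI is equivalent to $\tilde{\Sigma}_m^{Load}$ being IF-OFP($\check{\nu}_m,\check{\rho}_m$).

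Next I would characterize feasibility and extremize the indices via a Schur complement on the $(2,2)$ block $-\check{\nu}_m\I_2$: the LMI holds iff (i) $\check{\nu}_m \leq 0$, and (ii) $-\mathcal{H}(\check{P}_m\check{A}_m) - \check{\rho}_m\I_2 - \frac{(-\check{P}_m\check{B}_m + \frac{1}{2}\I_2)^2}{-\check{\nu}_m} \geq 0$. To maximize, set $\check{\nu}_m = 0$, its largest admissible value; the $(2,2)$ block is then zero, so positive semidefiniteness forces the off-diagonal block to vanish, $-\check{P}_m\check{B}_m + \frac{1}{2}\I_2 = \0$. Using $\check{B}_m$ from \eqref{eq:load_matrices}, this pins $\check{P}_m$ to the scalar-times-identity matrix $\check{P}_m = \frac{C_{tm}}{2}\I_2$, which is positive definite as required.

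Finally I would substitute $\check{P}_m = \frac{C_{tm}}{2}\I_2$ into the residual condition $-\mathcal{H}(\check{P}_m\check{A}_m) - \check{\rho}_m\I_2 \geq 0$. Writing $\check{A}_m = -\frac{Y_{Lm}}{C_{tm}}\I_2 + \omega_0 J$ with $J \triangleq \bm{0 & 1 \\ -1 & 0}$ skew-symmetric, the rotational term disappears under symmetrization whenever $\check{P}_m$ is a multiple of $\I_2$, so $\mathcal{H}(\check{P}_m\check{A}_m) = -Y_{Lm}\I_2$; the condition reduces to $Y_{Lm}\I_2 - \check{\rho}_m\I_2 \geq 0$, i.e.\ $\check{\rho}_m \leq Y_{Lm}$, hence $\check{\rho}_m^{\max} = Y_{Lm}\I_2$, attained at $\check{P}_m = \frac{C_{tm}}{2}\I_2$. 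Since every step transcribes the line case, I do not anticipate a genuine obstacle; the only points needing care are keeping the sign of $\check{B}_m$ straight so that the forced choice of $\check{P}_m$ stays positive definite, and observing that the skew-symmetric $\omega_0 J$ component of $\check{A}_m$ contributes nothing to $\mathcal{H}(\check{P}_m\check{A}_m)$ — which is exactly what delivers the clean closed form $\check{\rho}_m^{\max} = Y_{Lm}\I_2$.
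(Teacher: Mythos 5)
Your proposal follows the paper's proof step for step: apply Proposition \ref{Prop:linear_X-EID} with $\check{C}_m=\I_2$, $\check{D}_m=\0$, take the Schur complement over the $-\check{\nu}_m\I_2$ block, set $\check{\nu}_m=0$ so that positive semidefiniteness forces the off-diagonal block to vanish, and use the skew-symmetry of the $\omega_0$ term to obtain $\mathcal{H}(\check{P}_m\check{A}_m)=-Y_{Lm}\I_2$. In terms of approach there is nothing to distinguish the two.

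However, the one point you yourself single out as ``needing care'' --- the sign of $\check{B}_m$ --- does not work out the way you assert, and you never actually verify it. From \eqref{eq:load_matrices}, $\check{B}_m=-\frac{1}{C_{tm}}\I_2$, so the forced condition $-\check{P}_m\check{B}_m+\frac{1}{2}\I_2=\0$ reads $\frac{1}{C_{tm}}\check{P}_m+\frac{1}{2}\I_2=\0$, i.e.\ $\check{P}_m=-\frac{C_{tm}}{2}\I_2$, which is negative definite and violates the constraint $\check{P}_m>0$. (The paper's own proof commits the same slip, writing $\frac{1}{2}\check{B}_m^{-1}=\frac{C_{tm}}{2}\I_2$ when in fact $\check{B}_m^{-1}=-C_{tm}\I_2$.) The claimed exact analogy with Lemma \ref{Lm:LineDissipativityStep} breaks precisely here: the line subsystem has $\bar{B}_l=+\frac{1}{L_l}\I_2$, so $\bar{P}_l=\frac{1}{2}\bar{B}_l^{-1}=\frac{L_l}{2}\I_2>0$ goes through, whereas the load input matrix carries a minus sign. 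To recover the stated maximizer $\check{P}_m=\frac{C_{tm}}{2}\I_2>0$ one must flip the sign convention at the load's passivity port (e.g., take the input as $-I_m$, or equivalently set $\check{C}_m=-\I_2$), which is the standard convention under which a physical load is passive. As written, both your argument and the paper's leave this unaddressed, so the asserted optimal $\check{P}_m$ is not actually feasible for the LMI \eqref{Eq:Lm:LoadDissipativityStep1} as stated.
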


\begin{proof}
For $\tilde{\Sigma}_m^{Load}, m\in\N_M$ described by \eqref{Eq:load_error_dynamic}, we establish its $\check{X}_m$-dissipativity with the passivity indices defined in \eqref{Eq:XEID_Load}. We consider the output as $\tilde{\check{y}}_m = \tilde{\check{x}}_m$ 
(i.e., $\check{C}_m = \I_2$ and $\check{D}_m = \0$). Applying Prop. \ref{Prop:linear_X-EID} with the system matrices from \eqref{eq:load_matrices} and the dissipativity supply rate form in \eqref{Eq:XEID_Load}, we obtain:
\begin{equation*}
\scriptsize
\begin{bmatrix}
-\mathcal{H}(\check{P}_m \check{A}_m) + \check{C}_m^\top \check{X}_m^{22} \check{C}_m & -\check{P}_m \check{B}_m + \check{C}_m^\top \check{X}_m^{21} + \check{C}_m^\top \check{X}_m^{22} \check{D}_m \\
\star & \check{X}_m^{11} + \mathcal{H}(\check{X}_m^{12} \check{D}_m) + \check{D}_m^\top \check{X}_m^{22} \check{D}_m
\end{bmatrix}
\normalsize
\geq 0.
\end{equation*}

Substituting $\check{C}_m = \I_2$, $\check{D}_m = \0$, and the components of $\check{X}_m$ from \eqref{Eq:XEID_Load}, we obtain:
\begin{equation*}
\scriptsize
\begin{bmatrix}
-\mathcal{H}(\check{P}_m \check{A}_m) - \check{\rho}_m \I_2 & -\check{P}_m \check{B}_m + \frac{1}{2}\I_2 \\
\star & -\check{\nu}_m \I_2
\end{bmatrix}
\normalsize
\geq \0,
\end{equation*}
which is the LMI condition \eqref{Eq:Lm:LoadDissipativityStep1}.

To determine the maximum feasible passivity indices, we analyze when \eqref{Eq:Lm:LoadDissipativityStep1} is positive semidefinite. Using the Schur complement, \eqref{Eq:Lm:LoadDissipativityStep1} is satisfied if and only if:
\begin{equation*}
\begin{aligned}
&1)\ \check{\nu}_m \leq 0,\\
&2)\  -\mathcal{H}(\check{P}_m \check{A}_m) - \check{\rho}_m \I_2 - \frac{(-\check{P}_m \check{B}_m + \frac{1}{2}\I_2)^2}{-\check{\nu}_m} \geq 0.
\end{aligned}
\end{equation*}

To maximize the passivity indices, we set $\check{\nu}_m = 0$ (the maximum value from condition 1). Then condition 2 becomes:
\begin{equation}\label{Eq:Con2_Load}
-\mathcal{H}(\check{P}_m \check{A}_m) - \check{\rho}_m \I_2 \geq 0,
\end{equation}
which requires $-\check{P}_m \check{B}_m + \frac{1}{2}\I_2 = 0$, giving $\check{P}_m = \frac{1}{2}\check{B}_m^{-1}$. Then, by substituting $\check{B}_m$ from \eqref{eq:load_matrices}, we get:
\begin{equation}\label{Eq:check{P}_m}
\check{P}_m = \frac{1}{2}\check{B}_m^{-1} = \frac{C_{tm}}{2}\I_2.
\end{equation}

Substituting \eqref{Eq:check{P}_m} and $\check{A}_m$ from \eqref{eq:load_matrices} into \eqref{Eq:Con2_Load} and computing, $\mathcal{H}(\check{P}_m \check{A}_m)$, we get:
\begin{equation*}
\mathcal{H}(\check{P}_m \check{A}_m) = -Y_{Lm} \I_2,
\end{equation*}
which yields:
\begin{equation*}
-\mathcal{H}(\check{P}_m \check{A}_m) - \check{\rho}_m \I_2 = Y_{Lm}\I_2 - \check{\rho}_m \I_2 \geq 0.
\end{equation*}

This gives $\check{\rho}_m \leq Y_{Lm}\I_2$. Therefore, the maximum feasible passivity indices are $\check{\nu}_m^{\max} = 0$ and $\check{\rho}_m^{\max} = Y_{Lm}\I_2$ when $\check{P}_m = \frac{C_{tm}}{2}\I_2$. This completes the proof.
\end{proof}

\subsection{Global Control and Topology Co-design}
The local controllers \eqref{eq:local_controller} regulate voltages of individual DGs while ensuring that closed-loop DG dynamics satisfy the required dissipativity properties established in Theorem \ref{Th:Local_DG}. Given these subsystem properties, we now synthesize the distributed controller gains and communication topology by designing the interconnection matrix block $\tilde{K}$ in \eqref{Eq:Error_MMmatrix} using Proposition \ref{synthesizeM}. 

Note that by designing $\tilde{K} = [\tilde{K}_{ij}]_{i,j \in \N_N}$ in the error interconnection matrix $\tilde{M}$ \eqref{Eq:Error_MMmatrix}, we can uniquely determine 
the consensus-based distributed global controller gains $\{K_{ij}^P, K_{ij}^Q, K_{ij}^\Omega : i, j \in \N_N\}$ from \eqref{eq:distributed_controller_combined} and \eqref{eq:controlconstraint}, along with the required communication topology $\mathcal{G}^c$. Specifically, when designing $\tilde{K}$, we enforce two structural constraints: 
(i) each off-diagonal block $\hat{K}_{ij}$ (for $j \neq i$) maintains the structure of the corresponding $\bar{K}_{ij}$ block with three inner zero blocks as shown in 
\eqref{eq:distributed_controller_combined}, and (ii) each diagonal block $\hat{K}_{ii}$ satisfies the balance condition \eqref{eq:controlconstraint}.

We design the closed-loop networked error dynamics of the AC MG (shown in Fig. \ref{Error_Networked_system_ACMG}) to be finite-gain $L_2$-stable with an $L_2$-gain $\gamma$ from disturbance $w_c(t)$ 
to performance output $z_c(t)$, where $\gamma$ is prespecified such that $\tilde{\gamma} \triangleq \gamma^2 < \bar{\gamma}$. This corresponds to making the system $\textbf{Y}$-dissipative with $\textbf{Y} \triangleq \scriptsize \bm{\gamma^2\I & 0 \\ 0 & -\I}$ 
(see Remark \ref{Rm:X-DissipativityVersions}), which prevents/bounds the amplification of disturbances affecting 
system performance. The following theorem formulates this distributed global controller and communication topology co-design problem.

\begin{theorem}\label{Th:CentralizedTopologyDesign}
The closed-loop networked error dynamics of the AC MG (see Fig. \ref{Error_Networked_system_ACMG}) can be made finite-gain $L_2$-stable with an $L_2$-gain $\gamma$ from disturbance $w_c(t)$ to performance output $z_c(t)$ (where $\Tilde{\gamma}\triangleq \gamma^2<\bar{\gamma}$ is prespecified), by synthesizing the interconnection matrix block $\tilde{K}$ in \eqref{Eq:Error_MMmatrix} via solving the LMI problem:
\begin{equation}
\label{Eq:Th:CentralizedTopologyDesign0}
\begin{aligned}
&\min_{\substack{Q,\{p_i: i\in\N_N\},\\
\{\bar{p}_l: l\in\N_L\},\\\{\bar{p}_m: m\in\N_M\},\tilde{\gamma}}} &&\sum_{i,j\in\N_N} c_{ij} \Vert Q_{ij} \Vert_1 + c_0 \tilde{\gamma}, \\
&
\mbox{ Sub. to:}  
&&p_i > 0,\  
\bar{p}_l > 0,\ \check{p}_m > 0, \\   
& &&0 < \tilde{\gamma} < \bar{\gamma},  
\mbox{ and the \eqref{globalcontrollertheorem}},
\end{aligned}
\end{equation}
where $K = (\textbf{X}_p^{11})^{-1} Q$, 
$\textbf{X}^{12} \triangleq 
\diag([-\frac{1}{2\nu_i}\I]_{i\in\N_N})$, 
$\textbf{X}^{21} \triangleq (\textbf{X}^{12})^\T$,
$\Bar{\textbf{X}}^{12} \triangleq 
\diag([-\frac{1}{2\Bar{\nu}_l}\I]_{l\in\N_L})$,
$\Bar{\textbf{X}}^{21} \triangleq (\Bar{\textbf{X}}^{12})^\T$, 
$\check{\textbf{X}}^{12} \triangleq 
\diag([-\frac{1}{2\check{\nu}_m}\I]_{m\in\N_M})$,
$\check{\textbf{X}}^{21} \triangleq (\check{\textbf{X}}^{12})^\T$, 
$\textbf{X}_p^{11} \triangleq 
\diag([-p_i\nu_i\I]_{i\in\N_N})$, 
$\textbf{X}_p^{22} \triangleq 
\diag([-p_i\rho_i\I]_{i\in\N_N})$, 
$\Bar{\textbf{X}}_{\bar{p}}^{11} 
\triangleq \diag([-\bar{p}_l\bar{\nu}_l\I]_{l\in\N_L})$, 
$\Bar{\textbf{X}}_{\bar{p}}^{22} 
\triangleq \diag([-\bar{p}_l\bar{\rho}_l\I]_{l\in\N_L})$,
$\check{\textbf{X}}_{\check{p}}^{11} 
\triangleq \diag([-\check{p}_m\check{\nu}_m\I]_{m\in\N_M})$, 
$\check{\textbf{X}}_{\check{p}}^{22} 
\triangleq \diag([-\check{p}_m\check{\rho}_m\I]_{m\in\N_M})$,
and $\tilde{\Gamma} \triangleq \tilde{\gamma}\I$. 
The structure of $Q\triangleq[Q_{ij}]_{i,j\in\N_N}$ mirrors $K\triangleq[K_{ij}]_{i,j\in\N_N}$ (with zeros in appropriate blocks). 
The coefficients $c_0>0$ and $c_{ij}>0,\forall i,j\in\N_N$ weight disturbance attenuation and communication costs.
\end{theorem}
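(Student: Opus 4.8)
The plan is to recognize the closed-loop AC MG error dynamics of Fig.~\ref{Error_Networked_system_ACMG} as an instance of the networked system $\Sigma$ of Section~\ref{SubSec:NetworkedSystemsPreliminaries}: the DG error subsystems $\tilde{\Sigma}_i^{DG}$ from \eqref{Eq:DG_error_dynamic} play the role of the $\Sigma_i$, the line error subsystems $\tilde{\Sigma}_l^{Line}$ from \eqref{Eq:line_error_dynamic} the role of the $\bar{\Sigma}_l$, and the load error subsystems $\tilde{\Sigma}_m^{Load}$ from \eqref{Eq:load_error_dynamic} the role of the $\check{\Sigma}_m$, all coupled through the static interconnection matrix $\tilde{M}$ of \eqref{Eq:Error_MMmatrix}, with exogenous input $w_c$ and performance output $z_c$. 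By Theorem~\ref{Th:Local_DG} each $\tilde{\Sigma}_i^{DG}$ is IF-OFP$(\nu_i,\rho_i)$ with supply-rate matrix \eqref{Eq:XEID_DG}, and by Lemmas~\ref{Lm:LineDissipativityStep} and~\ref{Lm:LoadDissipativityStep} the line and load error subsystems are IF-OFP with \eqref{Eq:XEID_Line} and \eqref{Eq:XEID_Load}. These supply rates are all Proposition~\ref{synthesizeM} requires; in particular the frequency-coupling nonlinearity $g_i(\tilde{x}_i)$ has already been absorbed at the DG-subsystem level through Lemma~\ref{Lm:quadratic_constraint} and the S-procedure, so no further nonlinearity handling is needed at the network level.

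Next I would verify the two standing hypotheses of Proposition~\ref{synthesizeM}. Requiring the network to be finite-gain $L_2$-stable with gain $\gamma$ is exactly $\textbf{Y}$-dissipativity with $\textbf{Y}=\scriptsize\bm{\gamma^2\I & \0 \\ \0 & -\I}$ (Remark~\ref{Rm:X-DissipativityVersions}), so $\textbf{Y}^{22}=-\I<0$ and Assumption~\ref{As:NegativeDissipativity} holds. Since $\nu_i,\bar{\nu}_l,\check{\nu}_m<0$ in \eqref{Eq:XEID_DG}--\eqref{Eq:XEID_Load}, each $X_i^{11}=-\nu_i\I>0$, $\bar{X}_l^{11}=-\bar{\nu}_l\I>0$, $\check{X}_m^{11}=-\check{\nu}_m\I>0$, so Assumption~\ref{As:PositiveDissipativity} holds as well; this also makes $\textbf{X}_p^{11}=\diag([-p_i\nu_i\I]_{i\in\N_N})$ positive definite, hence invertible, whenever $p_i>0$, which legitimizes the change of variables $K=(\textbf{X}_p^{11})^{-1}Q$.

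The key structural point is that, unlike in the generic statement of Proposition~\ref{synthesizeM}, here only the block $\tilde{K}=M_{uy}$ of $\tilde{M}$ is a design variable; every other block of \eqref{Eq:Error_MMmatrix} -- the physical-coupling blocks $\bar{C},C,\check{C},\bar{\check{C}}$, the disturbance-injection blocks $E_c,\bar{E}_c,\check{E}_c$, the performance-extraction blocks $H_c,\bar{H}_c,\check{H}_c$, and the zero blocks -- is fixed by the topology and by the performance/disturbance definitions. Accordingly, in the $L$-variables of Proposition~\ref{synthesizeM} I would keep $L_{uy}=Q$ free (with $\tilde{K}=K=(\textbf{X}_p^{11})^{-1}Q$ and $Q$ restricted to the block-sparsity pattern of the admissible $\tilde{K}$), set $M_{zy}=H_c$, $M_{z\bar{y}}=\bar{H}_c$, $M_{z\check{y}}=\check{H}_c$, $M_{zw}=\0$, and force each remaining $L$-block to the product of the relevant $\diag$ of $\textbf{X}^{11}$-blocks with the corresponding fixed block of $\tilde{M}$ (e.g.\ $L_{u\bar{y}}=\textbf{X}_p^{11}\bar{C}$, $L_{\bar{u}y}=\bar{\textbf{X}}_{\bar{p}}^{11}C$, $L_{\check{u}\bar{y}}=\check{\textbf{X}}_{\check{p}}^{11}\bar{\check{C}}$, similarly for the disturbance blocks, and $L_{u\check{y}}=L_{\bar{u}\bar{y}}=L_{\check{u}y}=L_{\check{u}\check{y}}=\0$). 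Substituting these assignments -- together with the explicit supply-rate entries, which give $\textbf{X}^{12}=\diag([(X_i^{11})^{-1}X_i^{12}]_{i\in\N_N})=\diag([-\frac{1}{2\nu_i}\I]_{i\in\N_N})$ and its bar/check analogues, $\textbf{X}_p^{22}=\diag([-p_i\rho_i\I]_{i\in\N_N})$ (and $\bar{\textbf{X}}_{\bar{p}}^{22},\check{\textbf{X}}_{\check{p}}^{22}$ likewise), and $\textbf{Y}^{11}=\tilde{\gamma}\I$, $\textbf{Y}^{12}=\textbf{Y}^{21}=\0$, $\textbf{Y}^{22}=-\I$ -- into \eqref{NSC4YEID} collapses that inequality block-by-block into the claimed LMI \eqref{globalcontrollertheorem}. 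Proposition~\ref{synthesizeM} then guarantees that feasibility of \eqref{globalcontrollertheorem} with $p_i,\bar{p}_l,\check{p}_m>0$ renders the networked error system $\textbf{Y}$-dissipative, i.e.\ finite-gain $L_2$-stable with gain $\sqrt{\tilde{\gamma}}$ from $w_c$ to $z_c$; the constraint $0<\tilde{\gamma}<\bar{\gamma}$ keeps the guaranteed gain below the prescribed $\gamma$, while the cost $\sum_{i,j\in\N_N}c_{ij}\Vert Q_{ij}\Vert_1+c_0\tilde{\gamma}$ trades the sparsity of $Q$ (hence of $\tilde{K}$, hence the edge set of $\mathcal{G}^c$, via the $\ell_1$ surrogate) against disturbance attenuation.

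Finally, I would recover the physical controller data: each nonzero off-diagonal block $\tilde{K}_{ij}$ with $j\neq i$ declares a communication link $j\to i$ in $\mathcal{G}^c$, and imposing structural constraint (i) (so that $\hat{K}_{ij}$ carries the block pattern of $\bar{K}_{ij}$) recovers $K_{ij}^P,K_{ij}^Q,K_{ij}^\Omega$ uniquely through \eqref{eq:distributed_controller_combined}--\eqref{eq:controlconstraint}, while the diagonal blocks $\hat{K}_{ii}$ are pinned down by the balance condition (ii). I expect the main obstacle to be the bookkeeping in the substitution step: tracking every $L$-block and every supply-rate sub-block through the $8\times 8$ block structure of \eqref{NSC4YEID} and verifying, entry by entry, that the reduced inequality coincides with \eqref{globalcontrollertheorem}; one must also check that the imposed block-sparsity pattern on $Q$ is compatible with the unique invertibility of the map $\tilde{K}\mapsto(\{K_{ij}^P,K_{ij}^Q,K_{ij}^\Omega\},\mathcal{G}^c)$. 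A secondary point worth stating explicitly is that, since the DG certificate of Theorem~\ref{Th:Local_DG} holds only on the operating region of Lemma~\ref{Lm:quadratic_constraint}, the resulting $L_2$-stability guarantee is regional rather than global.
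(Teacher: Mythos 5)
Your proposal is correct and follows essentially the same route as the paper's proof: instantiate the networked-system framework with the DG, line, and load error subsystems and the interconnection matrix $\tilde{M}$, verify Assumptions \ref{As:NegativeDissipativity} and \ref{As:PositiveDissipativity} via $\textbf{Y}^{22}=-\I<0$ and $\nu_i,\bar{\nu}_l,\check{\nu}_m<0$, and apply Proposition \ref{synthesizeM} to obtain \eqref{globalcontrollertheorem}, recovering the gains through \eqref{eq:distributed_controller_combined}--\eqref{eq:controlconstraint}. Your explicit assignment of the fixed $L$-blocks and your remark that the resulting $L_2$ guarantee is regional (inherited from the operating region of Lemma \ref{Lm:quadratic_constraint}) are welcome details the paper leaves implicit.
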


\begin{figure*}[!hb]
\vspace{-5mm}
\centering
\hrulefill
\begin{equation}\label{globalcontrollertheorem}
\left[\begin{smallmatrix}
\mathbf{X}^{11}_p & \0 & \0 & \0 & Q & \mathbf{X}^{11}_p \bar{C} & \0 & \mathbf{X}^{11}_p E_c \\
\0 & \bar{\mathbf{X}}^{11}_{\bar{p}} & \0 & \0 & \bar{\mathbf{X}}^{11}_{\bar{p}} C & \0 & \bar{\mathbf{X}}^{11}_{\bar{p}} \check{C} & \bar{\mathbf{X}}^{11}_{\bar{p}} \bar{E}_c \\
\0 & \0 & \check{\mathbf{X}}^{11}_{\check{p}} & \0 & \0 & \check{\mathbf{X}}^{11}_{\check{p}} \bar{\check{C}} & \0 & \check{\mathbf{X}}^{11}_{\check{p}} \check{E}_c \\
\0 & \0 & \0 & \I & H_c & \bar{H}_c & \check{H}_c & \0 \\
Q^\top & C^\top \bar{\mathbf{X}}^{11}_{\bar{p}} & \0 & H_c^\T & -Q^\top \mathbf{X}^{12} - \mathbf{X}^{21}Q - \mathbf{X}^{22}_p & -\mathbf{X}^{21}\mathbf{X}^{11}_p \bar{C} - C^\T \bar{\mathbf{X}}_{\bar{p}}^{11} \bar{\mathbf{X}}^{12} & \0 & -\mathbf{X}^{21}\mathbf{X}^{11}_p E_c \\
\bar{C}^\top \mathbf{X}^{11}_p & \0 & \bar{\check{C}}^\T \check{\mathbf{X}}_{\check{p}}^{11} & \bar{H}_c^\T & -\bar{C}^\top \mathbf{X}^{11}_p \mathbf{X}^{12} - \bar{\mathbf{X}}^{21}\bar{\mathbf{X}}_{\bar{p}}^{11} C & -\bar{\mathbf{X}}^{22}_{\bar{p}} & -\bar{\mathbf{X}}^{21}\bar{\mathbf{X}}^{11}_{\bar{p}} \check{C} - C^\T \bar{\mathbf{X}}^{11}_{\bar{p}} \check{\mathbf{X}}^{12} & -\bar{\mathbf{X}}^{21}\bar{\mathbf{X}}_{\bar{p}}^{11} \bar{E}_c \\
\0 & \check{C}^\top \bar{\mathbf{X}}^{11}_{\bar{p}} & \0 & \check{H}_c^\T & \0 & -\check{C}^\top \bar{\mathbf{X}}^{11}_{\bar{p}} \bar{\mathbf{X}}^{12} - \check{\mathbf{X}}^{21}\check{\mathbf{X}}^{11}_{\check{p}} \bar{\check{C}} & -\check{\mathbf{X}}^{22}_{\check{p}} & -\check{\mathbf{X}}^{21}\check{\mathbf{X}}^{11}_{\check{p}} \check{E}_c  \\
E_c^\top \mathbf{X}^{11}_p & \bar{E}_c^\top \bar{\mathbf{X}}^{11}_{\bar{p}} & \check{E}_c^\top \check{\mathbf{X}}^{11}_{\check{p}} & \0 & -E_c^\top \mathbf{X}^{11}_p \mathbf{X}^{12} & - \bar{E}_c^\top \bar{\mathbf{X}}^{11}_{\bar{p}} \bar{\mathbf{X}}^{12} & - \check{E}_c^\top \check{\mathbf{X}}^{11}_{\check{p}} \check{\mathbf{X}}^{12} & \tilde{\Gamma}
\end{smallmatrix}\right] > \0
\end{equation}
\end{figure*}

\begin{proof}
The proof follows directly from applying Proposition \ref{synthesizeM} to the networked error system with interconnection matrix $\tilde{M}$ defined in \eqref{Eq:Error_MMmatrix}. The DG subsystems are characterized as $X_i$-dissipative with passivity indices from Theorem \ref{Th:Local_DG}, the line subsystems as $\bar{X}_l$-dissipative with passivity indices from 
Lemma \ref{Lm:LineDissipativityStep}, and the load subsystems as $\check{X}_m$-dissipative with passivity indices from 
Lemma \ref{Lm:LoadDissipativityStep}. The desired $L_2$-stability property corresponds to $\textbf{Y}$-dissipativity with 
$\textbf{Y} \triangleq \scriptsize \bm{\gamma^2\I & 0 \\ 0 & -\I}$, which satisfies Assumption \ref{As:NegativeDissipativity} since $\textbf{Y}^{22} = -\I < 0$. All subsystems satisfy Assumption \ref{As:PositiveDissipativity} since their $X^{11}$ blocks are positive definite (as $\nu_i < 0$, $\bar{\nu}_l < 0$, 
$\check{\nu}_m < 0$ from the passivity conditions). Applying Proposition \ref{synthesizeM} with these dissipativity properties yields the LMI \eqref{globalcontrollertheorem}, where the decision variables are the multipliers $\{p_i, \bar{p}_l, \check{p}_m\}$ and the interconnection structure encoded in $Q$. The objective function trades off communication sparsity (via $\ell_1$-norm) and disturbance rejection performance (via $\tilde{\gamma}$). Upon solving \eqref{Eq:Th:CentralizedTopologyDesign0}, the distributed controller gains are recovered through the relationships \eqref{eq:distributed_controller_combined} and \eqref{eq:controlconstraint}. This completes the proof. 
\end{proof}

\subsection{Local Controller Synthesis}
To ensure the necessary LMI conditions from Theorem \ref{Th:Local_DG}, Lemma \ref{Lm:LineDissipativityStep}, and Lemma \ref{Lm:LoadDissipativityStep} on subsystem passivity indices hold and to guarantee numerical feasibility of the global co-design problem \eqref{Eq:Th:CentralizedTopologyDesign0}, we must first synthesize the local controllers for DG subsystems and verify that line and load subsystems achieve their required passivity properties. This local synthesis determines the 
achievable passivity indices $\{\nu_i, \rho_i : i \in \mathbb{N}_N\}$, $\{\bar{\nu}_l, \bar{\rho}_l : l \in \mathbb{N}_L\}$, and $\{\check{\nu}_m, \check{\rho}_m : m \in \mathbb{N}_M\}$ that will be used in the global co-design.

For the line and load subsystems,  Lemma \ref{Lm:LineDissipativityStep} and Lemma \ref{Lm:LoadDissipativityStep} provide analytical expressions for the maximum achievable passivity indices based on physical parameters. Specifically, we have $\bar{\nu}_l^{\max} = 0$ and 
$\bar{\rho}_l^{\max} = R_l\I_2$ for line subsystem $l\in \mathbb{N}_L$, and $\check{\nu}_m^{\max} = 0$ and $\check{\rho}_m^{\max} = Y_{Lm}\I_2$ for load subsystem $m \in \mathbb{N}_M$. These maximum values are achieved with 
$\bar{P}_l = \frac{L_l}{2}\I_2$ and $\check{P}_m = \frac{C_{tm}}{2}\I_2$, respectively, and can be directly used in the global co-design without additional optimization.

For DG subsystems, however, the achievable passivity indices depend on the local controller gains, $\{K_{i0} : i \in \mathbb{N}_N\}$ which must be synthesized to satisfy Theorem \ref{Th:Local_DG} while ensuring compatibility with the global design parameters from \eqref{Eq:Th:CentralizedTopologyDesign0}. The following 
theorem formalizes this local controller synthesis problem.

\begin{theorem}\label{Th:LocalControllerDesign}
Given the DG parameters \eqref{Eq:DG_matrices}, line parameters \eqref{eq:line_matrices}, load parameters \eqref{eq:load_matrices}, and frequency coupling bounds $\{\beta_i: i\in\N_N\}$ from Lemma \ref{Lm:quadratic_constraint}, the local controller gains $\{K_{i0}, i\in\N_N\}$ from \eqref{eq:local_controller} can be synthesized along with the DG passivity indices $\{\nu_i,\rho_i:i\in\mathbb{N}_N\}$, line passivity indices $\{\bar{\nu}_l,\bar{\rho}_l:l\in\mathbb{N}_L\}$, and load passivity indices $\{\check{\nu}_m,\check{\rho}_m:m\in\mathbb{N}_M\}$ to ensure feasibility of the global co-design problem \eqref{Eq:Th:CentralizedTopologyDesign0} by solving:
\begin{equation}
\begin{aligned}
&\mbox{Find: }\ \{(\tilde{K}_{i0}, \tilde{P}_i, \tilde{R}_i, \tilde{\lambda}_i, \nu_i, \tilde{\rho}_i):i\in\mathbb{N}_N\}, \\ 
&\{(\bar{P}_l, \bar{\nu}_l,\bar{\rho}_l):l\in\mathbb{N}_L\},\  \text{and} \ \{(\check{P}_m, \check{\nu}_m,\check{\rho}_m):m\in\mathbb{N}_M\} \\
&\mbox{s.t.: }\ \tilde{P}_i > 0, \tilde{R}_i > 0, \tilde{\lambda}_i > 0, \nu_i < 0, \tilde{\rho}_i > 0,\ \forall i\in\N_N ,\\
&\bar{P}_l > 0,\ \forall l\in\N_L, \quad \check{P}_m > 0,\ \forall m\in\N_M, \\
&\ 
    \bm{
        \Phi_i & \tilde{R}_i & \0 & \0 \\
        \tilde{R}_i & \tilde{R}_i & \tilde{P}_i & \0 \\
        \0 & \tilde{P}_i & -\mathcal{H}(A_i\tilde{P}_i + B_i\tilde{K}_{i0}) + \tilde{\lambda}_i\I & -\I + \frac{1}{2}\tilde{P}_i \\
        \0 & \0 & -\I + \frac{1}{2}\tilde{P}_i & -\nu_i\I 
    } > \0,\\
&\
\scriptsize
\begin{bmatrix}
-\mathcal{H}(\bar{P}_l \bar{A}_l) - \bar{\rho}_l \I_2 & -\bar{P}_l \bar{B}_l + \frac{1}{2}\I_2 \\
\star & -\bar{\nu}_l \I_2
\end{bmatrix}
\normalsize
\geq \0,\ \forall l\in\mathbb{N}_L, \label{GetPassivityIndicesDGs}\\
&\ 
\scriptsize
\begin{bmatrix}
-\mathcal{H}(\check{P}_m \check{A}_m) - \check{\rho}_m \I_2 & -\check{P}_m \check{B}_m + \frac{1}{2}\I_2 \\
\star & -\check{\nu}_m \I_2
\end{bmatrix}
\normalsize
\geq\0, \ \forall m\in\mathbb{N}_M,
\end{aligned}
\end{equation}
where $\Phi_i \triangleq \tilde{\rho}_i\I + \tilde{R}_i - \tilde{\lambda}_i\beta_i^2e_7e_7^T$, $K_{i0} \triangleq \tilde{K}_{i0}P_i^{-1}$, and $\rho_i\triangleq\frac{1}{\tilde{\rho}_i}$.
\end{theorem}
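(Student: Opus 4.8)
The plan is to read Theorem \ref{Th:LocalControllerDesign} as the \emph{assembly} of three subsystem-level LMI characterizations that have already been established, and then to verify that a feasible point of the assembled problem supplies precisely the data consumed by the global co-design stage. The first family of constraints (the $4\times4$ block LMI indexed by $i\in\N_N$) is, block for block, the LMI \eqref{Eq:Th:Local_DG} of Theorem \ref{Th:Local_DG} with $\bar{A}_i = A_i + \bar{g}_i(x_{Ei})e_7^\T$: it was derived there from a quadratic storage function $\mathrm{V}_i(\tilde{x}_i)=\tilde{x}_i^\T P_i\tilde{x}_i$, the dissipation inequality \eqref{Eq:dissipation_inequality} for the supply rate \eqref{Eq:XEID_DG}, the S-procedure applied with the quadratic bound \eqref{Eq:quadratic_form_nonlinearity} on $g_i$ from Lemma \ref{Lm:quadratic_constraint}, a congruence transformation by $\diag(\tilde{R}_i,\tilde{R}_i,\tilde{P}_i,\I)$, the substitutions $\tilde{P}_i=P_i^{-1}$, $\tilde{K}_{i0}=K_{i0}\tilde{P}_i$, $\tilde{\lambda}_i=\lambda_i^{-1}$, $\tilde{\rho}_i=\rho_i^{-1}$, and Schur complements. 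The second and third families are, verbatim, the LMIs \eqref{Eq:Lm:LineDissipativityStep1} and \eqref{Eq:Lm:LoadDissipativityStep1} of Lemma \ref{Lm:LineDissipativityStep} and Lemma \ref{Lm:LoadDissipativityStep}, each a direct application of Proposition \ref{Prop:linear_X-EID} to \eqref{Eq:line_error_dynamic} and \eqref{Eq:load_error_dynamic} with the identity output map and the supply rates \eqref{Eq:XEID_Line}, \eqref{Eq:XEID_Load}. So each constraint, in isolation, already certifies IF-OFP of the corresponding subsystem.

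The key step is that these three families share no decision variables: the block-$i$ DG constraint involves only $(\tilde{K}_{i0},\tilde{P}_i,\tilde{R}_i,\tilde{\lambda}_i,\nu_i,\tilde{\rho}_i)$, the block-$l$ line constraint only $(\bar{P}_l,\bar{\nu}_l,\bar{\rho}_l)$, and the block-$m$ load constraint only $(\check{P}_m,\check{\nu}_m,\check{\rho}_m)$, with no coupling across indices or subsystem types. Hence the joint feasibility problem decomposes completely and is feasible if and only if each of the $N+L+M$ index-wise problems is. Feasibility of the line and load problems is immediate from the closed-form certificates $\bar{P}_l=\frac{L_l}{2}\I_2$ and $\check{P}_m=\frac{C_{tm}}{2}\I_2$ exhibited in Lemmas \ref{Lm:LineDissipativityStep} and \ref{Lm:LoadDissipativityStep}, while feasibility of the DG problem is exactly the assertion of Theorem \ref{Th:Local_DG}. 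A feasible tuple then provides, for every subsystem, a valid dissipativity certificate and the recovered data $K_{i0}=\tilde{K}_{i0}\tilde{P}_i^{-1}$, $\rho_i=1/\tilde{\rho}_i$: each $\tilde{\Sigma}_i^{DG}$ is IF-OFP$(\nu_i,\rho_i)$ with $\nu_i<0,\ \rho_i>0$, each $\tilde{\Sigma}_l^{Line}$ is IF-OFP$(\bar{\nu}_l,\bar{\rho}_l)$, and each $\tilde{\Sigma}_m^{Load}$ is IF-OFP$(\check{\nu}_m,\check{\rho}_m)$. Imposing $\nu_i<0$ explicitly, and choosing $\bar{\nu}_l,\check{\nu}_m$ strictly negative (feasible, since by Lemmas \ref{Lm:LineDissipativityStep}--\ref{Lm:LoadDissipativityStep} their maxima equal $0$ and the LMIs remain solvable for slightly smaller values), makes $X_i^{11},\bar{X}_l^{11},\check{X}_m^{11}>0$, so Assumption \ref{As:PositiveDissipativity} holds; combined with $\mathbf{Y}^{22}=-\I<0$ for the targeted $L_2$ specification (Assumption \ref{As:NegativeDissipativity}), every hypothesis of Theorem \ref{Th:CentralizedTopologyDesign} and Proposition \ref{synthesizeM} is met, so the index data entering \eqref{globalcontrollertheorem} are admissible.

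The main obstacle is not a deep one, since the substantive content lives entirely in Theorem \ref{Th:Local_DG} and Lemmas \ref{Lm:LineDissipativityStep}--\ref{Lm:LoadDissipativityStep}; the work is to check the block-by-block correspondence and the variable decoupling above. The one point requiring care is the precise meaning of ``ensure feasibility of the global co-design problem'': solving the local LMIs guarantees that \eqref{globalcontrollertheorem} is well-posed with admissible dissipativity data, but it does not by itself certify that a communication matrix $\tilde{K}$ with the prescribed sparsity/structure together with a bound $\tilde{\gamma}<\bar{\gamma}$ exists for the given physical topology. Folding \eqref{globalcontrollertheorem} into the local synthesis with the passivity indices kept as shared variables would destroy convexity, because terms such as $p_i\nu_i$ and $-\frac{1}{2\nu_i}$ occur there. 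The resolution for which the present LMI is designed is to retain the passivity indices as free variables at this stage so that one may additionally maximize the output-passivity margins (increase $\rho_i,\bar{\rho}_l,\check{\rho}_m$, push $\nu_i$ toward $0^-$) and thereby enlarge the admissible set for the subsequent global stage; I would present this as the intended use rather than attempt a worst-case feasibility guarantee.
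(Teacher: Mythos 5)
Your proposal is correct and follows essentially the same route as the paper's own proof: assemble the subsystem-level LMIs from Theorem \ref{Th:Local_DG} and Lemmas \ref{Lm:LineDissipativityStep}--\ref{Lm:LoadDissipativityStep}, exploit the complete decoupling of decision variables across the $N+L+M$ index-wise problems, and verify that the recovered indices satisfy Assumptions \ref{As:NegativeDissipativity}--\ref{As:PositiveDissipativity} so the data entering \eqref{globalcontrollertheorem} are admissible. Your closing caveats --- that $\bar{\nu}_l,\check{\nu}_m$ must be taken strictly below their maxima of $0$ for $\bar{X}_l^{11},\check{X}_m^{11}>0$, and that ``ensure feasibility'' of \eqref{Eq:Th:CentralizedTopologyDesign0} should be read as supplying admissible dissipativity data rather than certifying existence of a structured $\tilde{K}$ with $\tilde{\gamma}<\bar{\gamma}$ --- are points the paper's proof glosses over, and they sharpen rather than alter the argument.
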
  

\begin{proof}
The local synthesis ensures three requirements: (1) each DG error subsystem $\tilde{\Sigma}_i^{DG}$ satisfies IF-OFP($\nu_i,\rho_i$) via Theorem \ref{Th:Local_DG}, (2) each line error subsystem $\tilde{\Sigma}_l^{line}$ satisfies IF-OFP($\bar{\nu}_l,\bar{\rho}_l$) via Lemma \ref{Lm:LineDissipativityStep}, and (3) each load error subsystem $\tilde{\Sigma}_m^{Load}$ satisfies IF-OFP($\check{\nu}_m,\check{\rho}_m$) via Lemma \ref{Lm:LoadDissipativityStep}. The LMI constraints directly follow from these theorems and lemmas. For DG subsystems, the constraints ensure local stability through storage function $\mathrm{V}_i(\tilde{x}_i)=\tilde{x}_i^\T P_i \tilde{x}_i$ (where $P_i = \tilde{P}_i^{-1}$) while satisfying the frequency coupling constraint via the S-procedure with multiplier $\lambda_i = \tilde{\lambda}_i^{-1}$. For line and load subsystems, the constraints verify achievable passivity regions. Compatibility with the global co-design \eqref{Eq:Th:CentralizedTopologyDesign0} is ensured through passivity index requirements: $\nu_i < 0$ and $\rho_i > 0$ ensure $X_p^{11} = \diag([-p_i\nu_i\I]_{i\in\N_N}) > 0$ and $X_p^{22} = \diag([-p_i\rho_i\I]_{i\in\N_N}) > 0$ for any $p_i > 0$, and similarly for line and load passivity indices. Upon solving this unified LMI problem, the actual local controller gains are recovered as $K_{i0} = \tilde{K}_{i0}\tilde{P}_i^{-1}$ and output passivity indices as $\rho_i = \tilde{\rho}_i^{-1}$, which are then used in the global co-design \eqref{Eq:Th:CentralizedTopologyDesign0}.
\end{proof}

\section{Simulation Results}\label{sec:Simulation}
\vspace{-1mm}
To demonstrate the effectiveness of the dissipativity-based distributed control and communication topology co-design framework, we simulated an islanded AC MG in MATLAB. The AC MG comprises $N=4$ DGs and $L=8$ lines, with each DG supporting a local load. The nominal voltage amplitude was set to $V_{\rm ref} = 311.13$ V (corresponding to $220$ V RMS), and the nominal grid frequency was set to $\omega_0 = 2\pi \times 60$ rad/s. Each DG is modeled with an LC filter and a ZIP load, with parameters generated using a random seed. The physical topology was generated using a minimum spanning tree (MST)-based algorithm with a connectivity threshold of $0.4$, resulting in $8$ physical lines connecting the $4$ DGs and $2$ global loads, as shown in Fig. \ref{fig:topology}(a). In the co-design, the $\mathcal{L}_2$-gain bound was selected as $\bar{\gamma} = 1000$.

The local controllers achieve IF-OFP passivity indices of $\nu_i = -0.97$ and $\rho_i = 1$ for all DGs, confirming that the dissipativity conditions of Th. \ref{Th:LocalControllerDesign} are satisfied. We evaluated two co-design variants to examine the trade-off between communication and closed-loop performance: (i) a hard graph constraint that restricts the communication topology $\mathcal{G}^c$ to exactly match the physical DG-to-DG adjacency, resulting in $4$ bidirectional communication links, as shown in Fig. \ref{fig:topology}(b); and (ii) a soft graph constraint that penalizes communication links via a $\ell_1$ sparsity term in the LMI cost function (see Th. \ref{Th:CentralizedTopologyDesign}), allowing the co-design to determine a sparser topology, as shown in Fig. \ref{fig:topology}(c). The resulting soft-constrained topology contains only $2$ undirected links, representing a $50\%$ reduction in communication links relative to the hard 
graph case and a communication density of $33\%$ compared to the $67\%$ density of the physical topology.

Figure \ref{fig:results} summarizes the closed-loop performance of the proposed framework. As shown in Fig. \ref{fig:results}(a), all DG voltages converge to the reference value $V_{\rm ref} = 311$ V from a cold start (i.e., starting from zero initial conditions), confirming effective voltage regulation with zero steady-state error across all DGs. Figure \ref{fig:results}(b) shows that all DG frequencies track the nominal reference value of $60$ Hz, demonstrating successful frequency synchronization. As shown 
in Fig. \ref{fig:results}(c), all DGs converge to the same normalized active power level of $P_i/P_i^{\rm max} = 0.6$, achieving proportional active power sharing. Figure \ref{fig:results}(d) shows that the normalized reactive power levels also converge, confirming effective reactive power sharing. Notably, the performance under the soft graph constraint is identical to that under the hard graph constraint across all four metrics, demonstrating that the proposed co-design framework can reduce communication infrastructure cost without sacrificing closed-loop performance.

\begin{figure}[t]
    \centering
    \includegraphics[width=\columnwidth]{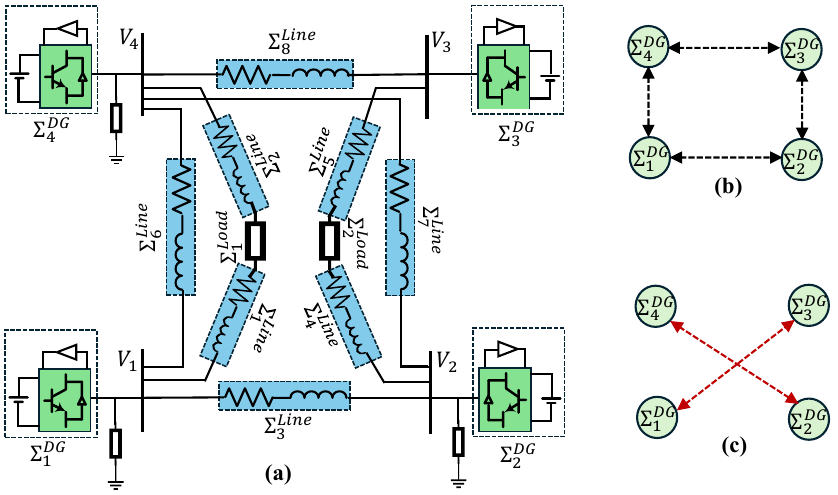}
     \vspace{-6mm}
   \caption{The AC MG topology with $N=4$ DGs, $L=8$ lines, and $M=2$ global loads: (a) physical, (b) hard graph communication matching physical adjacency, and (c) soft graph communication with reduced overhead.}
    \label{fig:topology}
\end{figure}

\begin{figure}[t]
    \centering
    \includegraphics[width=\columnwidth]{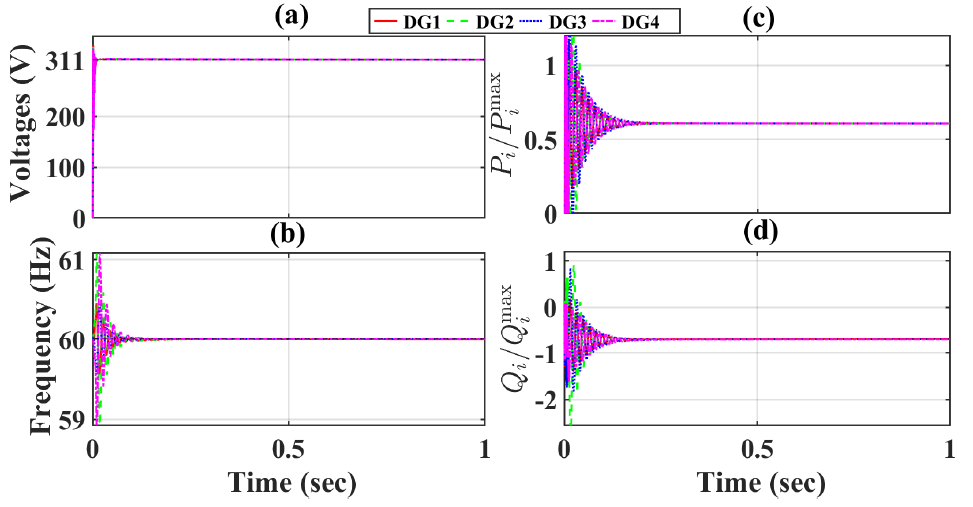}
    \vspace{-7mm}
    \caption{The closed-loop performance of the proposed dissipativity-based distributed controller in the AC MG with 4 DGs and 8 lines: (a) DG voltage 
    magnitudes, (b) DG frequencies, (c) normalized active power sharing, and (d) normalized reactive power sharing.}
    \vspace{-1mm}
    \label{fig:results}
\end{figure}

\section{Conclusion}\label{Conclusion}
This paper presented a dissipativity-based framework for co-designing distributed controllers and communication topologies in AC MGs to achieve robust guarantees for voltage regulation, frequency synchronization, and proportional power sharing. We first formulated the closed-loop AC MG as a standard networked system, while employing a hierarchical control architecture that combines local steady-state controllers for adjusting the operating point, local PI controllers for voltage regulation, and distributed consensus-type controllers for proportional power sharing. We next formulated the operating point design problem using equilibrium point analysis. Necessary and sufficient conditions for subsystem dissipativity analysis/design were then established while addressing complex subsystem dynamics. We then formulated the co-design problem as a convex optimization to simultaneously synthesize distributed controller gains and a sparse communication topology, while guaranteeing networked system dissipativity (robustness). Future work will focus on extending the developed co-design framework to account for communication delays and deploying moving target defense schemes for enhanced cybersecurity.

\bibliographystyle{IEEEtran}
\bibliography{References}

\end{document}